\algnewcommand{\Input}[1]{%
	\State \textbf{Input:}
	\Statex \hspace*{\algorithmicindent}\parbox[t]{.8\linewidth}{\raggedright #1}
}
\algnewcommand{\Output}[1]{%
	\State \textbf{Output:}
	\Statex \hspace*{\algorithmicindent}\parbox[t]{.8\linewidth}{\raggedright #1}
}
\newtheorem{defn}{Definition}
\newtheorem{lem}{Lemma}
\newtheorem{prop}{Proposition}
\newtheorem*{prop*}{Proposition}
\newtheorem{remark}{Remark}
\newcommand{\F}{\mathcal{F}}
\newcommand{\A}{\mathcal{A}}
\newfont{\bbb}{msbm10 scaled 500}
\newfont{\bb}{msbm10 scaled 1100}
\def\etal{\emph{et al.~}}
\title{Repair Strategies for Storage on Mobile Clouds}
\author{\IEEEauthorblockN{Gokhan Calis, Swetha Shivaramaiah, O. Ozan Koyluoglu, and Loukas Lazos}\\
\IEEEauthorblockA{
Department of Electrical and Computer Engineering\\
The University of Arizona\\
Email: \{gcalis, sshivaramaiah, ozan, llazos\}@email.arizona.edu}
}
\begin{document}

\maketitle


\begin{abstract}
	We study the data reliability problem for a community of devices forming a mobile cloud storage system. We consider the application of regenerating codes for file maintenance within a geographically-limited area. Such codes require lower bandwidth to regenerate lost data fragments compared to file replication or reconstruction. We investigate threshold-based repair strategies where data repair is initiated after a threshold number of data fragments have been lost due to node mobility. We show that at a low departure-to-repair rate regime, a {\em lazy repair} strategy in which repairs are initiated after several nodes have left the system outperforms {\em eager repair}  in which repairs are initiated after a single departure. This optimality is reversed when nodes are highly mobile. We further compare distributed and centralized repair strategies and derive the optimal repair threshold for minimizing the average repair cost per unit of time, as a function of underlying code parameters. In addition, we examine cooperative repair strategies and show performance improvements compared to non-cooperative codes. We investigate several models for the time needed for node repair including a simple fixed time model that allows for the computation of closed-form expressions and a more realistic model that takes into account the number of repaired nodes. We derive the conditions under which the former model approximates the latter. Finally, an extended model where additional failures are allowed during the repair process is investigated. Overall, our results establish the joint effect of code design and repair algorithms on the maintenance cost of distributed storage systems.
\end{abstract}

\begin{IEEEkeywords}
	Distributed storage, regenerating codes, mobile cloud, data reliability.
\end{IEEEkeywords}


\section{Introduction}

Local caching and content distribution from a community of mobile devices has been proposed as an alternative architecture to traditional centralized storage \cite{golrezaei2012device,paakkonen2013device, golrezaei2014scaling}. The so-called mobile cloud storage systems reduce the traffic load of the already over-burdened infrastructure network and improve content availability in the event of network outages. In a mobile cloud storage scenario, a file $\F$ is stored within a geographically-limited area  $\A$ by a community of mobile devices.  A user within $\A$ can download $\F$ from the community of mobile devices. This can be done via direct communication between the mobile devices without accessing the network infrastructure, or via a base station, but without accessing the backhaul network. Without loss of generality, we abstract our setup to the former model.   

File storage at mobile devices leads to frequent data loss due to mobility. When a mobile device storing $\F$ or any fragment of $\F$ exits $\A,$ the stored data is lost. To deal with such losses, redundancy is introduced in the form of data replication or coding \cite{bhagwan2004, dabek2004designing}. In replication storage, copies of $\F$ are stored at multiple devices within the community. More sophisticated coding schemes such as erasure coding achieve the same reliability at lower storage overhead \cite{hu2010cooperative, weatherspoon2002}. Despite the application of coding, a stored file $\F$ will eventually be lost when a certain number of mobile devices (storage nodes) depart from $\A$. To maintain $\F$ over long time periods, the mobile cloud system must be capable of recovering the lost data. A repair scenario is shown in Fig. \ref{fig:net_model}. Lost data is recovered by downloading fragments from the storage nodes that remain within $\A$. The amount of  data downloaded for repair is referred to as the \textit{repair bandwidth}. For mobile communities, the repair bandwidth can be significant. 

\begin{figure}[tb]
	\begin{center}
		\includegraphics[width=3in]{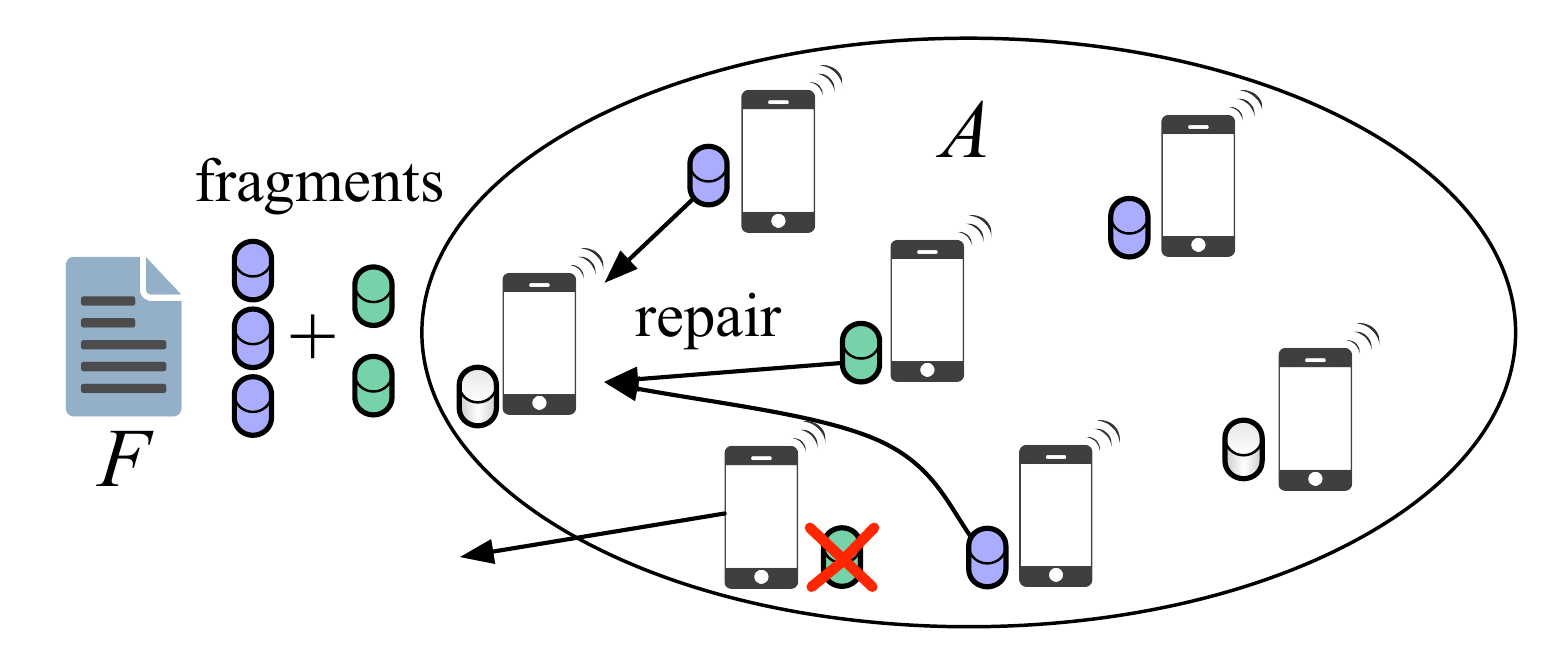}
	\end{center}
	\vspace{-0.15in}
	\caption{File maintenance in a mobile cloud storage system.}
	\vspace{-0.2in}
	\label{fig:net_model}
\end{figure}

The file maintenance problem for distributed storage systems has been primarily studied assuming that erasure codes are applied for redundancy \cite{giroire2010, hu2010cooperative}.  However, erasure codes are not repair bandwidth efficient. The repair bandwidth can be reduced by applying regenerating codes, which allow fragment recovery without file reconstruction (see \cite{dimakis2010k,  Rashmi:Optimal11,Tamo:Zigzag13,Cadambe:Asymptotic13} and references therein). Although regenerating codes lower the repair bandwidth (per single node repair), the design of an efficient repair strategy for a mobile cloud storage system involves cost optimizations with respect to many parameters, including the code redundancy factor, the device departure and fragment repair rates, the threshold for initiating repair operations, and the available communication bandwidth.  In this paper, {\em we study the problem of minimizing the file maintenance cost, as a function of the network dynamics, the code parameters, and the communication model for repairing lost data fragments}. Specifically, we make the following contributions. 
\vspace{-0.03in}
\begin{itemize}
	\item We focus on threshold-based file maintenance strategies, in which repairs are initiated when a threshold number of fragments is lost. We analyze two communication models, namely \emph{distributed repair} and \emph{centralized repair}. In distributed repair, the new storage nodes  independently download data from existing nodes to recover lost fragments. In centralized repair, a {\em leader} node first recovers  $\F$ via reconstruction, before regenerating and distributing the repaired fragments to new storage nodes. In both scenarios, we assume that repairs are performed in parallel, taking the same amount of time and there are no additional failures during fragment recovery. This simplified model allows us to derive closed-form expressions.
	
	\item We derive the \emph{optimal repair threshold} that minimizes the average repair cost per unit of time for each communication model. Our results show that no one strategy is optimal for all possible system configurations and mobility patterns. At the low mobility-to-repair rate regime, repairing at the regeneration threshold yields the optimal strategy. On the other hand, at the high mobility-to-repair rate regime, regenerating after a single fragment loss minimizes the average repair cost per unit of time.  
	
	\item We further investigate the application of cooperative repair codes. We show that the repair bandwidth is minimized at full cooperation, i.e., when all nodes to be repaired cooperate. We then investigate the centralized repair of multiple node failures, which suits our centralized repair model that is described earlier. The advantage of such a model is that a leader node does not need to download the file $\F$, which reduces the average repair cost per time.
	
	\item We revise the fixed-rate repair model originally assumed in distributed and centralized repair with a more realistic node-dependent model. In the latter, the repair time depends on the number of nodes that are repaired. We compare the resulting average repair cost with our earlier model and show that in the low mobility-to-repair rate regime, the simplified repair-all-at-one model faithfully approximates the node-dependent one. 
	
	\item We further consider a distributed repair model under a repair process which not only depends on the threshold in terms of recovery time but also may involve additional failures during recovery. We express the average repair cost through a system of equations and verify our analytical findings through simulations. We then verify our analytical findings through simulations. Lastly, we compare all of the discussed distributed repair models employing regenerating codes. 
	
\end{itemize}

\begin{figure*}[tb]
	\begin{center}
		\renewcommand{\arraystretch}{0.1}
		\begin{tabular}{cc}
			\includegraphics[width=2.8in,height=1.8in]{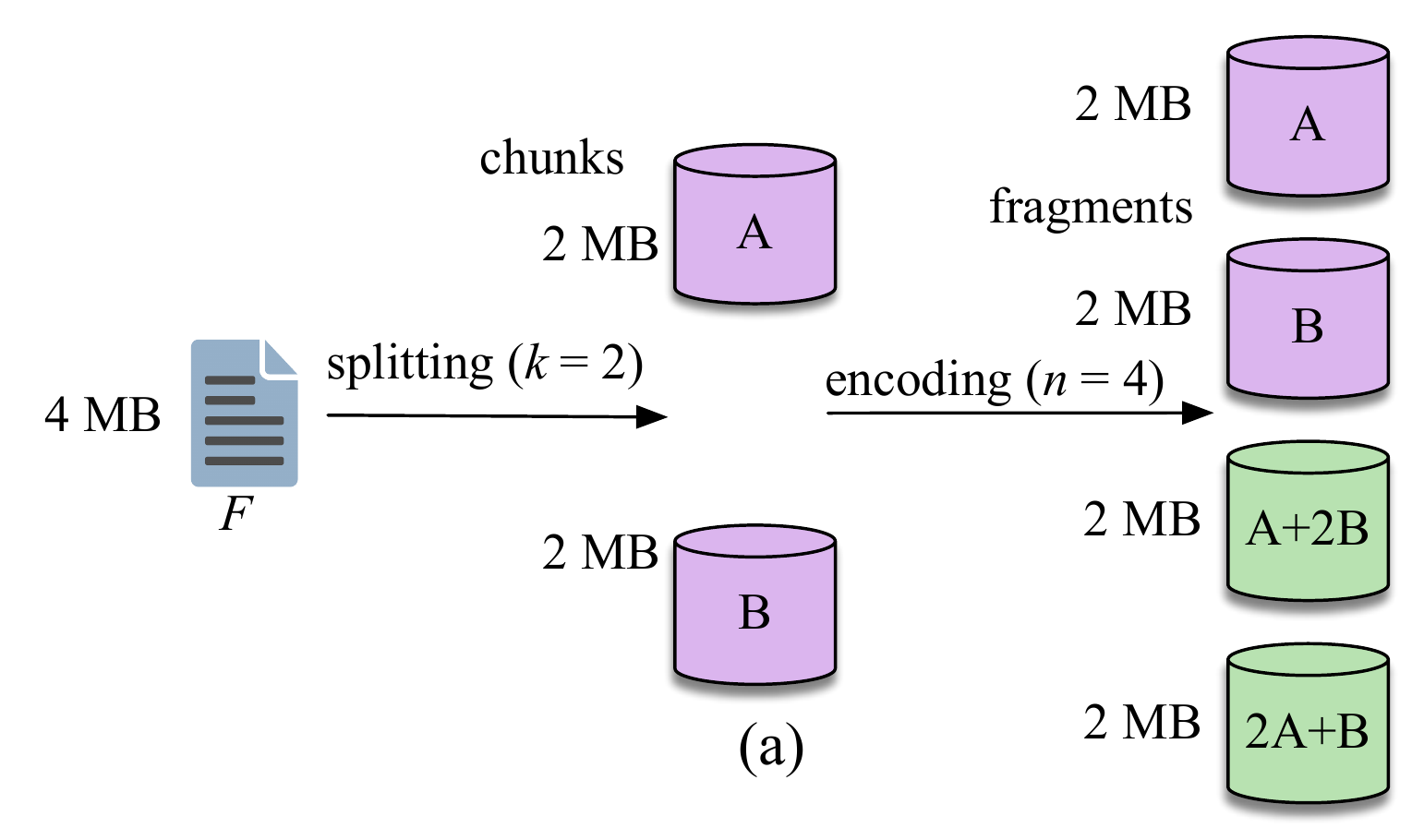} ~~~~~~~&~~~~~~~ \includegraphics[width=2.8in]{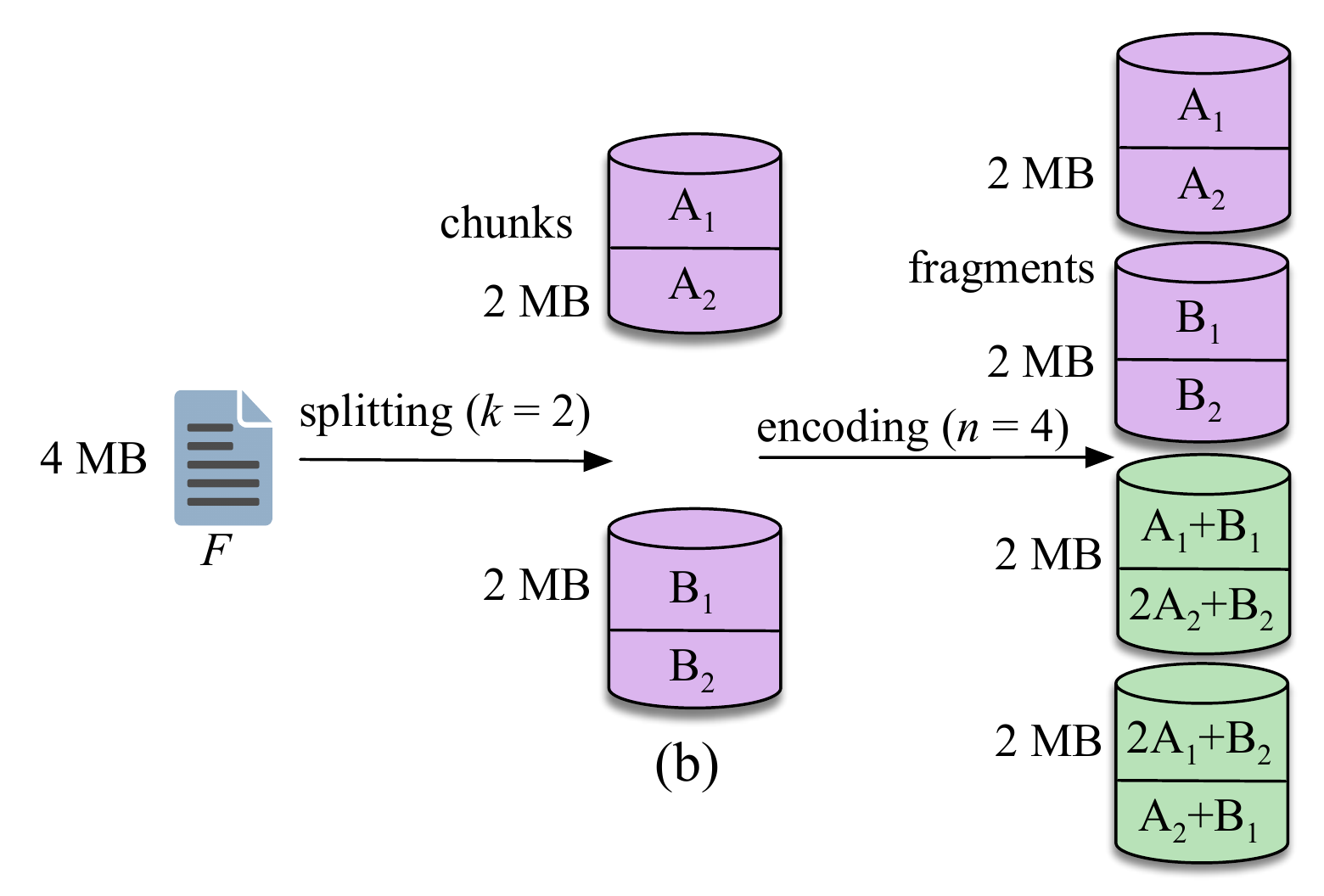} 
		\end{tabular}
	\end{center}
	\vspace{-0.15in}
	\caption{Storage of $F$ using (a) a $(n=4, k=2)$ erasure code and, (b)  a $(n=4, k=2, d=3, \alpha=2, \beta=1)$ regenerating code.}
	\vspace{-0.2in}
	\label{fig:erasure}	
\end{figure*}

Although we present our analysis in the context of mobile storage systems, we emphasize that our work is applicable in any distributed storage system where fragment losses can occur. This includes popular wired distributed storage architectures such as HDFS \cite{borthakur2008hdfs}, in which fragment loss can be frequent due to server failure and updates. Similar to a mobile storage cloud system, the optimal threshold repair strategy for the wired domain would depend on the fragment loss rate (server failure or unavailability rate) and the various system parameters.



\section{Related Work}
\label{sec:related}

In reliable storage systems,  information is replicated or coded such that the original content can be recovered if some limited fraction of the stored data is lost.  Replication is the most intuitive way to introduce redundancy. This method refers to the maintenance of verbatim copies of the same file $\F$. Although replication is easy to implement, it suffers from high storage and repair overhead.

Erasure codes incur less storage overhead compared to replication while maintaining the same degree of reliability. In particular, Maximum Distance Separable (MDS) codes achieve the optimal tradeoff between failure tolerance and storage overhead \cite{Singleton:Maximum64, Blaum:EVENODD95}. An $(n, k)$ MDS code encodes $k$ data chunks to $n$ fragments  and can tolerate up to $n-k$ fragment losses. Any $k$ encoded fragments can be used to reconstruct $\F$. Fig. \ref{fig:erasure}(a) shows the encoding process for a file $\F$ of size $4$MB using a (4, 2) erasure code. File $\F$ is split into $k=2$ chunks $A$ and $B$, each of size $2$MB. The two chunks are then encoded into $n=4$ fragments. The repair bandwidth for this scheme equals the size of the original file.  Reed-Solomon codes are a classical example of MDS codes and are deployed in many existing storage systems (e.g. \cite{weatherspoon2002, Huang:STAR05, calder2011windows, ford2010availability}). 

Although erasure codes offer significant savings in storage, their repair bandwidth is suboptimal, a data amount equal to the file size must be retrieved to repair a single fragment. Regenerating codes, on the other hand, can recover lost fragments without reconstructing the entire file, at the expense of a small storage overhead. They were initially investigated in the seminal work of Dimakis \etal \cite{dimakis2010k}, which focuses on the following setup. A file $\mathcal{F}$ of size $\mathcal{M}$ symbols is encoded into $n$ fragments, each of size $\alpha$ symbols, such that (i) the file can be reconstructed from any $k$ fragments, and (ii) a lost fragment can be repaired by downloading $\beta\leq \alpha$ symbols from any $d\geq k$ fragments, resulting in a repair bandwidth of $\gamma=d\beta$.  Dimakis \etal characterized the tradeoff between the per node storage ($\alpha$) and the repair bandwidth ($\gamma$) \cite{dimakis2010k}. 

Fig.~\ref{fig:erasure}(b) shows an example of a $(n,k,d,\alpha,\beta)=(4,2,3,2,1)$ regenerating code. Here, the file $\F$ is split into $k = 2$ chunks each of size $\alpha=2$MB. The chunks are encoded in $n = 4$ fragments, with each fragment being $2$MB. A failed node in this scenario can be regenerated by retrieving fragments of size $\beta=1$MB from $d=3$ surviving nodes. This yields a repair bandwidth of $d\beta=3$MB which is less that  $k\alpha=4$MB. Note, however, that regeneration can be applied only if at least $d$ fragments are available. If fewer than $d$ but more than $k$ fragments remain available, the lost fragments can only be repaired through file reconstruction.  

During the repair process of regenerating codes, there is no coordination among the nodes to be repaired. In  \cite{kermarrec2011repairing,hu2010cooperative,Shum2013Cooperative,wang2010mfr}, the authors consider the case where $t$ storage nodes are repaired simultaneously in a cooperative manner. Specifically, referring to this set of $t$ nodes as the newcomers, and the existing nodes storing fragments of $\F$ as live nodes, each newcomer contacts $d$ live nodes and downloads $\beta$ symbols from each. Moreover, newcomers cooperate and download $\beta'$ symbols from each of the remaining $t-1$ newcomers. The tradeoff between per node storage and repair bandwidth is established similarly to \cite{dimakis2010k}. Rawat \etal in \cite{rawal2016centralized} also consider the cooperative repair of $t$ nodes such that only one node among $t$ nodes downloads data from live nodes. After downloading the necessary information at the leader node, the remaining $t-1$ nodes are cooperatively repaired. Two points corresponding to minimum storage and minimum bandwidth regeneration are characterized. 

In the context of mobile cloud systems, P{\"a}{\"a}kk{\"o}nen \etal  considered a wireless device-to-device network used for distributed storage \cite{paakkonen2012distributed}. The authors showed the energy consumption for maintaining data using regenerating codes is lower compared to retrieving a lost file from a remote source. This result holds if the per-bit energy cost for communication between the mobile devices is lower than the cost for communicating with the remote source. 

In a follow-up work, P{\"a}{\"a}kk{\"o}nen \etal  compared replication with regeneration for a similar wireless P2P storage system \cite{paakkonen2013device}. 
They derived closed-form expressions for the expected total energy cost of file retrieval using replication and regeneration. They showed that the expected total cost of $2$-replication is lower than the cost of regeneration. However, only an eager repair strategy was considered in the analysis. Moreover, the advantages of regeneration were not fully exploited by considering codes with different parameters.  P{\"a}{\"a}kk{\"o}nen \etal also addressed the problem of tolerating multiple simultaneous failures \cite{paakkonen2014device}. They investigated the energy overhead of regenerating codes in a cellular network. They showed that large energy  gains can be obtained by employing regenerating codes. These gains depend on the file popularity. The authors  provided decision rules for choosing between simple caching, replication, MSR and MBR codes, based on numerical results on certain application scenarios. In our work, we analytically provide decision rules to choose optimal repair strategies that minimize the repair bandwidth per unit of time.

Pedersen \etal recently studied the cost of content caching on mobile devices using erasure codes \cite{pedersen2016distributed}. They derived analytical expressions for the cost of content download and repair bandwidth as a function of the repair interval. These expressions were used to evaluate the communication cost of distributed storage for MDS codes, regenerating codes, and locally repairable codes. Their results show that in high churn, distributed storage can reduce the communication cost compared to downloading from a base station. They conclude that MDS codes are the best performers in this setup. 



\section{System Model}
\label{sec:model}

\subsection{Network Model} 

We consider a distributed storage system (DSS) consisting of mobile storage nodes that enter and exit a geographically-limited area $\A$. When a node departs from $\A$, its data is lost. The nodes that store file fragments within area $\A$ are said to be \emph{live nodes}. New nodes that are used to store repaired fragments are said to be \emph{newcomer nodes} or newcomers. We assume that there are always sufficient newcomers to perform repairs. Moreover, as we are interested in the system performance due to network dynamics, we do not consider data loss due to hardware failures. Such failures occurs orders of magnitude less frequently than node departures. Following the network dynamics model of prior works \cite{giroire2010, paakkonen2012distributed}, we model the time $X_i$ spent by each node within $\A$ as an exponentially distributed random variable with parameter $\lambda$ (i.e., $X_i \sim \textrm{Exp}(\lambda),~\forall i)$. Random variables $\{X_i\}$ are assumed independent and identically distributed.  

The repair time is modeled by an exponentially distributed random variable with parameter $\mu$. For ease of analysis, we initially assume that $\mu$ is independent of the number of fragments that need to be repaired. We later revise our analysis and consider a more realistic model in which repairs proceed in parallel at different nodes with the same rate $\mu$. This corresponds to the distributed nature of mobile DSS. Finally, we define $\rho=\frac{\lambda}{\mu}$ as the ratio of the departure-to-repair rate.

\begin{figure*}[tb]
	\begin{center}
		\begin{tabular}{cc}
			\includegraphics[height=1.3in]{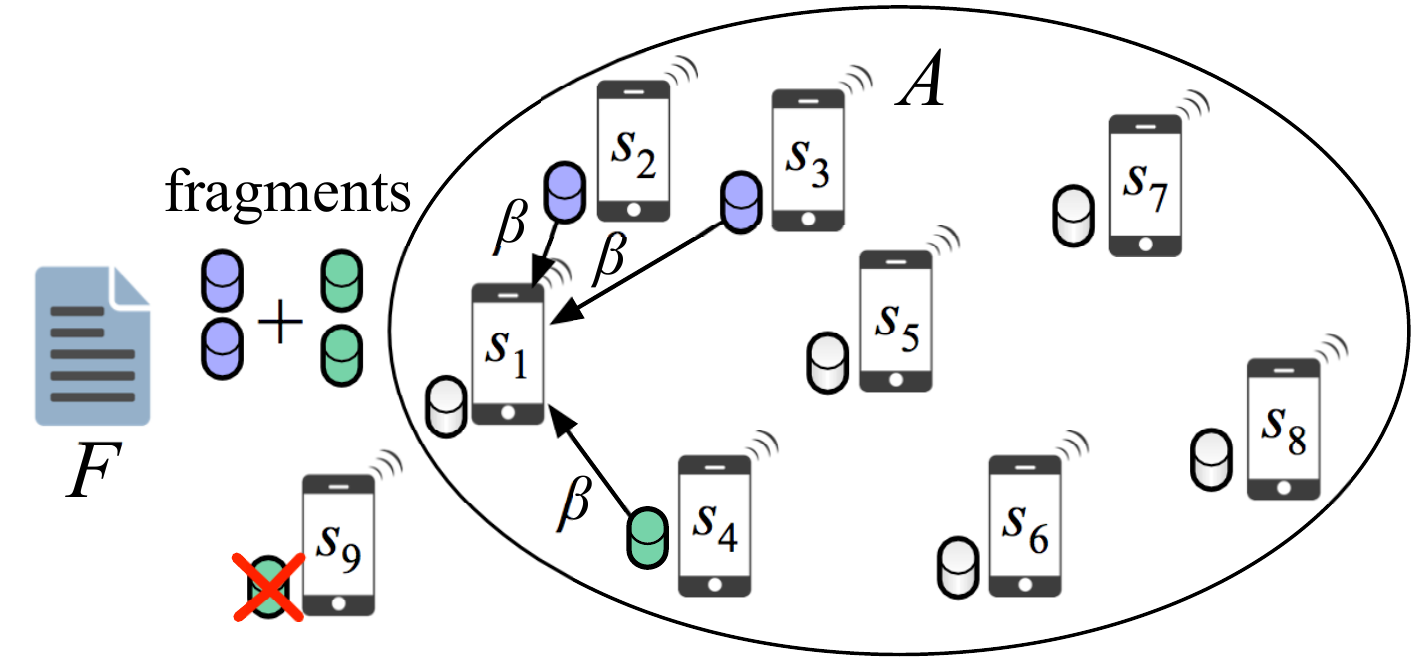}~~~~~~&~~~~~~\includegraphics[height=1.3in]{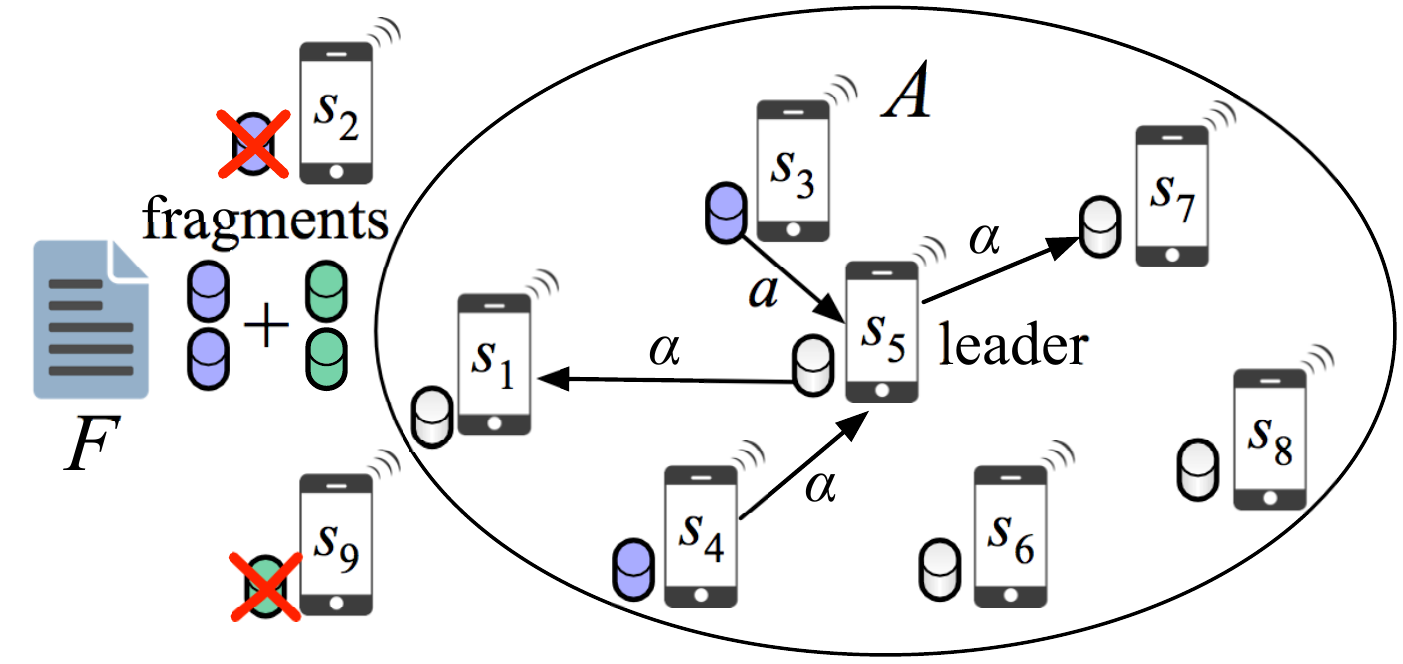}\\			
			(a) distributed repair ~~~&~~~ (b) centralized repair 
		\end{tabular}
		\vspace{-0.1in}
	\end{center}
	\caption{  (a) Distributed repair: nodes independently regenerate a lost fragment by obtaining symbols from other nodes, (b) centralized repair: a leader node reconstructs $\mathcal{F}$ and distributes lost fragments to new nodes.}
	\label{fig:repairStrat}
	\vspace{-0.2in}
\end{figure*}

\subsection{Storage Model} 

A file $\mathcal{F}$ of size $\mathcal{M}$ bits is stored in $n$ storage nodes using a regenerating code with parameters $(n,k,d,\alpha,\beta)$ (see Fig.~\ref{fig:erasure}(b)). We focus on the two most popular types of regenerating codes, namely  Minimum Storage Regenerating (MSR) codes and Minimum Bandwidth Regenerating (MBR) codes. These two classes of codes operate at the end points of the tradeoff between per node storage and repair bandwidth, as introduced in \cite{dimakis2010k}. MSR codes achieve minimum storage by setting $\alpha=\mathcal{M}/k$ and minimize the repair bandwidth under this constraint. Their operating point is given by:
\begin{equation}
\left(\alpha_{\textrm{MSR}}, \gamma_{\textrm{MSR}}\right)=\left(\frac{\mathcal{M}}{k},\frac{\mathcal{M}d}{k\left(d-k+1\right)}\right).
\label{msr}
\end{equation}
Note that, for MSR codes, $\alpha_{\textrm{MSR}} \leq \gamma_{\textrm{MSR}}$ and hence, the per-node storage is smaller than the repair bandwidth. MBR codes, on the other hand, minimize the repair bandwidth (achieved when $\gamma=\alpha$), and operate at:
\begin{equation}
\left(\alpha_{\textrm{MBR}}, \gamma_{\textrm{MBR}}\right)=\left(\frac{2\mathcal{M}d}{2kd-k^{2}+k},\frac{2\mathcal{M}d}{2kd-k^{2}+k}\right).
\label{mbr}
\end{equation}
Instances of these codes can be found in \cite{Cadambe:Asymptotic13, Rashmi:Optimal11, Tamo:Zigzag13}.


\subsection{File Repair Model}

In our model, the system continuously monitors the redundancy level and initiates a repair when $\tau$ live nodes remain within $\A$. The determination of $\tau$, the type of repair (regeneration, reconstruction, or both) and the communication model for fragment retrieval (centralized or distributed) form a {\em file maintenance strategy}. We note that the practical implementation details of the redundancy monitoring mechanism and of the communication protocols for retrieving various fragments are beyond the scope of the present work. We focus on the theoretical aspects of the maintenance process. Since repairs are initiated only when the number of remaining nodes reaches threshold $\tau$, a repair strategy can be viewed as an i.i.d. system recovery process occurring every $\Delta$ seconds, where $\Delta$ is a random variable denoting the time elapsed between two instances of a fully repaired system. For this recovery process, we define the following costs.

\begin{defn}[Repair cost $c(\tau)$] The number of bits $c(\tau)$ that must be downloaded from the $\tau$ remaining nodes to restore $n$ fragments in $\A,$ when $n - \tau$ nodes have departed $\mathcal{A}$. 
\end{defn}

\begin{defn} [Average repair cost per unit of time~$r(\tau)$] The average cost per unit of time for maintaining $n$ fragments in $\A$, defined as $c(\tau)$ over the average time between two instances of a fully repaired system, i.e., $\textrm{E}[\Delta]$, with $n$ fragments ($r(\tau)$ is measured in bits per unit of time). 
\end{defn}

We determine the optimal file maintenance strategy for different node departure rates, code parameters, and communication models for fragment retrieval. 


\section{File Maintenance Strategies}
\label{sec:repair_strategies}

Let $\tau$ denote the number of live nodes remaining within $\A$ after the departure of $n-\tau$ nodes. We focus on determining the optimal repair threshold $\tau^*,$ which minimizes the average repair cost per unit of time. We first compare the distributed repair strategy with centralized repair strategy.

\subsection{Distributed Repair}

In distributed repair, newcomers recover lost fragments by independently downloading relevant symbols from live nodes. The repair process is initiated when  $\tau$ live nodes remain within $\A$, where $k \leq \tau <n-1$ (when $\tau<k$, the data is irrecoverably lost).  If $\tau \geq d$, fragment recovery can be performed through regeneration. Each of the $n-\tau$ newcomers downloads $\beta$ symbols from $d$ live nodes and independently regenerates a lost fragment. Fig.~\ref{fig:repairStrat}(a) demonstrates the distributed repair process for a file $\mathcal{F}$ stored with a $(n=4, k=2, d=3, \alpha=2, \beta=1)$ regenerating code. One fragment of $\mathcal{F}$ is lost because node $s_9$ departed from $\mathcal{A}$. The lost fragment is regenerated at $s_1$ by independently downloading $\beta=1$ symbol from three nodes. The total repair bandwidth is equal to 3 symbols.

If $\tau<d,$ regeneration cannot be directly applied. To reduce the repair cost, we consider a hybrid scheme consisting of regeneration and reconstruction. First,  $d-\tau$ nodes are repaired by downloading $\alpha$ symbols from $k$ live nodes and reconstructing $\F$. When $d$ fragments become available, regeneration is applied to repair the remaining $n-d$ newcomers. Accordingly, the repair cost is expressed by:
\begin{equation}
\label{eq1}
c_D(\tau)=\begin{cases}
k\alpha(d-\tau) + \gamma(n-d),   &\mbox{  if  } \tau<d \\
\gamma(n-\tau),  &\mbox{  if  } \tau \geq d.
\end{cases}
\end{equation}
The subscript $D$  in $c_D(\tau)$ is used to denote the cost of distributed repair and $\gamma$ denotes the regeneration cost of a single fragment which depends on the underlying regeneration code (see eqs. \eqref{msr} and \eqref{mbr} for MSR and MBR codes, respectively). From \eqref{eq1}, it is evident that $c_D(\tau)$ monotonically decreases with $\tau.$ Moreover, the rate of cost change (with respect to $\tau$) is higher when $\tau < d.$ To determine the optimal threshold $\tau^*$, we are interested in minimizing $r_D(\tau)$, which captures the repair cost for maintaining $n$ fragments per unit of time. 

To calculate $r_D(\tau)$, we use the continuous-time Markov chain (CTMC) model shown in Fig.~\ref{fig:mc_d}. This model captures the periodic repair process when node departures occur independently, the time spent by each node in $\mathcal{A}$ is exponentially distributed with parameter $\lambda$,  and the system recovery process is exponentially distributed with parameter $\mu$. 
\begin{figure}[hb]
	\begin{center}
		\includegraphics[width=2.8in]{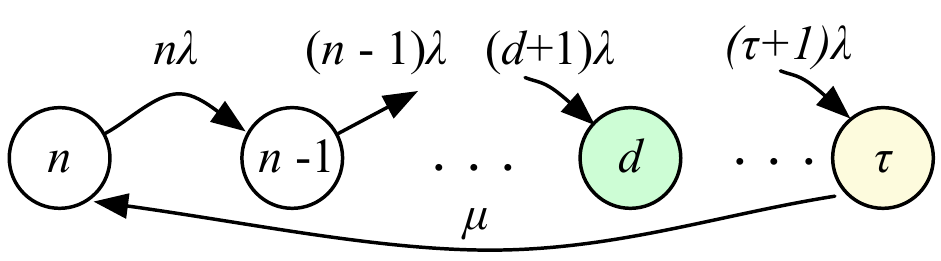}
	\end{center}
	\vspace{-0.1in}
	\caption{Markov chain for a threshold-based file maintenance.}
	\label{fig:mc_d}
	\vspace{-0.1in}
\end{figure}

The CTMC consists of $n-\tau+1$ states representing the number of fragments that remain within $\A$ after each node departure, until a repair at state $\tau$ is initiated. Note that we have omitted states after $\tau$ in the CTMC model, because we are interested in optimizing the periodic cost of repairing the DSS at threshold $\tau$. Moreover, the transition probability to state  $\tau-1$ is negligible for most realistic scenarios in which $\mu \gg \tau\lambda.$ For cases when $\mu \not \gg \tau\lambda$, we compute the mean time it takes to depart from the optimal repair strategy of repairing at state $\tau$ and interpret this event as a form of system error which leads to data loss (see Section~\ref{ssec:mttdl}). 

For the CTMC in Fig~\ref{fig:mc_d}, the departure rate from a state $i$ equals the node departure rate $\lambda$, times the number of nodes which store fragments at state $i$. When the repair process is initiated, the system transitions from state $\tau$ to state $n$ because all fragment repairs nodes proceed in parallel. For the CTMC, we define the expected average cost $r_D(\tau)$ per unit of time as
\begin{equation}
r_D(\tau) = \frac{c_D(\tau)}{\textrm{E}[\Delta]},
\label{rtau}
\end{equation}
where $E[\Delta]$ is the average time between two transitions through the $n^{th}$ state in the periodic repair process\footnote{The alternative definition of  $r_D(\tau) = E\left [ \frac{c_D(\tau)}{\Delta}\right]$ is not useful because the expectation is infinite. This is due to the infinitesimally small values that can be obtained by $\Delta$, whereas  $c_D(\tau)$ remains lower bounded.}. For $\Delta$,
\begin{equation}
\Delta = T_n + T_{n-1} + \dots + T_{\tau+1} + T_{\tau},
\label{delta}
\end{equation}  
where $T_i$ denotes the time that the system stays at state $i$ (inter-departure time) and $T_{\tau}$ is the expected time for completing repairs so that $n-\tau$ fragments are recovered (return to state $n$). The random variables $T_i$ are independent and exponentially distributed with parameter $i\lambda$, whereas $T_{\tau}$ is exponentially distributed with parameter $\mu$. In particular, $\textrm{E}[T_i]=\frac{1}{i\lambda}$ and $\textrm{E}[T_\tau]=\frac{1}{\mu}$. Therefore, $\textrm{E}[\Delta]$ is the sum expectation of independent exponential random variables.
\begin{equation}
\textrm{E}[\Delta] = \sum_{i=\tau+1}^{n}\frac{1}{i\lambda}  + \frac{1}{\mu}= \frac{H_{n,\tau}}{\lambda} + \frac{1}{\mu},
\label{e[delta]}
\end{equation}
where $H_{n,\tau} = \sum_{i=\tau+1}^{n} \frac{1}{i}$. Combining \eqref{rtau} and \eqref{e[delta]}, we obtain the average repair cost per unit of time  as follows.
\begin{equation}
r_D(\tau) = \frac{c_D(\tau)}{\textrm{E}[\Delta]} =
\begin{cases}
\frac{\lambda \mu(k\alpha(d-\tau)+\gamma (n-d))}{\mu H_{n,\tau} + \lambda}, & \textrm{if } \tau < d \\
\frac{\lambda \mu(\gamma (n-\tau))}{\mu H_{n,\tau} + \lambda}, & \textrm{if } \tau \geq d. \\
\end{cases}
\label{rtau_d}
\end{equation}

We use \eqref{rtau_d} to determine the optimal threshold $\tau^*$ which minimizes $r_D(\tau).$ This is given by Propositions 1 and 2.

\begin{prop}
	\label{thm:prop1}
	For regeneration ($d \leq \tau \leq n-1$), the optimal repair threshold $\tau^*$ is given by 
	\begin{equation}
	\tau^* = \begin{cases}
	d, & \rho \leq \frac{H_{n-1,d}}{n-d-1} - \frac{1}{n} \\
	n-1, & \text{otherwise.}
	\end{cases}
	\label{opttau}
	\end{equation}
\end{prop}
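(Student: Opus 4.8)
The plan is to work entirely in the regeneration regime. Dividing numerator and denominator of the second branch of \eqref{rtau_d} by $\mu$ gives $r_D(\tau) = \lambda\gamma \cdot g(\tau)$ with $g(\tau) := \frac{n-\tau}{H_{n,\tau}+\rho}$, and since the prefactor $\lambda\gamma$ is independent of $\tau$, it suffices to minimize $g$ over the integers $\tau \in \{d,\dots,n-1\}$. The natural tool for such a discrete optimization is the sign of the forward difference $g(\tau+1) - g(\tau)$.

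Using $H_{n,\tau} = H_{n,\tau+1} + \tfrac{1}{\tau+1}$ and clearing the positive denominators, one finds that $g(\tau+1) - g(\tau)$ has the same sign as $\phi(\tau) - \rho$, where $\phi(\tau) := \frac{n-\tau-1}{\tau+1} - H_{n,\tau+1}$. The key structural step is then to show that $\phi$ is strictly decreasing on $\{d,\dots,n-2\}$: a short computation yields
\[
\phi(\tau+1) - \phi(\tau) = \frac{(\tau+1) - n}{(\tau+1)(\tau+2)},
\]
which is negative for $\tau \le n-2$. Consequently, as $\tau$ increases $\phi(\tau) - \rho$ changes sign at most once, and only from positive to negative; equivalently, $g$ is first nondecreasing and then nonincreasing. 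A discrete function with this shape attains its minimum over a finite interval at one of the endpoints, so $\tau^* \in \{d,\, n-1\}$.

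It then remains only to compare the two candidates. Since $H_{n,n-1} = \tfrac1n$ we have $g(n-1) = \frac{1}{1/n + \rho}$, while $g(d) = \frac{n-d}{H_{n,d}+\rho}$. Cross-multiplying and substituting $H_{n,d} = H_{n-1,d} + \tfrac1n$ reduces the inequality $g(d) \le g(n-1)$ to $(n-d-1)\rho \le H_{n-1,d} - \tfrac{n-d-1}{n}$, i.e., to $\rho \le \frac{H_{n-1,d}}{n-d-1} - \frac1n$; in that case $\tau^* = d$, and otherwise $\tau^* = n-1$, which is \eqref{opttau}. The degenerate singleton case $d = n-1$ and the two-point case $d = n-2$ are covered directly by the same comparison.

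I expect the only delicate point to be the passage from ``$g$ is nondecreasing then nonincreasing'' to ``the minimum is at an endpoint'': one must keep track of possible equalities (a flat step of $g$ at the interior maximum, occurring when $\phi(\tau) = \rho$ for some integer $\tau$) and of the subcases where $g$ turns out to be monotone across the whole feasible set, but in each of these the minimizer is still $d$ or $n-1$, so the final comparison $g(d)$ versus $g(n-1)$ suffices. The monotonicity of $\phi$ and the closing algebraic identification of the threshold are routine calculations.
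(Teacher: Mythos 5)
Your proof is correct, and it takes a route that is structurally different from the paper's. The paper fixes each of the two candidate thresholds in turn and compares it against every other admissible state $d+\delta$ (resp.\ $n-1-\delta$), reducing the resulting family of inequalities to a single binding one via two separate monotonicity lemmas for $\frac{H_{x+\delta,x}}{\delta}$ and $\frac{H_{x,x-\delta}}{\delta}$; the exhaustiveness of the case split then rests on the two resulting $\rho$-thresholds coinciding at $\delta=n-d-1$. You instead establish a structural property of the objective itself: writing $r_D(\tau)=\lambda\gamma\,g(\tau)$ with $g(\tau)=\frac{n-\tau}{H_{n,\tau}+\rho}$ and showing, via the telescoping identity $\phi(\tau+1)-\phi(\tau)=\frac{(\tau+1)-n}{(\tau+1)(\tau+2)}<0$, that $g$ is unimodal (nondecreasing then nonincreasing), so the minimizer over $\{d,\dots,n-1\}$ must be an endpoint; a single two-point comparison then produces the stated threshold. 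Your monotonicity step is a one-line computation where the paper needs two auxiliary lemmas, and your approach makes the exhaustiveness of the two cases automatic rather than a consequence of two independently derived bounds happening to agree. The closing algebra ($g(d)\le g(n-1)$ iff $\rho\le \frac{H_{n-1,d}}{n-d-1}-\frac{1}{n}$) checks out, and your treatment of the degenerate cases $d=n-1$ and $d=n-2$, as well as of the subcases where $g$ is monotone over the whole range, is adequate.
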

\begin{proof}
Proof is provided in Appendix \ref{sec:lambdaProp1}.			
\end{proof}

Proposition \ref{thm:prop1} determines the $\rho$ regime for which repairs at $\tau = d$, an instance of {\it lazy repair}, is more efficient than initiating repairs at $\tau= n-1$, referred to as  {\it eager repair}. In the following Lemma, we show that there is always a positive $\rho$ for which lazy repair is more efficient that eager repair.

\begin{lem}
	\label{thm:lambdaProp1}
	There is always some $\rho>0$ for which lazy repair ($\tau^{\ast} = d$) is more efficient than eager repair $(\tau=n-1)$, independent of the code parameters used for regeneration.
\end{lem}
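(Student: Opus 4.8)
The plan is to read the answer straight off Proposition~\ref{thm:prop1}: it already pins down the exact $\rho$-range in which lazy repair ($\tau^{\ast}=d$) is at least as good as eager repair ($\tau=n-1$), namely
\begin{equation}
\rho \le B(n,d):=\frac{H_{n-1,d}}{n-d-1}-\frac{1}{n}.
\end{equation}
So the whole content of the lemma reduces to showing that this interval is nonempty, i.e. that $B(n,d)>0$, under the standing assumption $d\le n-2$ (without which ``lazy'' and ``eager'' coincide and the statement is vacuous). Note $B(n,d)$ is a function of $n$ and $d$ only — it contains no reference to $\alpha$, $\beta$, $\gamma$, or $\mathcal{M}$ — so positivity of $B(n,d)$ immediately gives the claimed independence from the regeneration code parameters.

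To prove $B(n,d)>0$ I would use the crudest possible lower bound on the harmonic partial sum. Since $H_{n-1,d}=\sum_{i=d+1}^{n-1}\tfrac{1}{i}$ is a sum of exactly $n-1-d$ terms, each at least $\tfrac{1}{n-1}$, we get $H_{n-1,d}\ge \tfrac{n-1-d}{n-1}$, hence
\begin{equation}
\frac{H_{n-1,d}}{n-d-1}\ \ge\ \frac{1}{n-1}\ >\ \frac{1}{n},
\end{equation}
so $B(n,d)\ge \tfrac{1}{n(n-1)}>0$. Then any $\rho$ with $0<\rho\le B(n,d)$ witnesses the lemma: by Proposition~\ref{thm:prop1} the optimal regeneration threshold is $\tau^{\ast}=d$, so in particular $r_D(d)\le r_D(n-1)$. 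If one wants the strict statement $r_D(d)<r_D(n-1)$, take $\rho$ strictly below $B(n,d)$ and invoke the strict monotonicity of $f(d,\delta)=H_{d+\delta,d}/\delta$ (Lemma~\ref{thm:Lemmafcn1}), which makes the governing inequality in the proof of Proposition~\ref{thm:prop1} strict at $\delta=n-d-1$.

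Honestly, there is no real obstacle here — the argument is a one-line harmonic-sum estimate once Proposition~\ref{thm:prop1} is in hand. The only point requiring a word of care is the degenerate case $d=n-1$, where $B(n,d)$ is a $0/0$ expression and the two strategies are identical; this is excluded by the distributed-repair setting (where $k\le\tau<n-1$ already) together with the implicit requirement $d\le n-2$ needed for the comparison to be meaningful.
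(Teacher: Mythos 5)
Your proof is correct and follows essentially the same route as the paper's: both reduce the lemma to showing $\frac{H_{n-1,d}}{n-d-1}-\frac{1}{n}>0$ via the lower bound $\frac{H_{n-1,d}}{n-d-1}\geq\frac{1}{n-1}$ (which the paper isolates as a separate harmonic-sum lemma and you derive inline). Your additional remarks on the degenerate case $d=n-1$ and on why the bound is automatically independent of the code parameters are sensible but not a substantive departure.
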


\begin{proof}
Proof is provided in Appendix \ref{sec:LemmalambdaProp1}.			
\end{proof}

We now examine if there is a $\rho$ regime for which the hybrid scheme, i.e., reconstruction plus regeneration results in a lower expected cost per unit of time compared to regeneration only. This rate regime is given by the following proposition. 
	
\begin{prop}
	\label{thm:prop2}
	For regeneration plus reconstruction ($k \leq \tau \leq d$), the optimal repair threshold $\tau^*$  is given by 
	\begin{equation}
	\tau^* = \begin{cases}
	k, & \rho \leq  \frac{\gamma (n-d)H_{d,k}}{k\alpha (d-k)} - H_{n,d} \\
	d, & \textrm{otherwise.}
	\end{cases}
	\label{opttau_2}
	\end{equation}
\end{prop}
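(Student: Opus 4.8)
The plan is to follow the template used for Proposition~\ref{thm:prop1}: since minimizing $r_D(\tau)$ over $k\le\tau\le d$ directly is awkward because of the harmonic sum $H_{n,\tau}$, I would instead show that the minimum over the integer states $\{k,k+1,\dots,d\}$ is always attained at one of the two endpoints, by comparing $r_D(k)$ and $r_D(d)$ against every interior state and reducing each comparison to a threshold inequality in $\rho=\lambda/\mu$.

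First I would determine when $\tau^*=k$, i.e.\ when $r_D(k)\le r_D(k+\delta)$ for all $\delta\in\{1,\dots,d-k\}$. Using $c_D(k+\delta)=k\alpha(d-k-\delta)+\gamma(n-d)$ (which, at $\delta=d-k$, correctly reduces to the pure-regeneration cost $\gamma(n-d)$, so \eqref{rtau_d} may be applied uniformly on the whole range), together with $H_{n,k}=H_{n,k+\delta}+H_{k+\delta,k}$, and clearing the positive denominators in \eqref{rtau_d}, the inequality $r_D(k)\le r_D(k+\delta)$ simplifies to
\begin{equation}
\rho\ \le\ \frac{\bigl(k\alpha(d-k)+\gamma(n-d)\bigr)\,H_{k+\delta,k}}{k\alpha\,\delta}\ -\ H_{n,k}.
\label{prop2pf1}
\end{equation}
The right-hand side depends on $\delta$ only through $H_{k+\delta,k}/\delta$, which is the arithmetic mean of $\tfrac1{k+1},\dots,\tfrac1{k+\delta}$ and is therefore strictly decreasing in $\delta$ --- exactly the harmonic-ratio monotonicity already established in the proof of Proposition~\ref{thm:prop1} (with the role of $d$ there played by $k$). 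Hence \eqref{prop2pf1} holds for every admissible $\delta$ if and only if it holds for $\delta=d-k$; substituting $\delta=d-k$ and using $H_{d,k}-H_{n,k}=-H_{n,d}$ collapses the bound to $\rho\le \frac{\gamma(n-d)H_{d,k}}{k\alpha(d-k)}-H_{n,d}$, which is the stated condition for $\tau^*=k$.

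For the complementary regime I would compare $r_D(d)$ with $r_D(d-\delta)$, $\delta\in\{1,\dots,d-k\}$. Here $c_D(d-\delta)=k\alpha\,\delta+\gamma(n-d)$ and $H_{n,d-\delta}=H_{n,d}+H_{d,d-\delta}$, so the same manipulation yields
\begin{equation}
\rho\ \ge\ \frac{\gamma(n-d)\,H_{d,d-\delta}}{k\alpha\,\delta}\ -\ H_{n,d},
\label{prop2pf2}
\end{equation}
and now $H_{d,d-\delta}/\delta$ is the mean of $\tfrac1{d-\delta+1},\dots,\tfrac1d$, which is strictly increasing in $\delta$ (the other harmonic-ratio monotonicity from the proof of Proposition~\ref{thm:prop1}, since each increment of $\delta$ averages in the strictly larger term $\tfrac1{d-\delta}$). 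Thus \eqref{prop2pf2} holds for all admissible $\delta$ iff it holds at $\delta=d-k$, which gives $\rho\ge \frac{\gamma(n-d)H_{d,k}}{k\alpha(d-k)}-H_{n,d}$ as the condition for $\tau^*=d$. The two thresholds coincide, so the $\rho$-axis is partitioned (with both $k$ and $d$ optimal at the boundary value), establishing \eqref{opttau_2}.

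The only substantive ingredients are the two monotonicity facts behind \eqref{prop2pf1} and \eqref{prop2pf2}, and both are instances of the harmonic-sum lemmas already proved for Proposition~\ref{thm:prop1} (after the obvious relabelling), so no genuinely new estimate is needed. The one point I would be careful to spell out is \emph{why} it suffices to test the two endpoints against all interior states: it is precisely the monotonicity above that rules out an interior minimizer, since $\rho$ below the common threshold makes $k$ beat every interior state while $\rho$ above it makes $d$ beat every interior state. The main --- and very mild --- obstacle is therefore just the bookkeeping of the harmonic-sum identities when clearing denominators in \eqref{rtau_d}; everything else is inherited from the preceding results.
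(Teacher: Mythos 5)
Your proposal is correct and follows essentially the same route as the paper's proof: comparing $r_D(k)$ against $r_D(k+\delta)$ and $r_D(d)$ against $r_D(d-\delta)$, reducing each to a threshold inequality in $\rho$, and invoking the monotonicity of $H_{k+\delta,k}/\delta$ (decreasing) and $H_{d,d-\delta}/\delta$ (increasing) — which are exactly the paper's Lemmas on the harmonic ratios — to evaluate at $\delta=d-k$ and obtain the common threshold. Your arithmetic-mean reading of the monotonicity facts and the explicit remark on why endpoint testing suffices are just cleaner articulations of what the paper does implicitly.
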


\begin{proof}
Proof is provided in Appendix \ref{sec:AppA}.
\end{proof}

Similar to Lemma~\ref{thm:lambdaProp1}, we investigate if the highest departure-to-repair rate for which reconstruction at $k$ is more efficient than regeneration is always positive independent of the code parameters. Unlike the case of Lemma~\ref{thm:lambdaProp1}, we show that for a certain relationship between $n, k, \gamma,$ and $\alpha,$ regeneration is strictly more efficient than regeneration plus reconstruction, independent of $\rho$. For any other code parameters, the most efficient strategy  depends on $\rho$. 
\begin{lem}
	\label{thm:lambdaProp2}
	For any departure-to-repair ratio $\rho$, regeneration is strictly more efficient than regeneration plus reconstruction for codes satisfying $n\gamma < k^2\alpha$.
\end{lem}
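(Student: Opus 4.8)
The plan is to read everything off the closed-form cost in \eqref{rtau_d}: I would compare the pure-regeneration cost $r_D(d)$ against the hybrid (reconstruction-plus-regeneration) cost $r_D(\tau)$ for every threshold $\tau$ with $k\le\tau\le d-1$. If $d=k$ the hybrid range is empty and there is nothing to prove, and $n>d$ always holds for a regenerating code, so throughout I assume $n>d>k$.

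First I would substitute the two branches of \eqref{rtau_d} into $r_D(d)<r_D(\tau)$, cancel the common factor $\lambda\mu$, cross-multiply by the positive denominators $\mu H_{n,d}+\lambda$ and $\mu H_{n,\tau}+\lambda$, cancel the common $\gamma(n-d)\lambda$ term, apply the telescoping identity $H_{n,\tau}-H_{n,d}=H_{d,\tau}$, and divide by $\mu$. This elementary manipulation rewrites $r_D(d)<r_D(\tau)$ as the equivalent scalar condition
\begin{equation}
\rho \;>\; \frac{\gamma(n-d)\,H_{d,\tau}}{k\alpha(d-\tau)} - H_{n,d},
\label{eq:prop2plan}
\end{equation}
which is precisely the single-term comparison underlying Proposition~\ref{thm:prop2} (the case $\tau=k$ recovers the threshold in \eqref{opttau_2}).

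Next I would bound the two harmonic quantities in the right directions. Since $H_{d,\tau}=\sum_{i=\tau+1}^{d}\tfrac1i$ is a sum of $d-\tau$ terms, each strictly smaller than $1/\tau\le 1/k$, we get $H_{d,\tau}<(d-\tau)/k$; and since $H_{n,d}=\sum_{i=d+1}^{n}\tfrac1i$ is a sum of $n-d$ terms, each at least $1/n$, we get $H_{n,d}\ge (n-d)/n$. Substituting these into the right-hand side of \eqref{eq:prop2plan} gives
\begin{equation}
\frac{\gamma(n-d)\,H_{d,\tau}}{k\alpha(d-\tau)} - H_{n,d}
\;<\; \frac{\gamma(n-d)}{k^2\alpha} - \frac{n-d}{n}
\;=\; \frac{(n-d)\,\bigl(n\gamma-k^2\alpha\bigr)}{n\,k^2\alpha},
\end{equation}
which is strictly negative exactly when $n\gamma<k^2\alpha$. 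As $\rho=\lambda/\mu>0$, condition \eqref{eq:prop2plan} then holds for every $\rho>0$ and every $\tau\in\{k,\dots,d-1\}$; equivalently $r_D(d)<r_D(\tau)$ for all such $\tau$, which is the statement of the lemma (and, via Proposition~\ref{thm:prop2}, shows $\tau^{\ast}=d$ for all $\rho$ in the reconstruction-plus-regeneration regime).

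The argument is entirely computational, so the one place I would be careful is the direction and strictness of the harmonic-sum estimates — that the smallest index appearing in $H_{d,\tau}$ is $\tau+1\ge k+1$, so each summand is $<1/k$, and that the largest index in $H_{n,d}$ is $n$, so each summand is $\ge1/n$ — together with the boundary case $n=d+1$, where $H_{n,d}=1/n$ makes the second estimate tight but the final bound still comes out strictly negative. No structural facts beyond \eqref{rtau_d} are needed.
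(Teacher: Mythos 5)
Your proof is correct and follows essentially the same route as the paper: derive the threshold condition $\rho > \frac{\gamma(n-d)H_{d,\tau}}{k\alpha(d-\tau)} - H_{n,d}$ and show its right-hand side is negative under $n\gamma < k^2\alpha$ using the bounds $H_{d,\tau}/(d-\tau) < 1/k$ and $H_{n,d}/(n-d) \geq 1/n$. The only cosmetic difference is that the paper first invokes the monotonicity argument of Proposition~\ref{thm:prop2} to reduce to the single comparison at $\tau=k$, whereas you bound every $\tau\in\{k,\dots,d-1\}$ uniformly with the same estimates.
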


\begin{proof}
Proof is provided in Appendix \ref{sec:lemmalambdaProp2}. 
\end{proof}

We further explore the condition in Lemma~\ref{thm:lambdaProp2} for MSR and MBR codes. For MSR codes, we obtain that $dn <k^2 (d - k +1)$
by substituting the operation points of MSR from \eqref{msr}. Similarly, for MBR codes, we obtain that $n <k^2$ by substituting the operation points of MBR from \eqref{mbr}. Note that Lemma~\ref{thm:lambdaProp2} does not enumerate all possible codes for which regeneration is strictly more efficient than regeneration plus reconstruction for any $\lambda.$ This is because we have used bounds on the harmonic function to derive the analytic formulas. Numerical bounds could provide a more accurate range of code parameters for which Lemma~\ref{thm:lambdaProp2} is true. 

\begin{table*}[]
	\centering
	\caption{Cost comparison of repair strategies at different thresholds.}
	\label{table:costTable}
	\begin{tabular}{|c|c|c|c|c|c|}
		\hline
		\multirow{2}{*}{} & \multicolumn{3}{c|}{Distributed Repair}                                                                                            & \multicolumn{2}{c|}{Centralized Repair} \\ \cline{2-6} 
		& \multicolumn{2}{c|}{Regeneration} & \multicolumn{1}{c|}{\begin{tabular}[c]{@{}c@{}}Regeneration\\ +\\ Reconstruction\end{tabular}} & \multicolumn{2}{c|}{Reconstruction}     \\ \hline
		Code         & $r_D(n-1)$          & $r_D(d)$          & $r_D(k)$                                                                                       & $r_C(n-1)$              & $r_C(k)$              \\  \hline 
		MSR  & $\frac{n\mathcal{M} d \lambda \mu}{k(d-k+1)(\mu+n\lambda)} $  & $\frac{\mathcal{M}(n-d)d\lambda\mu}{k(d-k+1)(\lambda+\mu H_{n,d})}$   &  $\frac{\mathcal{M}\left[k(d-k+1)(d-k)+d(n-d)\right] \lambda\mu}{k(d-k+1)(\lambda+\mu H_{n,k})}$   & $\frac{n\mathcal{M}\lambda\mu}{\mu+n\lambda}$   &     $\frac{(n-1)\mathcal{M}\lambda\mu}{k(\lambda + \mu H_{n,k})}$               \\ [5pt] \hline
		MBR  & $\frac{2n\mathcal{M}d\lambda \mu}{k(2d-k+1)(\mu+n\lambda)}$  &  $\frac{2\mathcal{M}(n-d)d\lambda \mu}{k(2d-k+1)(\lambda+H_{n,d})}$   &  $\frac{2\mathcal{M}d( n+ kd-k^2 -d)\lambda\mu}{k(2d-k+1)(\lambda+H_{n,k})}$  &     $\frac{2n\mathcal{M}d\lambda\mu}{(2d-k+1)(\mu+n\lambda)}$   &           $\frac{2(n-1)\mathcal{M}d\lambda\mu}{k(2d-k+1)(\lambda + \mu H_{n,k})}$         \\ [5pt] \hline
	\end{tabular}
	\vspace{-0.2in}
\end{table*}

\subsection{Centralized Repair}
\label{subsec:centralized}
In the centralized strategy, repairs are performed by a {\em leader node} in two stages. In the first stage, the leader newcomer node downloads  $\alpha$ symbols from $k$ live nodes and reconstructs $\F$. In the second stage, the leader node transmits $\alpha$ bits to each of the remaining $(n-\tau-1)$ newcomers to restore the remaining  $(n-\tau-1)$ fragments. Fig.~\ref{fig:repairStrat}(b) shows an example of centralized repair for a $(n=4, k=2, d=3, \alpha=2, \beta=1)$ regenerating code. Nodes $s_2$ and $s_9$ have departed from area $\mathcal{A}$, leading to the loss of their respective fragments. Node $s_5,$ who acts as a leader, downloads $\alpha = 2$ symbols from $k=2$ other nodes to reconstruct $\mathcal{F}$. It then distributes $\alpha =2$ symbols to $s_1$ and $s_7$ to restore the  system reliability. The repair cost of centralized repair is given by:
\begin{equation}
\label{c_cent}
c_C(\tau)=\alpha\left(k+ n-\tau-1\right).
\end{equation}
In \eqref{c_cent}, the subscript $C$  in $c_C(\tau)$ is used to denote the cost of centralized repair. The node departure process does not vary with the repair strategy. Therefore, the same CTMC model shown in Fig. \ref{fig:mc_d} applies for the centralized repair. According to \eqref{rtau}, the average repair cost $r_C(\tau)$ is given by:
\begin{equation}
\label{rtau_c}
r_C(\tau) = \frac{c_C(\tau)}{\textrm{E}[\Delta]} = \frac{\lambda \mu \alpha (k+n-\tau-1)}{ \mu H_{n,\tau} + \lambda}. 
\end{equation}
The optimal threshold $\tau^*$ which minimizes $r(\tau)$ is obtained in Proposition \ref{thm:prop3}.

\begin{prop}\label{thm:prop3}
	The optimal repair threshold $\tau^*$ which minimizes $r(\tau)$ for centralized repair is given by 
	\begin{equation}
	\tau^* = \begin{cases}
	k, & \rho \leq \frac{kH_{n-1,k}}{n-k-1} - \frac{1}{n} \\
	n-1, & \textrm{otherwise}
	\end{cases}
	\label{opttau_1}
	\end{equation}
\end{prop}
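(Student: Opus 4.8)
The plan is to follow the template of Proposition~\ref{thm:prop1}. In $r_C(\tau)$ of \eqref{rtau_c} both the cost $c_C(\tau)=\alpha(k+n-\tau-1)$ and the mean return time $\textrm{E}[\Delta]=H_{n,\tau}/\lambda+1/\mu$ decrease in $\tau$, so a direct derivative test is obstructed by the harmonic sums; instead I would isolate the two natural candidates, the \emph{late} threshold $\tau=k$ (minimum redundancy at which reconstruction is still possible) and the \emph{eager} threshold $\tau=n-1$, and check each against every intermediate threshold.

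\emph{Optimality of $\tau=k$.} I would write $r_C(k)\le r_C(k+\delta)$ for $1\le\delta\le n-k-1$, substitute \eqref{rtau_c}, clear denominators, and use $H_{n,k}=H_{n,k+\delta}+H_{k+\delta,k}$ together with $n-\delta-1=(n-1)-\delta$ to reduce the inequality to
\[
\rho \;\le\; \frac{(n-1)\,H_{k+\delta,k}}{\delta} \;-\; H_{n,k}.
\]
Since $H_{n,k}$ does not depend on $\delta$, the right-hand side has the monotonicity of $\delta\mapsto H_{k+\delta,k}/\delta$, which is decreasing in $\delta$ (the auxiliary harmonic-sum lemma already used in the proof of Proposition~\ref{thm:prop1}, with $d$ replaced by $k$). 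Hence the binding constraint is at $\delta=n-k-1$, i.e.\ $k+\delta=n-1$; plugging in and simplifying $\frac{(n-1)H_{n-1,k}}{n-k-1}-H_{n,k}$ with $H_{n,k}=H_{n-1,k}+\tfrac1n$ gives exactly $\rho\le\frac{kH_{n-1,k}}{n-k-1}-\frac1n$. In this regime $r_C(k)\le r_C(\tau)$ for all $k\le\tau\le n-1$, so $\tau^*=k$.

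\emph{Optimality of $\tau=n-1$.} For the complementary range I would compare $\tau=n-1$ against $\tau=n-1-\delta$ for $1\le\delta\le n-k-1$. Using $H_{n,n-1}=\tfrac1n$ and $H_{n,n-1-\delta}=\tfrac1n+H_{n-1,n-1-\delta}$, the inequality $r_C(n-1)\le r_C(n-1-\delta)$ collapses to
\[
\rho \;\ge\; \frac{k\,H_{n-1,n-1-\delta}}{\delta} \;-\; \frac{1}{n},
\]
whose right-hand side has the monotonicity of $\delta\mapsto H_{n-1,n-1-\delta}/\delta$, which is increasing in $\delta$ (the companion harmonic-sum lemma from Proposition~\ref{thm:prop1}). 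Thus the binding constraint is again at $\delta=n-k-1$, where $n-1-\delta=k$, giving $\rho\ge\frac{kH_{n-1,k}}{n-k-1}-\frac1n$, so $\tau^*=n-1$ in this regime.

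The two regimes overlap only at the threshold value (where the two extremal costs tie), so together they exhaust $\rho>0$ and determine $\tau^*$ as claimed. The only step with any substance is the pair of monotonicity facts that let me localize the worst-case $\delta$ at $n-k-1$; but these are precisely the harmonic-sum lemmas already established for Proposition~\ref{thm:prop1}, and the crucial point they buy is that once a regime condition holds the corresponding endpoint simultaneously dominates \emph{every} intermediate threshold, so no separate argument ruling out interior optima is needed.
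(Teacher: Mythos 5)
Your proposal is correct and follows essentially the same route as the paper's proof in Appendix F: pairwise comparison of $r_C(k)$ against $r_C(k+\delta)$ and of $r_C(n-1)$ against $r_C(n-1-\delta)$, reduction to the same intermediate bounds $\rho \le \frac{(n-1)H_{k+\delta,k}}{\delta}-H_{n,k}$ and $\rho \ge \frac{kH_{n-1,n-1-\delta}}{\delta}-\frac{1}{n}$, and localization of the binding $\delta=n-k-1$ via the two harmonic-sum monotonicity lemmas already used for Proposition~1. The algebraic simplification to $\frac{kH_{n-1,k}}{n-k-1}-\frac{1}{n}$ checks out.
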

\begin{proof}
Proof is provided in Appendix \ref{sec:AppB}. 
\end{proof}

Using Proposition \ref{thm:prop3}, we can determine the optimal repair strategy for any $\rho$, when centralized repair is employed. We note that according to Lemma~\ref{thm:lambdaProp1}, the value $\frac{kH_{n-1, k}}{(n-k-1)} - \frac{1}{n}$ is strictly positive for any code parameters. Therefore, there is always a departure-to-repair ratio for which lazy repair is more efficient than eager repair, independent of the code used for regeneration and reconstruction.


\section{Analysis of Maintenance Strategies}
\label{sec:numerical}

In this section, we characterize the $\rho$ regime for which lazy repair is more cost-efficient than eager repair. Moreover, we determine the optimal repair strategy (decentralized vs. centralized) as a function of the code parameters, when the departure and repair rates are fixed. To ease the  reader to our analysis,  we summarize the cost of repair in Table~\ref{table:costTable}. 

\subsection{Eager vs. Lazy Repair} 

According to the results of Propositions \ref{thm:prop1}, \ref{thm:prop2}, and \ref{thm:prop3}, we classify the departure-to-repair ratios into a \textit{low departure-to-repair rate regime $(\rho_{\textrm{low}}$)} and a \textit{high departure-to-repair rate regime $(\rho_{\textrm{high}})$.} The two regimes are defined by finding the lowest and highest rates, based on the bounds stated in the three propositions. 
\begin{eqnarray}
\rho_{\textrm{low}}&=\min\Big\{\frac{H_{n-1, d}}{n-d-1} - \frac{1}{n}, \frac{\gamma(n-d)H_{d,k}}{k\alpha(d-k)}-H_{n,d}, \frac{kH_{n-1, k}}{(n-k-1)} - \frac{1}{n}\Big\}.\\
\rho_{\textrm{high}}&=
\max\Big\{\frac{H_{n-1, d}}{n-d-1} - \frac{1}{n},\frac{\gamma(n-d)H_{d,k}}{k\alpha(d-k)}-H_{n,d}, \frac{kH_{n-1, k}}{(n-k-1)} - \frac{1}{n}\Big\}.
\end{eqnarray}
Noting that   $\frac{H_{n-1, d}}{n-d-1} - \frac{1}{n}< \frac{kH_{n-1, k}}{(n-k-1)} - \frac{1}{n}$ for $k<d,$ the two regime expressions can be simplified to 
\begin{align}
&\rho_{\textrm{low}}=\min\Big\{\frac{H_{n-1, d}}{n-d-1} - \frac{1}{n}, \frac{\gamma(n-d)H_{d,k}}{k\alpha(d-k)}-H_{n,d}\Big\}.\\
&\rho_{\textrm{high}}=
\max\Big\{\frac{\gamma(n-d)H_{d,k}}{k\alpha(d-k)}-H_{n,d},\frac{kH_{n-1, k}}{(n-k-1)} - \frac{1}{n}\Big\}. 
\end{align}
For any $\rho \leq \rho_{\textrm{low}}$, the repair cost per unit of time is minimized when lazy repair is applied  since that choice of $\rho$ would be lower than the bounds found in \eqref{opttau}, \eqref{opttau_2} and \eqref{opttau_1} and the corresponding repair thresholds are the lowest possible. On the other hand, for any $\rho \geq \rho_{\textrm{high}}$, eager repair (i.e., repair at $\tau^{*}=n-1$) yields the lowest $r(\tau)$. These findings hold for both distributed and centralized repair. If the departure-to-repair rates do not lie in either of the $\rho$ regimes, then the optimal repair policy (eager vs. lazy) depends on the relationship of the code parameters and the repair strategy (centralized or distributed).

\subsection{Centralized vs. Distributed Repair} 

We now fix the departure rate $\lambda$ and  repair rate $\mu$ to compare the repair cost of centralized vs. distributed repair per unit of time, as a function of the code parameters. Specifically, we determine relationships between $n,k,d$ and the code type (MSR vs. MBR) for which an optimal strategy can be derived. Our results are stated in the following two propositions. 

\begin{prop}
	For $d \leq \tau^{*} \leq n-1$, using MBR codes and distributed repair minimizes the average repair cost per unit of time, if $d>\frac{n+k-1}{3}$.
	\label{thm:prop4}
\end{prop}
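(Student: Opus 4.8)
The plan is to reduce the four-way comparison (distributed vs.\ centralized, each with an MSR or an MBR code) to pointwise comparisons of the repair-cost numerators over the regeneration regime $d\le\tau\le n-1$. For a fixed threshold $\tau$, all four average costs per unit of time share the same strictly positive denominator $\mu H_{n,\tau}+\lambda$ (compare \eqref{rtau_d} and \eqref{rtau_c}), so if two policies $X,Y$ satisfy $c_X(\tau)\le c_Y(\tau)$ for every $\tau\in[d,n-1]$ then $r_X(\tau)\le r_Y(\tau)$ on that range, and hence $\min_\tau r_X(\tau)\le\min_\tau r_Y(\tau)$. Thus it suffices to show that, for $d\le\tau\le n-1$, the distributed-repair cost with an MBR code, $c_D^{\textrm{MBR}}(\tau)=\gamma_{\textrm{MBR}}(n-\tau)$ from \eqref{eq1}, is no larger than each of the three competing costs $c_C^{\textrm{MBR}}(\tau)$, $c_D^{\textrm{MSR}}(\tau)$, $c_C^{\textrm{MSR}}(\tau)$.

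First I would dispatch the two easy comparisons. For the same MBR code, \eqref{c_cent} gives $c_C^{\textrm{MBR}}(\tau)=\alpha_{\textrm{MBR}}(k+n-\tau-1)$, and since an MBR code has $\gamma_{\textrm{MBR}}=\alpha_{\textrm{MBR}}$ (see \eqref{mbr}) we get $c_C^{\textrm{MBR}}(\tau)-c_D^{\textrm{MBR}}(\tau)=\alpha_{\textrm{MBR}}(k-1)\ge 0$. For distributed repair across code families, $c_D^{\textrm{MSR}}(\tau)-c_D^{\textrm{MBR}}(\tau)=(\gamma_{\textrm{MSR}}-\gamma_{\textrm{MBR}})(n-\tau)$, and substituting \eqref{msr} and \eqref{mbr} yields $\gamma_{\textrm{MBR}}/\gamma_{\textrm{MSR}}=2(d-k+1)/(2d-k+1)\le 1$ because $k\ge 1$, so this difference is nonnegative as well. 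Neither of these uses the hypothesis on $d$.

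The one comparison that needs $d>\tfrac{n+k-1}{3}$ is distributed+MBR against centralized+MSR. Writing $m=n-\tau$, so $1\le m\le n-d$ over the regeneration regime, and using $\alpha_{\textrm{MSR}}=\mathcal{M}/k$ together with $\gamma_{\textrm{MBR}}=2\mathcal{M}d/\bigl(k(2d-k+1)\bigr)$, I expect a short manipulation (clearing the positive factor $2d-k+1$, which is positive since $d\ge k\ge 1$ forces $2d-k+1\ge d+1>0$) to give
$$c_C^{\textrm{MSR}}(\tau)-c_D^{\textrm{MBR}}(\tau)=\frac{\mathcal{M}(k-1)}{k(2d-k+1)}\bigl(2d-k+1-m\bigr).$$
This is nonnegative once $m\le 2d-k+1$ (for $k=1$ it vanishes), and since $m\le n-d$ it is enough that $n-d\le 2d-k+1$, i.e.\ $d\ge\tfrac{n+k-1}{3}$, which is exactly the hypothesis (with strict inequality to spare). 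Combining the three comparisons shows $c_D^{\textrm{MBR}}(\tau)$ is pointwise minimal on $[d,n-1]$, hence $\min_\tau r_D^{\textrm{MBR}}(\tau)$ is the smallest of the four minima, which is the claim.

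The hard part will be only the algebra in the last comparison: one must be careful that the binding case is $m=n-d$ (equivalently $\tau=d$), since that is what produces the stated bound on $d$, and one must keep track of the signs of the factors $k-1$ and $2d-k+1$ to justify the clearing of denominators. Everything else is bookkeeping with \eqref{msr}, \eqref{mbr}, \eqref{eq1} and \eqref{c_cent}, and the observation that a common positive denominator turns the whole problem into a comparison of costs rather than of cost rates, so Propositions~\ref{thm:prop1} and \ref{thm:prop3} (the explicit optimal thresholds) are not even needed here.
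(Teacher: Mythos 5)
Your proposal is correct and follows essentially the same route as the paper: the decisive comparison is distributed--MBR against centralized--MSR, which reduces to $n-\tau\le 2d-k+1$ and is bound at $\tau=d$ to give $d>\frac{n+k-1}{3}$, exactly as in the paper's proof. The only cosmetic difference is that you spell out the two intra-strategy comparisons explicitly, whereas the paper disposes of them by noting that $r_C$ depends only on $\alpha$ (so MSR is best for centralized) and $r_D$ only on $\gamma$ (so MBR is best for distributed).
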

\begin{proof}
Proof is provided in Appendix \ref{sec:Prop4}.
\end{proof}

We now prove that if $\tau^{*}$ lies between $k$ and $d$, using MSR codes with centralized repair is optimal.
\begin{prop}
	For $k \leq \tau^{*} < d$, the optimal repair strategy is given by centralized repair with MSR codes.
	\label{thm:prop5}
\end{prop}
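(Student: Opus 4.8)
The plan is to compare the four candidate costs that can arise when the optimal threshold lies in the regeneration-plus-reconstruction range $k \leq \tau^{*} < d$: distributed repair with MSR ($r_D(k)$ and $r_D(d)$) and centralized repair with MSR ($r_C(k)$ and $r_C(n-1)$), together with the MBR counterparts. Since $\tau^{*} \in [k,d)$, Propositions~\ref{thm:prop2} and~\ref{thm:prop3} tell us we are in the low departure-to-repair regime, so the relevant distributed cost is $r_D(k)$ (hybrid reconstruction$+$regeneration) and the relevant centralized cost is $r_C(k)$ (reconstruction at threshold $k$). First I would read off from Table~\ref{table:costTable} the closed forms of $r_D(k)$ and $r_C(k)$ for both MSR and MBR codes. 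Crucially, all four of these quantities share the common denominator $k(\lambda + \mu H_{n,k})$ (up to the code-type prefactor), and all share the common factor $\mathcal{M}\lambda\mu$, so the comparison reduces to comparing the numerators.

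The key step is then a purely algebraic inequality chain. For MSR, the numerator of $r_C(k)$ is proportional to $(n-1)\,k(d-k+1)$ while the numerator of $r_D(k)$ is proportional to $k(d-k+1)(d-k) + d(n-d)$; so centralized MSR beats distributed MSR whenever $(n-1)k(d-k+1) \leq k(d-k+1)(d-k) + d(n-d)$, which after cancellation I expect to reduce to something like $d(n-d) \geq k(d-k+1)(n-1-d+k)$ — an inequality that should hold for all admissible parameters $k \leq \tau^{*} < d \leq n-1$ (one can verify it holds with equality at the boundary $d=n-1$ and is slack otherwise). Next I would compare centralized MSR with centralized MBR: both have denominator $(2d-k+1)$ versus $k(2d-k+1)$ structure, and since MSR minimizes $\alpha$, the per-node storage term driving the reconstruction cost is smallest for MSR, so $r_C^{\mathrm{MSR}}(k) \leq r_C^{\mathrm{MBR}}(k)$ follows from $\mathcal{M}/k \le 2\mathcal{M}d/(2kd-k^2+k)$, i.e. $2d-k+1 \ge 2d$, i.e. $k \le 1$ — wait, this needs care; more precisely one compares $(n-1)/k \cdot \mathcal{M}/k$ against the MBR numerator and I expect MSR to win because reconstruction cost scales with $k\alpha = \mathcal{M}$ for MSR, which is the file size, the minimum possible for any code that reconstructs from $k$ fragments. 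Finally I would compare centralized MSR against distributed MBR ($r_D(k)$ for MBR), again via numerators over the shared denominator structure.

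The main obstacle I anticipate is the last comparison — centralized MSR versus distributed-hybrid MBR — because these do not share a denominator cleanly ($r_D^{\mathrm{MBR}}(k)$ has $k(2d-k+1)(\lambda + H_{n,k})$ while $r_C^{\mathrm{MSR}}(k)$ has $k(\lambda + \mu H_{n,k})$), so the harmonic sum $H_{n,k}$ and the $\lambda,\mu$ dependence do not simply cancel, and one may need to invoke that we are in the regime $\rho \le \rho_{\mathrm{low}}$ from Proposition~\ref{thm:prop2} to bound the cross terms. A cleaner route may be to observe that $r_C(k) \le r_D(k)$ holds \emph{within each fixed code type} (centralized reconstruction is always at least as cheap as hybrid reconstruction$+$regeneration at the same threshold $k$, since centralized avoids the extra $\gamma(n-d)$ regeneration traffic by having the leader distribute $\alpha$-symbol fragments directly), and then separately that MSR minimizes the centralized cost across code types because $c_C(\tau) = \alpha(k + n - \tau - 1)$ is monotone increasing in $\alpha$ and MSR achieves the minimum $\alpha = \mathcal{M}/k$. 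Composing these two monotonicities would give the result without ever needing the awkward cross-type, cross-strategy comparison; I would try this decomposition first and fall back to the brute-force numerator inequality only if the monotonicity-in-$\alpha$ argument has a gap (e.g. because $r_D(k)$'s dependence on $\alpha$ versus $\gamma$ is not obviously monotone for MBR where $\gamma = \alpha$).
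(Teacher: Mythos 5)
Your ``cleaner route'' at the end is essentially the paper's own proof: compare $r_C(\tau)$ and $r_D(\tau)$ at the \emph{same} threshold $\tau\in[k,d)$, observe that both share the denominator $\mu H_{n,\tau}+\lambda$ (the CTMC is strategy-independent), so the comparison collapses to the numerators $c_C(\tau)$ versus $c_D(\tau)$; then MSR minimizes the centralized cost because $c_C(\tau)=\alpha(k+n-\tau-1)$ is increasing in $\alpha$ and $\alpha_{\mathrm{MSR}}=\mathcal{M}/k$ is minimal. Composing the two monotonicities gives $r_C^{\mathrm{MSR}}(\tau)\le r_C^{X}(\tau)< r_D^{X}(\tau)$ for either code type $X$, which is exactly the paper's logic. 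Your anticipated ``main obstacle''---the mismatched denominators $k(2d-k+1)(\lambda+H_{n,k})$ versus $k(\lambda+\mu H_{n,k})$---is illusory: the missing $\mu$ in the MBR distributed entries of Table~\ref{table:costTable} is a typo, and by \eqref{rtau_d} and \eqref{rtau_c} every strategy at threshold $\tau$ has denominator $\mu H_{n,\tau}+\lambda$, so no cross-term bounding and no appeal to the $\rho\le\rho_{\textrm{low}}$ regime is needed.

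The genuine gap is that you never establish the step that does all the work, namely $c_C(\tau)=\alpha(k+n-\tau-1)\le k\alpha(d-\tau)+\gamma(n-d)=c_D(\tau)$ for $k\le\tau<d$. Your justification (``centralized avoids the extra $\gamma(n-d)$ regeneration traffic by distributing $\alpha$-symbol fragments directly'') is a heuristic, not an argument: the distributed hybrid performs $d-\tau$ reconstructions at $k\alpha$ each while the centralized scheme performs one reconstruction plus $n-\tau-1$ transmissions of $\alpha$, so neither cost dominates the other term-by-term. The paper closes this by invoking $\gamma\ge\alpha$ (true for both MSR and MBR), which lower-bounds $c_D(\tau)\ge k\alpha(d-\tau)+\alpha(n-d)$ and reduces the inequality to $k-1<(k-1)(d-\tau)$, i.e.\ $\tau<d$. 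Two smaller slips in your fallback plan: the centralized-MSR numerator over the common denominator $k(d-k+1)(\lambda+\mu H_{n,k})$ is $(n-1)(d-k+1)\mathcal{M}\lambda\mu$, not $(n-1)k(d-k+1)\mathcal{M}\lambda\mu$; and the $\alpha$ comparison you flagged resolves as $\alpha_{\mathrm{MBR}}/\alpha_{\mathrm{MSR}}=2d/(2d-k+1)\ge 1\iff k\ge 1$, so MSR always wins. With the $\gamma\ge\alpha$ step supplied, your decomposition coincides with the paper's proof.
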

\begin{proof}
Proof is provided in Appendix \ref{sec:Prop5}.
\end{proof}

\begin{figure*}[tb]
	\begin{center}
		\setlength{\tabcolsep}{-0.01in}
		\begin{tabular}{cccc}
			\includegraphics[height=1.6in,width=1.85in]{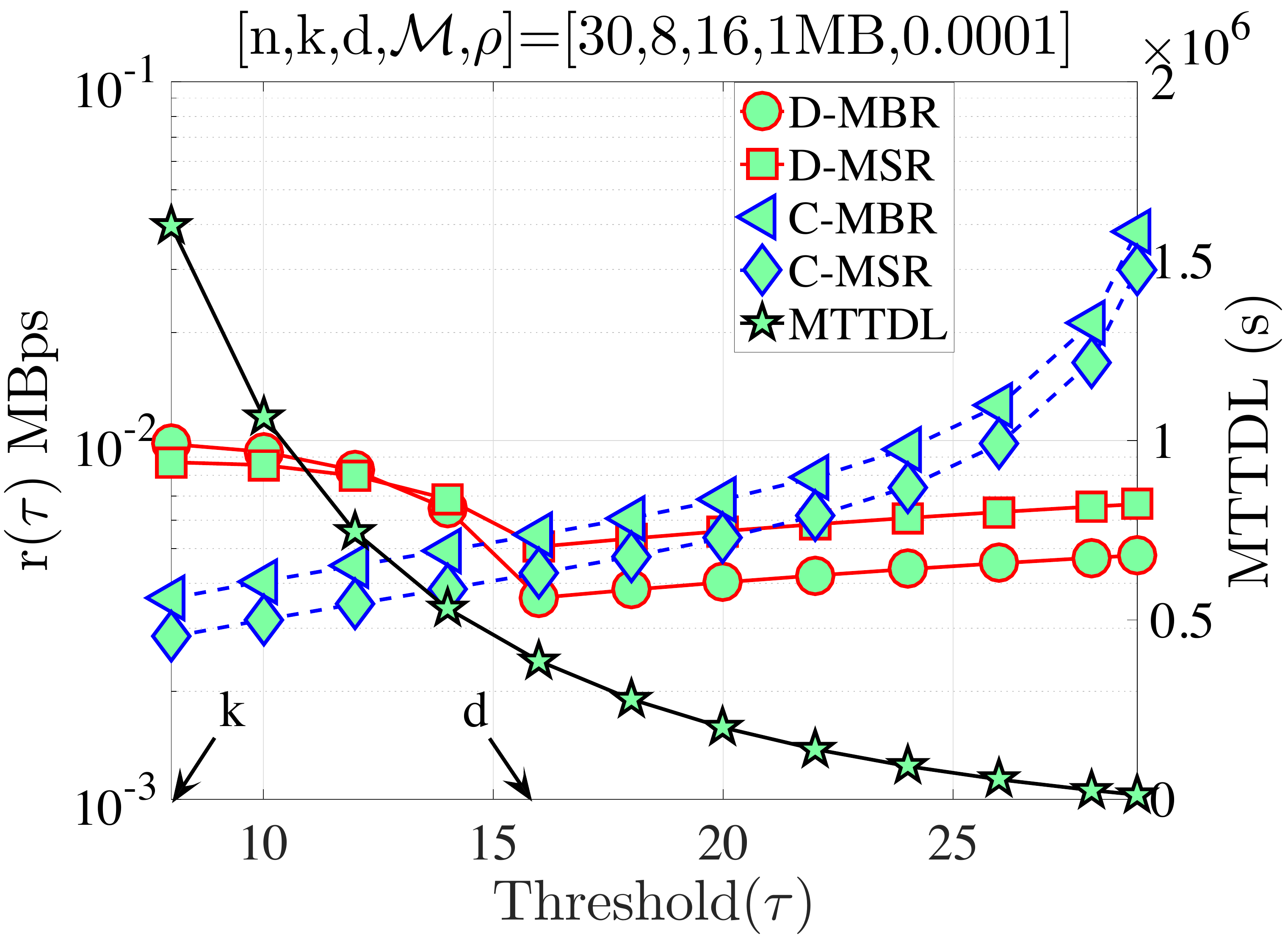} &
			\includegraphics[height=1.6in,width=1.85in]{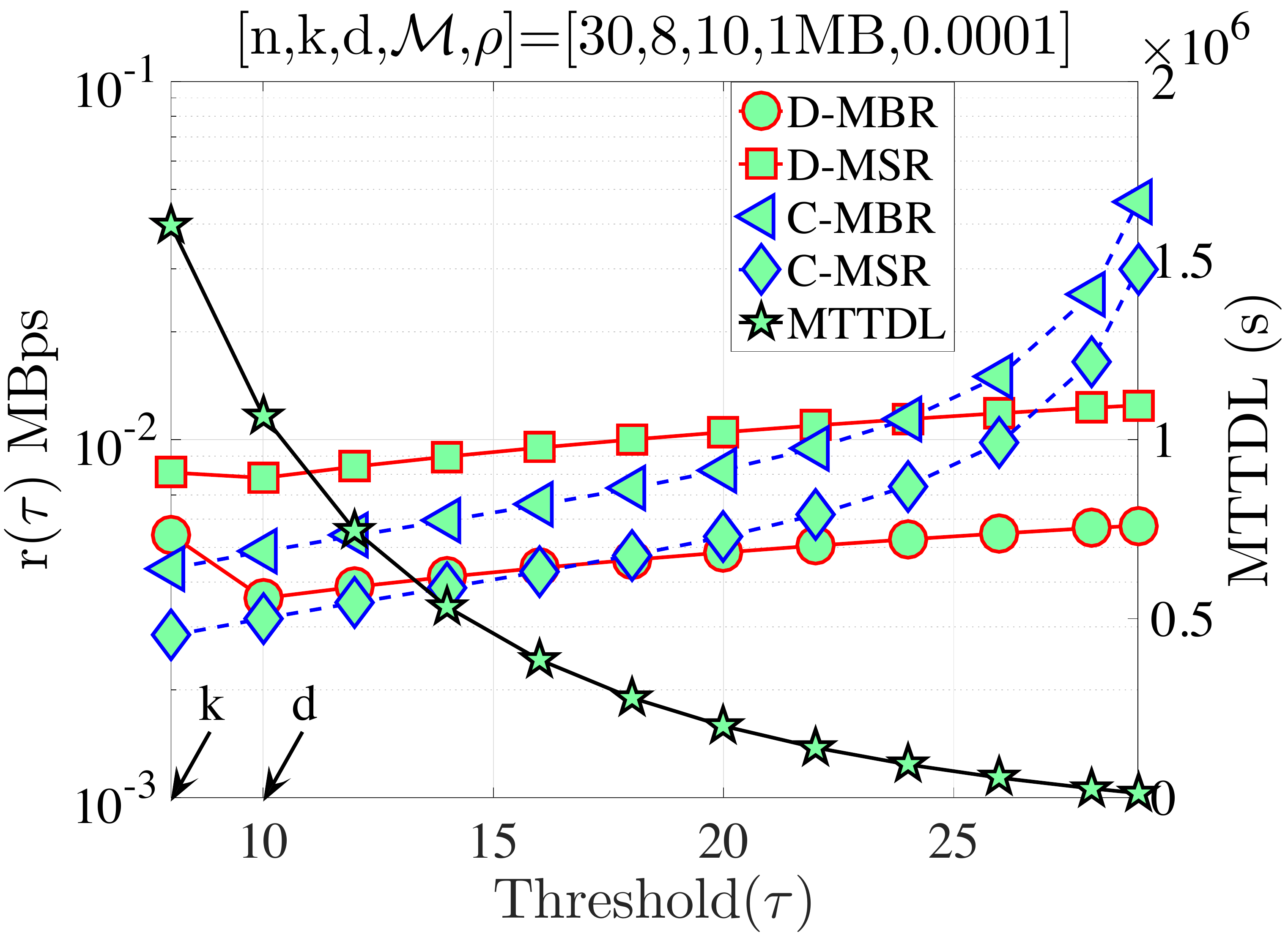} &
			\includegraphics[height=1.6in,width=1.85in]{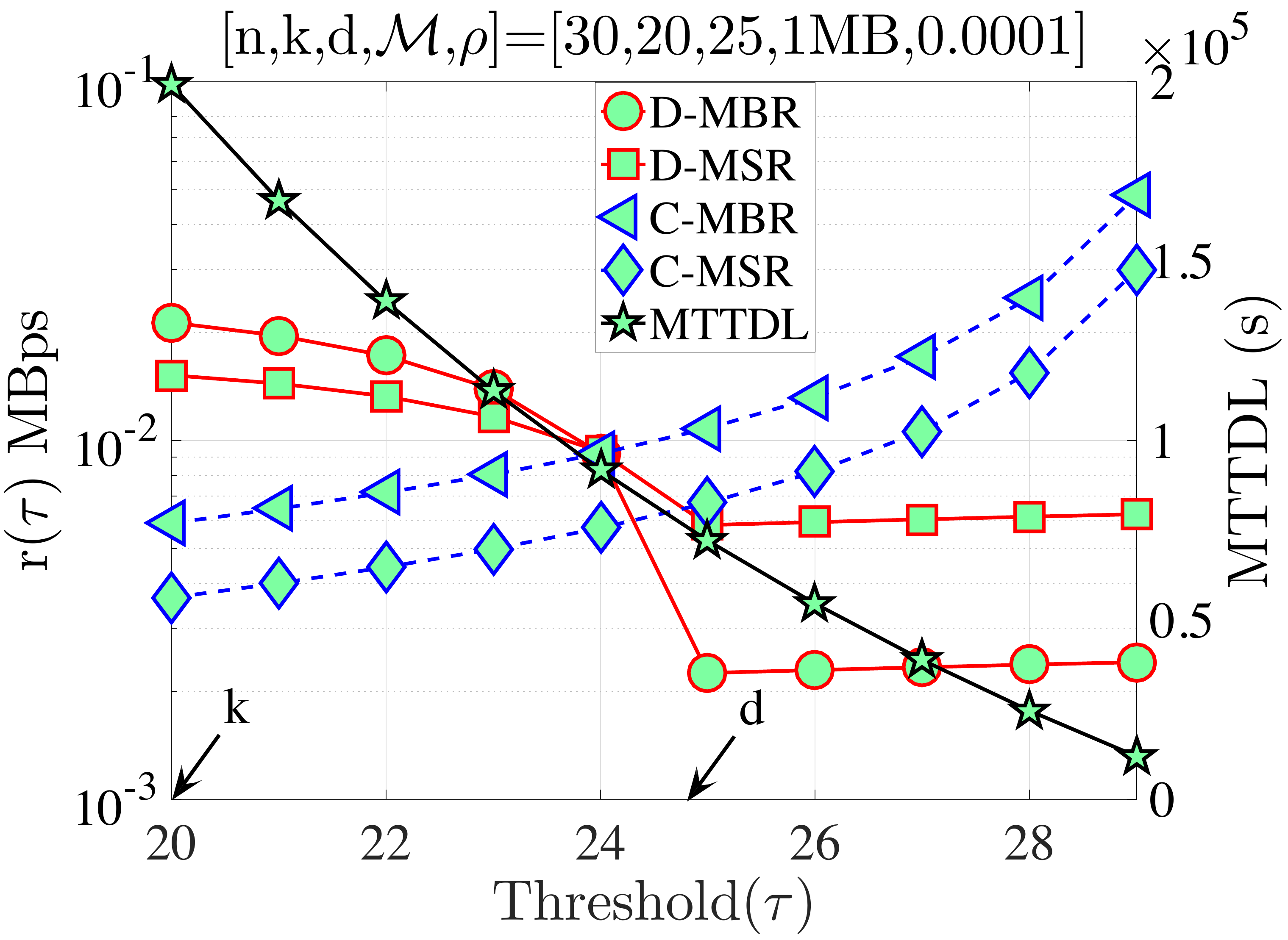} &
			\includegraphics[height=1.6in,width=1.85in]{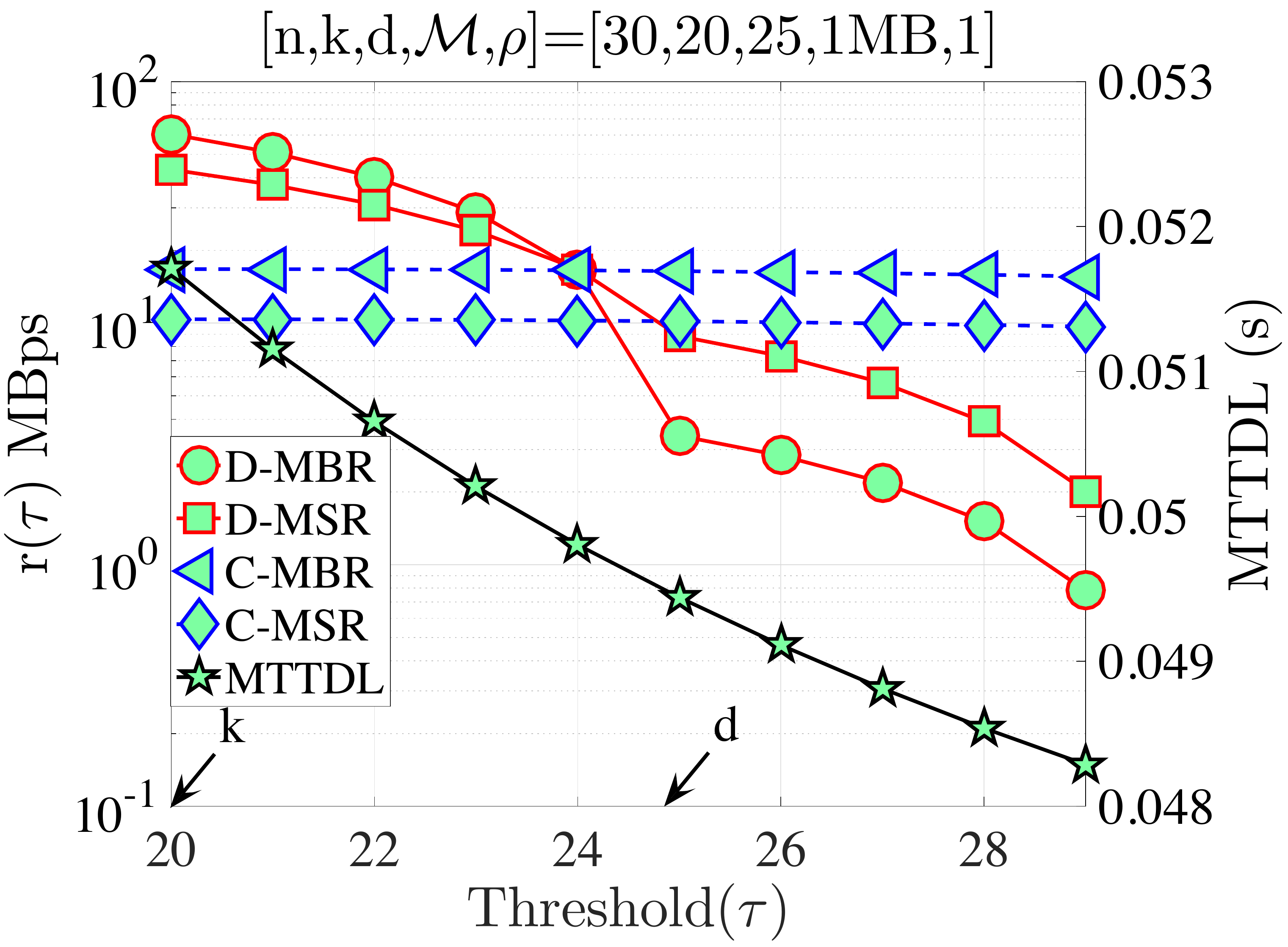}  \\
			(a) &(b) & (c) & (d) 
		\end{tabular}
	\end{center}
	\vspace{-0.1in}
	\caption{Cost $r(\tau)$ vs. repair threshold ($\tau$) for:  (a) $d > \frac{n+k-1}{3}$, (b) $d < \frac{n+k-1}{3}>$, (c) $\rho=10^{-4}$, (d)  $\rho=1$.}
\label{fig:comp_schemes2} 
	\vspace{-0.2in}
\end{figure*}


\subsection{Mean Time to Data Loss for Periodic Repairs}
\label{ssec:mttdl}
We now examine the \emph{Mean Time to Data Loss (MTTDL)} for the periodic threshold repair process. For our purposes, we consider that data is lost if the DSS transitions from state $\tau$ to state $\tau-1$ instead of state $n$. That is, if a node leaves the system before repairs are completed when initiated at state $\tau$, the repair process is abandoned and the system eventually reaches state $k-1$, at which data is lost. In this case, the file $F$ is reinstated at the mobile nodes by a central entity. Note that when $\tau > k$ repairs could be re-initiated at state $\tau-1$, because at least $k$ fragments remain available. We opted not to consider this option for the MTTDL calculation to capture the periodic nature of the threshold repair strategy. The MTTDL reflects the period of time at which the DSS oscillates  between states $n$ and $\tau$. The time to reach state $k-1$ assuming no repairs are attempted after state $\tau$ is given by:

\begin{prop}
	\label{thm:propMTTDL}
For a threshold-based repair strategy attempting regeneration at state $\tau$, the MTTDL is given by

\begin{equation}
\textrm{MTTDL} = \sum_{i=1}^{\infty} \Big( \frac{iH_{n,\tau}}{\lambda} + \frac{i-1}{\mu} +\frac{H_{\tau,k-1}}{\lambda} \Big)(1-p)^{(i-1)}p, 
\end{equation}
where $p=\frac{\tau\lambda}{\tau\lambda+\mu}$.
\end{prop}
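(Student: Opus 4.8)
The plan is to model the run to data loss as a sequence of i.i.d. "attempts," where each attempt consists of descending from state $n$ down to state $\tau$ and then making a single try at repair. Each attempt succeeds (system jumps back to $n$) with probability $1-p$ and fails (a node departs first, sending the system to $\tau-1$) with probability $p$, where at state $\tau$ the competing exponential clocks give $p = \tfrac{\tau\lambda}{\tau\lambda+\mu}$ by the standard race-of-exponentials argument. Conditioned on the $i$-th attempt being the first failure — an event of probability $(1-p)^{i-1}p$ — the total elapsed time is the sum of: (i) the time for $i$ full descents from $n$ to $\tau$, each of expected length $\sum_{j=\tau+1}^{n}\tfrac{1}{j\lambda} = \tfrac{H_{n,\tau}}{\lambda}$; (ii) the time for $i-1$ successful repairs, each of expected length $\tfrac{1}{\mu}$ (the last attempt has no completed repair); and (iii) after the final failed attempt, the time to cascade from $\tau-1$ down to $k-1$, which is $\sum_{j=k}^{\tau-1}\tfrac{1}{j\lambda} = \tfrac{H_{\tau,k-1}}{\lambda}$ in expectation, noting that after the abandoned repair the departure rate out of state $j$ is $j\lambda$.

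First I would make precise the decomposition of the time-to-loss $\mathcal{T}$ as $\mathcal{T} = \sum_{\ell=1}^{N}(\text{descent}_\ell) + \sum_{\ell=1}^{N-1}(\text{repair}_\ell) + (\text{final cascade})$, where $N$ is the (geometric) index of the first failed attempt. Then I would take expectations, using linearity together with Wald's identity (or simply conditioning on $N=i$ and summing) to handle the random number of terms; independence of the holding times from $N$ makes this routine. Summing the conditional expectations against the geometric weights $(1-p)^{i-1}p$ yields exactly
\[
\textrm{MTTDL} = \sum_{i=1}^{\infty}\Big(\frac{iH_{n,\tau}}{\lambda} + \frac{i-1}{\mu} + \frac{H_{\tau,k-1}}{\lambda}\Big)(1-p)^{i-1}p,
\]
which is the claimed formula. (One can further collapse the geometric sums via $\sum i(1-p)^{i-1}p = 1/p$ and $\sum (i-1)(1-p)^{i-1}p = (1-p)/p$ to get the closed form $\tfrac{H_{n,\tau}}{\lambda p} + \tfrac{1-p}{\mu p} + \tfrac{H_{\tau,k-1}}{\lambda}$, but the paper states it in unsummed form, so I would leave it there or mention this as a remark.)

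The main obstacle — really the only subtlety — is justifying that the descents and repairs across attempts are genuinely i.i.d. and independent of the attempt count $N$, so that the conditional-expectation-times-geometric-weight bookkeeping is valid. This follows from the strong Markov property of the CTMC restarted at state $n$ after each successful repair, together with the memorylessness of the exponential holding times; I would state this explicitly rather than belabor it. A second minor point worth a sentence is the boundary behavior when $\tau = k$: then $H_{\tau,k-1} = H_{k,k-1} = \tfrac{1}{k}$ and the "cascade" is the single jump from $k-1$ into the lost state, so the formula is still correct as written with the convention $H_{\tau,k-1} = \sum_{j=k}^{\tau-1}\tfrac1j$ (empty-sum aside, since here the sum has one term $j=k-1$... in fact $H_{\tau,k-1}=\sum_{j=k}^{\tau}\tfrac1j$ under the paper's indexing convention — I would pin down the index convention from \eqref{e[delta]} to match).
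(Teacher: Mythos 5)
Your proposal is correct and follows essentially the same route as the paper's proof: condition on the geometric index $i$ of the first failed repair attempt at state $\tau$ (failure probability $p=\tfrac{\tau\lambda}{\tau\lambda+\mu}$ from the race of exponentials), tally $i$ descents of expected length $H_{n,\tau}/\lambda$, $i-1$ repairs of expected length $1/\mu$, and the final cascade of expected length $H_{\tau,k-1}/\lambda$, then sum against the weights $(1-p)^{i-1}p$. Your added remarks on Wald's identity and the index convention for $H_{\tau,k-1}$ are harmless refinements of the same argument.
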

\begin{proof}
Proof is provided in Appendix \ref{App:MTTDL}.
\end{proof}

The MTTDL is a decreasing function of $\tau$. This is intuitive considering that the number of nodes that need to depart for reaching state $k-1$ increases with $\tau$. Moreover, the average time it takes to reach state $\tau$ from state $n$ increases with $\tau$. This indicates that the periodic repair of the DSS will on average last longer if  a lazy repair strategy is adopted.

\subsection{Numerical Examples}

In this section, we validate our theoretical results be providing numerical examples.  Fig.~~\ref{fig:comp_schemes2}(a) shows $r(\tau)$ when $d > \frac{n+k-1}{3}$ and $\rho= 10^{-4}$ . According to Proposition~\ref{thm:prop4}, for this combination of code parameters, a distributed repair strategy with MBR codes (D-MBR) achieves the minimum $r(\tau)$ for all $d \leq \tau^* \leq n-1$. The minimum occurs at $\tau^* = d$. 
Moreover, according to Proposition~\ref{thm:prop5}, centralized MSR codes (C-MSR) minimize $r(\tau)$ for   $k \leq \tau <d.$ This is verified in all plots of Fig.~\ref{fig:comp_schemes2}, for which the cost is minimized by the C-MSR strategy when $\tau^*= k$, if $\tau<d$.  In Fig.~\ref{fig:comp_schemes2}(b), we show $r(\tau)$ when $d < \frac{n+k-1}{3}$ and $\rho= 10^{-4}$. For this case, there is no one scheme with optimal cost for any value of $d \leq \tau \leq n-1$. For $\tau>16,$ D-MBR is optimal, whereas for $10 \leq \tau \leq 15$, C-MSR becomes optimal. C-MSR achieves the lowest overall cost at $\tau=k.$

We also studied the impact of $\rho$, when the code parameters are fixed to $(n=30,k=20,d=25)$. Fig.~\ref{fig:comp_schemes2}(c) shows the average cost per unit of time ($r(\tau)$) when $\rho=10^{-4}.$ For this $\rho$ regime, a lazy repair strategy with $\tau^{\ast} = d$ minimizes $r(\tau)$, with D-MBR codes achieving the lowest cost. On the other hand, eager repair becomes optimal for any $\rho> \rho_{\textrm{high}}$. This is observed in Fig.~\ref{fig:comp_schemes2}(d), in which the value of $\rho$ has been increased to one. D-MBR codes still remain the optimal option, however, the optimal repair threshold is now shifted to  $\tau^* = n-1$. Note that at the high $\rho$ regime, all codes exhibit the same behavior. The average cost per unit of time becomes a decreasing function of $\tau$.

Finally, on the right $y$-axis of the plots in Fig.~\ref{fig:comp_schemes2}, we show the MTTDL values for the given set of parameters. As expected, the MTTDL is an decreasing function of $\tau$ due to the corresponding increase in departure rate from state $\tau$ with the value of $\tau$. The MTTDL becomes impractical in the high $\rho$ regime, because nodes frequently leave area $\mathcal{A}$ before repairs can be completed.

\section{Codes with Cooperative Repair}
\label{sec:coopRep}
In the case of multiple node failures, regenerating the failed nodes individually is not optimal in terms of repair bandwidth. To regenerate multiple failed nodes more efficiently, the newcomers can also communicate with each other to lower the repair bandwidth, which is called \emph{cooperative repair}. Specifically, the newcomer nodes communicate to not only the existing live nodes but also each of the other newcomers for regeneration. In this section, we analyze examples of such codes and their performance for the Markov-model that is described earlier. 

\subsection{Cooperative Regenerating Codes}
\label{subsec:coopReg}
When multiple nodes are to be repaired simultaneously, in addition to contacting live nodes and downloading symbols from those, newcomers can also communicate between each other to complete the recovery process. Formally, assume that $t$ nodes are to be repaired. Each of the $t$ newcomer nodes can contact $d$ live nodes and download $\beta$ symbols as well as download $\beta'$ from each other. In this scenario, the repair bandwidth can be calculated as $\gamma=d\beta + (t-1)\beta'$. Such codes are studied in \cite{kermarrec2011repairing} (referred to also as coordinated regenerating codes) and the tradeoff between per node storage $\alpha$ and repair bandwidth $\gamma$ is analyzed. Two ends of tradeoff curve is named Minimum Storage Cooperative Regenerating (MSCR) and Minimum Bandwidth Cooperative Regenerating (MBCR). Accordingly, operating points are given as follows:
\begin{equation}
\left(\alpha_{\textrm{MSCR}}, \beta_{\textrm{MSCR}},  \beta_{\textrm{MSCR}}' \right)=\left(\frac{\mathcal{M}}{k},\frac{\mathcal{M}}{k\left(d-k+t\right)},\frac{\mathcal{M}}{k\left(d-k+t\right)}\right).
\label{mscr}
\end{equation}
For MSCR codes, $\gamma_{\textrm{MSCR}}=d\beta_{\textrm{MSCR}} + (t-1)\beta_{\textrm{MSCR}}'=\frac{\mathcal{M}(d+t-1)}{k\left(d-k+t\right)} \geq \frac{\mathcal{M}}{k}$, where the per-node storage is smaller than the repair bandwidth. On the other hand, MBCR codes provide the minimum repair bandwidth which operates at: 
\begin{equation}
\left(\alpha_{\textrm{MBCR}}, \beta_{\textrm{MBCR}},  \beta_{\textrm{MBCR}}' \right)=\left(\frac{(2d+t-1)\mathcal{M}}{k\left(2d-k+t\right)},\frac{2\mathcal{M}}{k\left(2d-k+t\right)},\frac{\mathcal{M}}{k\left(2d-k+t\right)} \right).
\label{mbcr}
\end{equation}
Note that for MBCR codes, we have $\alpha_{\textrm{MBCR}} = \gamma_{\textrm{MBCR}}=d\beta_{\textrm{MBCR}} + (t-1)\beta_{\textrm{MBCR}}'$.

We define $r_t(\tau)$ as the average repair cost for a system with repair threshold $\tau$ under cooperative repair using groups of nodes of size $t$ (similarly $c_t(\tau)$ for the cost). Since $n-\tau$ nodes need to be repaired, any cooperative regenerating codes with $t$ such that $t|n-\tau$ can be used in practice. In the following proposition, we compare the performance of cooperative regenerating codes at all possible $t$ values to find the value of $t$ that minimizes the average repair cost.

\begin{prop}
	The average repair cost of cooperative repair is a monotonically decreasing function of the cooperation group size $t$. That is, for two cooperation groups $t_1$ and $t_2$, with $t_1|n-\tau$ and $t_2|n-\tau$ and $t_1 < t_2$, it follows that $r_{t_1}(\tau) > r_{t_2}(\tau)$.
\end{prop}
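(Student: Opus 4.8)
The plan is to decouple the group-size dependence in $r_t(\tau) = c_t(\tau)/\textrm{E}[\Delta]$ and reduce the whole statement to a short monotonicity check on the cooperative repair bandwidth. First I would show that $\textrm{E}[\Delta]$ does not depend on $t$ at all: the node-departure process and the CTMC of Fig.~\ref{fig:mc_d} are governed only by $\lambda$ and the threshold $\tau$, and under the standing assumption that all newcomers are repaired in parallel at rate $\mu$ --- regardless of how they are partitioned into cooperating groups --- the recovery phase still returns the chain from state $\tau$ to state $n$ in expected time $1/\mu$. Hence $\textrm{E}[\Delta] = H_{n,\tau}/\lambda + 1/\mu$ exactly as in \eqref{e[delta]}, and it suffices to prove $c_{t_1}(\tau) > c_{t_2}(\tau)$ whenever $t_1 < t_2$ with $t_1\mid n-\tau$ and $t_2\mid n-\tau$.

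Second, I would write $c_t(\tau)$ in closed form. The $n-\tau$ newcomers are split into $(n-\tau)/t$ disjoint cooperating groups; within each group of $t$ newcomers every node downloads $\beta$ from each of $d$ live nodes and $\beta'$ from each of the other $t-1$ newcomers, for a per-node repair bandwidth $\gamma(t) = d\beta(t) + (t-1)\beta'(t)$ and a per-group total of $t\,\gamma(t)$. Summing over the $(n-\tau)/t$ groups gives $c_t(\tau) = (n-\tau)\,\gamma(t)$, so the proposition collapses to the claim that $\gamma(t)$ is strictly decreasing in $t$.

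Third, I would verify this for the two operating points of interest. Substituting the MSCR parameters \eqref{mscr},
\[
\gamma_{\textrm{MSCR}}(t) = d\beta_{\textrm{MSCR}} + (t-1)\beta_{\textrm{MSCR}}' = \frac{\mathcal{M}(d+t-1)}{k(d-k+t)} = \frac{\mathcal{M}}{k}\Big(1+\frac{k-1}{d-k+t}\Big),
\]
which is strictly decreasing in $t$ for $k\ge 2$ since $d-k+t$ increases with $t$. Substituting the MBCR parameters \eqref{mbcr}, $\gamma_{\textrm{MBCR}}(t) = \alpha_{\textrm{MBCR}}(t) = \frac{\mathcal{M}(2d+t-1)}{k(2d-k+t)} = \frac{\mathcal{M}}{k}\big(1+\frac{k-1}{2d-k+t}\big)$, decreasing in $t$ by the same argument. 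In both cases $c_{t_1}(\tau) = (n-\tau)\gamma(t_1) > (n-\tau)\gamma(t_2) = c_{t_2}(\tau)$, and dividing by the common $\textrm{E}[\Delta]$ yields $r_{t_1}(\tau) > r_{t_2}(\tau)$; in particular the extreme $t = n-\tau$ (full cooperation) gives the global minimum, consistent with the earlier claim in the introduction.

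I do not expect a genuine obstacle. The only points needing care are (i) the justification that $\textrm{E}[\Delta]$ is really $t$-independent, i.e., that enlarging the cooperating groups changes neither the departure dynamics nor the parallel repair time, and (ii) the degenerate replication case $k=1$, where $\gamma(t)$ is constant and the inequality becomes an equality --- so the strict statement tacitly assumes $k\ge 2$, which holds for any genuinely coded (non-replicated) file.
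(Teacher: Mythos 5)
Your proof is correct and follows essentially the same route as the paper: observe that $\textrm{E}[\Delta]$ is $t$-independent so only the total bandwidths $(n-\tau)\gamma(t)$ need comparing, then check monotonicity of $\gamma(t)$ at the MSCR and MBCR points, where your rewriting $\gamma(t)=\frac{\mathcal{M}}{k}\bigl(1+\frac{k-1}{d-k+t}\bigr)$ is algebraically equivalent to the paper's cross-multiplied condition $(t_2-t_1)(k-1)>0$. Your explicit remark that strictness fails in the degenerate case $k=1$ is a minor point the paper's proof glosses over.
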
 
\begin{proof}
	Since in both scenarios, repairs are performed after $n-\tau$ node departures and node repairs are performed parallel, it's enough to compare only the required bandwidths. First, assume MSCR case, then we want to show that
	\begin{equation}
	\frac{\mathcal{M}(d+t_1-1)(n-\tau)}{k(d-k+t_1)} > \frac{\mathcal{M}(d+t_2-1)(n-\tau)}{k(d-k+t_2)},
	\end{equation} 
	which is equivalent to
	\begin{equation}
	(t_2-t_1)(k-1)>0.
	\label{eq:prop_t1t2}
	\end{equation}
	Since $t_2 > t_1$ and we can conclude that $c_{t_1}(\tau) > c_{t_2}(\tau)$ therefore $r_{t_1}(\tau) > r_{t_2}(\tau)$. Similarly, for MBCR case, we need to have 
	\begin{equation}
	\frac{\mathcal{M}(2d+t_1-1)(n-\tau)}{k(2d-k+t_1)} > \frac{\mathcal{M}(2d+t_2-1)(n-\tau)}{k(2d-k+t_2)},
	\end{equation}
	from which we can obtain the same condition as \eqref{eq:prop_t1t2}. Henceforth, $c_{t_1}(\tau) > c_{t_2}(\tau)$ and $r_{t_1}(\tau) > r_{t_2}(\tau)$. Combining MSCR and MBCR cases, we can conclude that $r_{t_1}(\tau) > r_{t_2}(\tau)$. 
\end{proof}

\begin{remark}
	As a result of the above proposition, one can minimize the average repair cost by performing cooperative repairs with $t = n-\tau$. In other words, for the $n-\tau$ nodes that are to be repaired, the optimal cooperative regenerating code is with $n-\tau$, all nodes should cooperate at the same time. 
\end{remark}

In Fig.~\ref{fig:comp-coops}, we show the average repair cost for cooperative regenerating codes at all possible $t$ values for a given $n-\tau$. We observe that for the same $n-\tau$, if $t_1 < t_2$, then $r_{t_1}(\tau)>r_{t_2}(\tau)$ and the minimum is achieved when $t=n-\tau$. 

In the remaining of this section, we suppress the subindex $t$ from $r_t(\tau)$ since we established that $t=n-\tau$ minimizes the cost, i.e., $r(\tau)=r_{n-\tau}(\tau)$ and $c(\tau)=c_{n-\tau}(\tau)$. However, we may still need to distinguish $\gamma$ and $\alpha$ values under different cooperative regenerating codes. We denote the per-node storage for cooperative repairs with $n-\tau$, which results in the minimum average repair cost for the system with threshold $\tau$, by $\alpha_\tau$. Similarly we use $\gamma_\tau$ to denote the repair bandwidth at $\tau$.

\begin{figure*}[tb]
	\begin{center}
		\setlength{\tabcolsep}{-0.01in}
		\begin{tabular}{cc}
			\includegraphics[width=0.5\textwidth]{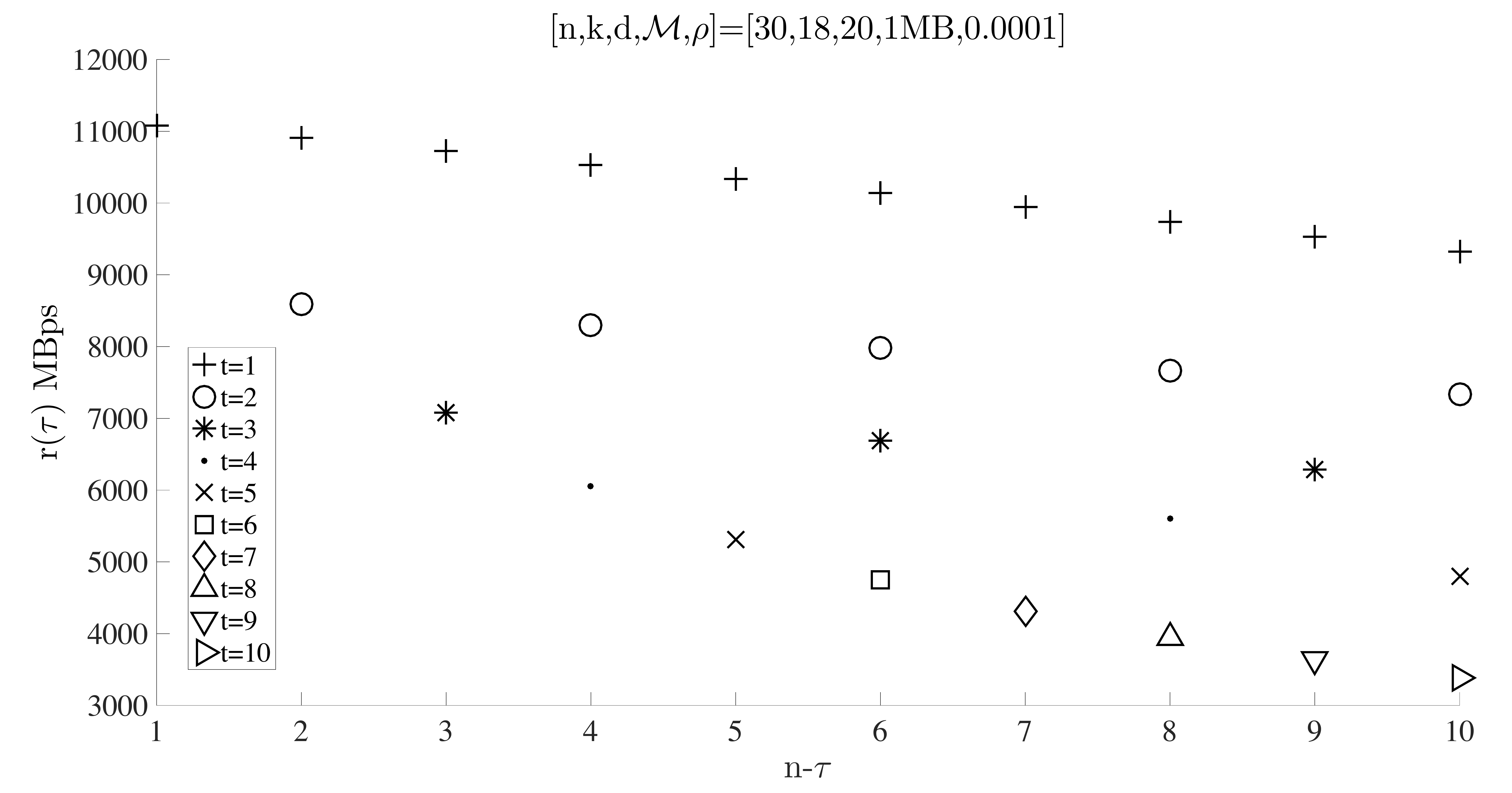} &
			\includegraphics[width=0.5\textwidth]{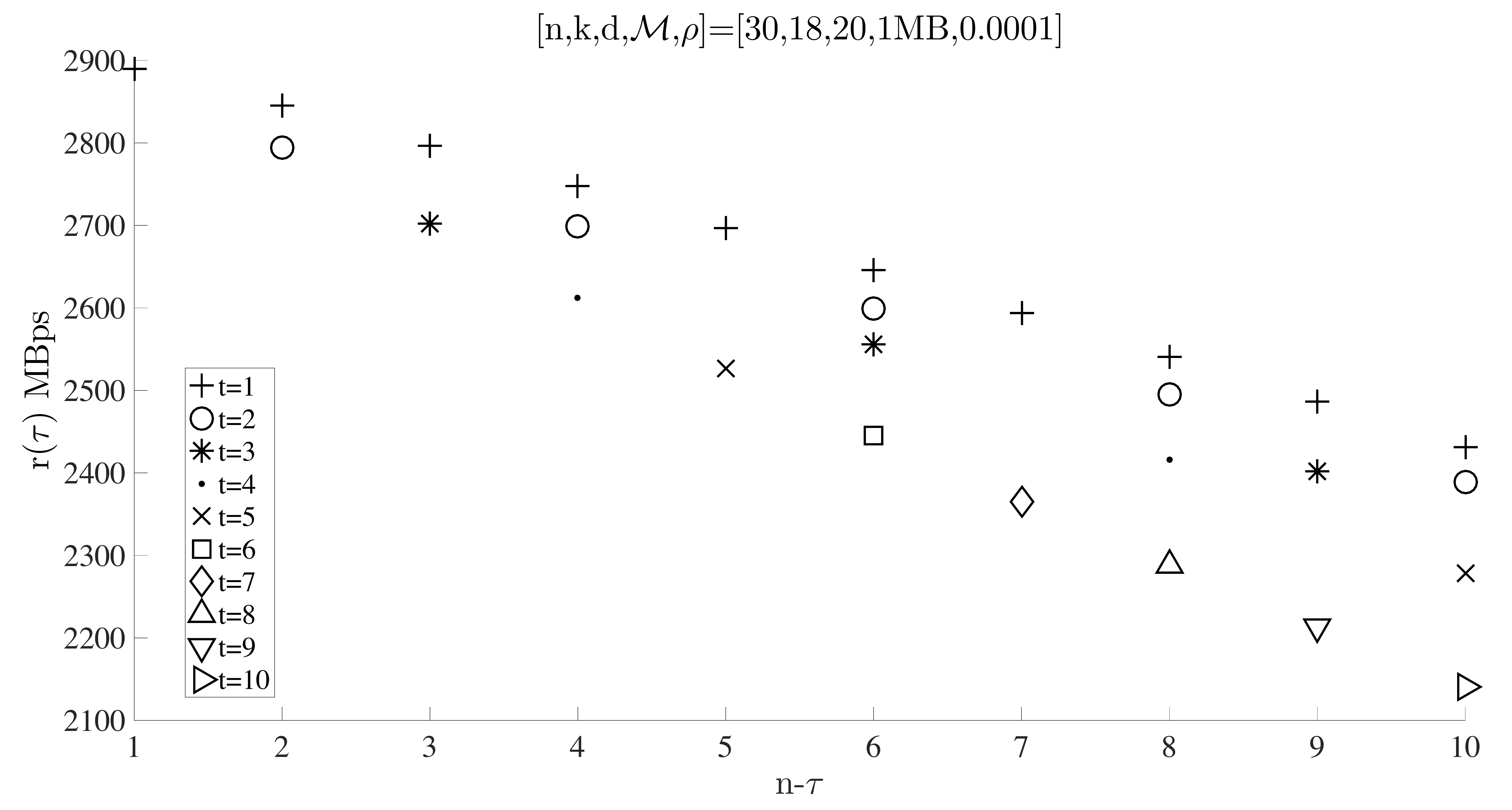} \\
			(a) & (b)\\
		\end{tabular}
	\end{center}
	\vspace{-0.1in}
	\caption{Cost $r(\tau)$ vs. number of nodes to be repaired ($n-\tau$) for:  (a) MSCR, (b) MBCR.}
	\label{fig:comp-coops} 

\end{figure*}

Note that, cooperative regenerating codes, it is required that $d+n-\tau \leq n$. In other words, there should be at least $d$ live nodes when repairs are initiated. Otherwise, it is not possible to regenerate $n-\tau$ nodes from d live nodes. Henceforth, the repair cost $c(\tau)$ and the repair cost per time for cooperative codes are as follows:

\begin{equation}
	c(\tau) = \gamma_{\tau}(n-\tau), \quad r(\tau) = \frac{c(\tau)}{\textrm{E}[\Delta]} = \frac{\lambda \mu(\gamma_{\tau} (n-\tau))}{\mu H_{n,\tau} + \lambda}, \quad \mbox{  if  } \tau \geq d.
\end{equation}

The minimum $r(\tau)$ with respect to $\tau$ does not have a closed-form analytical expression. In Section~\ref{subsec:NumRes}, we present numerical results to study the change of $r(\tau)$ with $\tau$ and determine the optimal repair threshold $\tau$ that minimizes $r(\tau)$ for distributed cooperative repair.

\subsection{Centralized Repair of Multiple Node Failures}

Centralized repair of multiple node repairs is introduced in \cite{rawal2016centralized}. Using this model, a dedicated node among the $t$ newcomers downloads $\beta$ from any $d$ live nodes such that it can repair multiple node failures of size $t$. Such codes can be used in the centralized repair process that is proposed in Section~\ref{sec:repair_strategies} when we set $t = n-\tau$. Rawat \etal in \cite{rawal2016centralized} characterize the tradeoff between per-node storage and repair bandwidth for this centralized repair model. Accordingly, the following operation points are derived for minimum storage multi-node regeneration (MSMR) and minimum bandwidth multi-node regeneration (MBMR):

\begin{equation}
\left(\alpha_{\textrm{MSMR}}, \gamma_{\textrm{MSMR}} \right)=\left(\frac{\mathcal{M}}{k},\frac{\mathcal{M}d(n-\tau)}{k\left(d-k+n-\tau\right)}\right).
\end{equation}

Let $k \textrm{ mod} (n-\tau)=b$. If $H_{b} \geq \binom{\beta}{n-\tau}\left[ b(\frac{2d+n-\tau-1}{2})-\binom{b}{2}\right]$ (where $H_b$ denotes entropy of information stored on $b$ nodes), then  
\begin{equation}
\left(\alpha_{\textrm{MBMR}}, \gamma_{\textrm{MBMR}} \right)=\left(\frac{\mathcal{M}2d}{k\left(2d-k+n-\tau\right)},\frac{\mathcal{M}2d(n-\tau)}{k\left(2d-k+n-\tau\right)}\right).
\end{equation}

Under centralized repair model discussed here, a dedicated node first downloads $\gamma=d\beta$ and then distributes $\alpha$ to remaining $n-\tau-1$. The difference between these codes and the earlier proposed method for centralized repair in Section~\ref{subsec:centralized} is that the dedicated node may not need to download the whole file. Therefore, we have the following repair cost

\begin{equation}
c_C(\tau) = \gamma + \alpha(n-\tau-1),
\end{equation}
from which one can obtain

\begin{equation}
r_C(\tau) = \frac{c_C(\tau)}{\textrm{E}[\Delta]}=\frac{\lambda\mu(\gamma + \alpha(n-\tau-1))}{\mu H_{n,\tau} + \lambda}.
\label{eq:r_tau_d_mr}
\end{equation}

In order to find the optimal threshold that minimizes the average repair cost, we need to find the minimum value of $r_C(\tau)$. We can replace $H_{n,\tau}$ with its approximation, $\ln (\frac{n}{\tau})$, and take the derivative with respect to $\tau$. Note that both $\gamma$ and $\alpha$ depend on $\tau$ and there is no tractable analytical solution for $\tau$ that minimizes $r_C(\tau)$. However, we can still analyze \eqref{eq:r_tau_d_mr} numerically with respect to $\tau$ and observe the optimal threshold from numerical results.

\begin{remark}
	In this section, we analyzed different cooperative codes that are suitable for mobile clouds. For both scenarios, the problem of finding $\tau$ for which $r(\tau)$ is minimized does not have a closed-form solution due to the dependence of $\alpha$ and $\gamma$ on $\tau$. We, therefore, resort to the numerical analysis of the optimal threshold.  Next, we present numerical results instead, which are presented below.
	Note that, for the regenerating codes that were analyzed in Section~\ref{sec:numerical}, $\alpha$ and $\gamma$ do not depend on the threshold $\tau$.
\end{remark}

\subsection{Numerical Results}
\label{subsec:NumRes}
We study the performance of cooperative codes with $n=30, d=25$ and $k=19$ under different $\rho$ regimes. 
In Fig.~\ref{fig:comp_schemes3}, we compare the codes studied in Sections~\ref{sec:repair_strategies} and \ref{sec:coopRep} for different $\rho$ regimes. We first compare regenerating codes vs. cooperative regenerating codes for the distributed repair scenario. Note that we focus only on $d \leq \tau \leq n-1$ because at least $d$ live nodes must exist for cooperation (see Section~\ref{subsec:coopReg}). We observe that cooperative regenerating codes always have lower cost than regenerating codes for all values of $\rho$. Additionally, the gap between the cost of D-MBR and D-MBCR is much smaller than the gap between the cost of D-MSR and D-MSCR. Furthermore, we can observe two opposing regimes: In Fig.~\ref{fig:comp_schemes3}(a), the optimal cost is at $\tau = d$, whereas in Fig.~\ref{fig:comp_schemes3}(b), the cost is minimized at $\tau = n-1$. We also compare the centralized regenerating codes to the centralized repair of multiple node departures. As expected, since in the latter scheme one does not need the whole file for file reconstruction at the dedicated node, centralized repair of multiple node departures results in lower repair cost. At $\tau = n-1$, in Fig.~\ref{fig:comp_schemes3}(d) average repair cost is minimized for all schemes, on the other hand we observe different optimum $\tau$ values for different coding schemes in Fig.~\ref{fig:comp_schemes3}(c). Finally, we compare all schemes in Fig.~\ref{fig:comp_schemes3}(e)-(f) for different values of $\rho$ for completeness. It is observed that the centralized repair of multiple node departures model discussed in this section approaches to the distributed repair model in Section~\ref{sec:repair_strategies} as $\tau$ approaches $n$ and diverges from the centralized repair model in Section~\ref{sec:repair_strategies}. The reason for this behavior is that the dedicated node does not need to download the whole file now (as opposed to the centralized repair model in Section~\ref{sec:repair_strategies}, which incurs high average repair cost for large $\tau$).

	
\begin{figure*}
	\begin{center}
		\setlength{\tabcolsep}{-0.01in}
		\begin{tabular}{cc}
			\includegraphics[height=2.4in,width=2.8in]{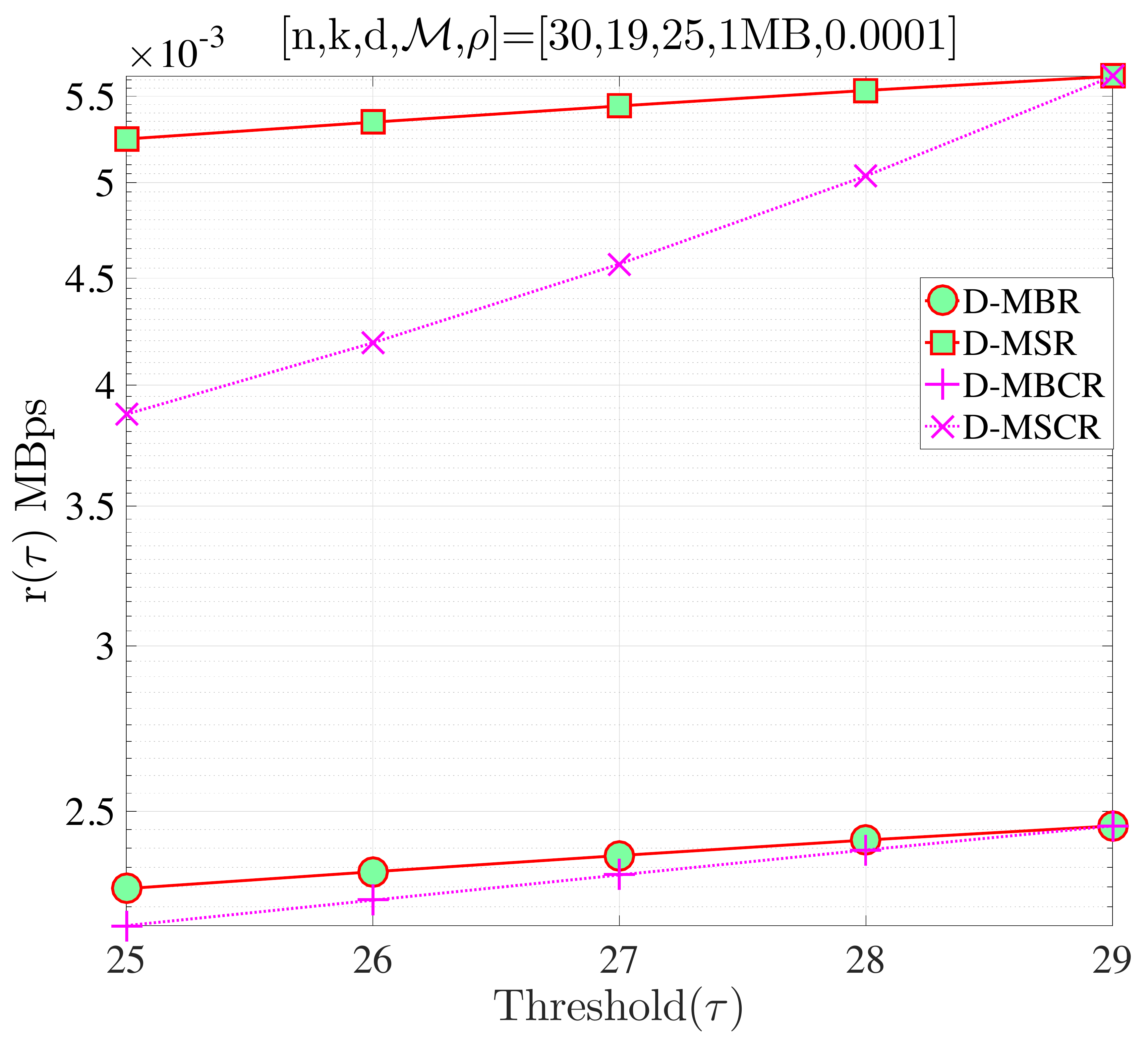} &
			\includegraphics[height=2.4in,width=2.8in]{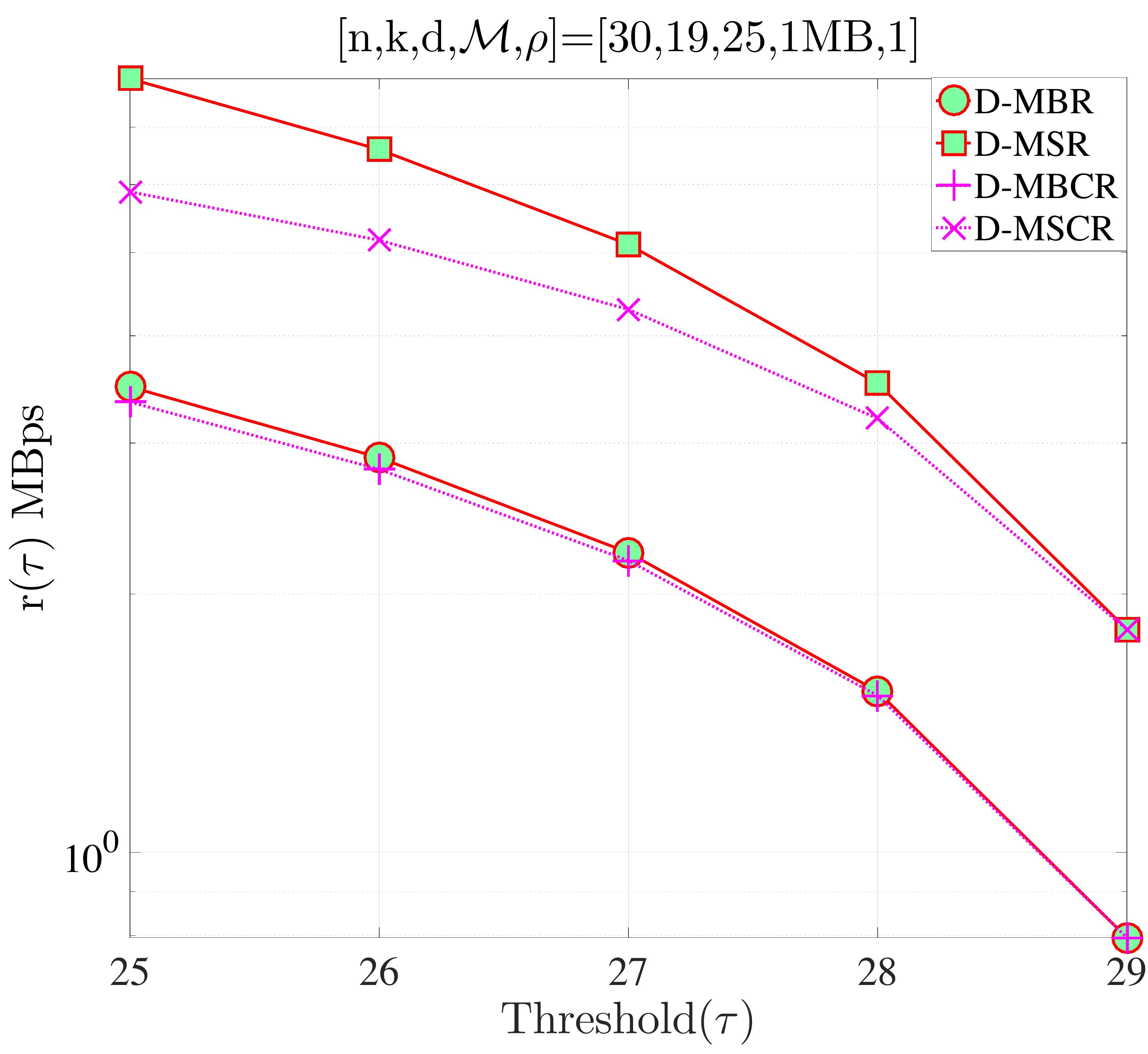}  \\
			(a) & (b) \\
			\includegraphics[height=2.4in,width=2.8in]{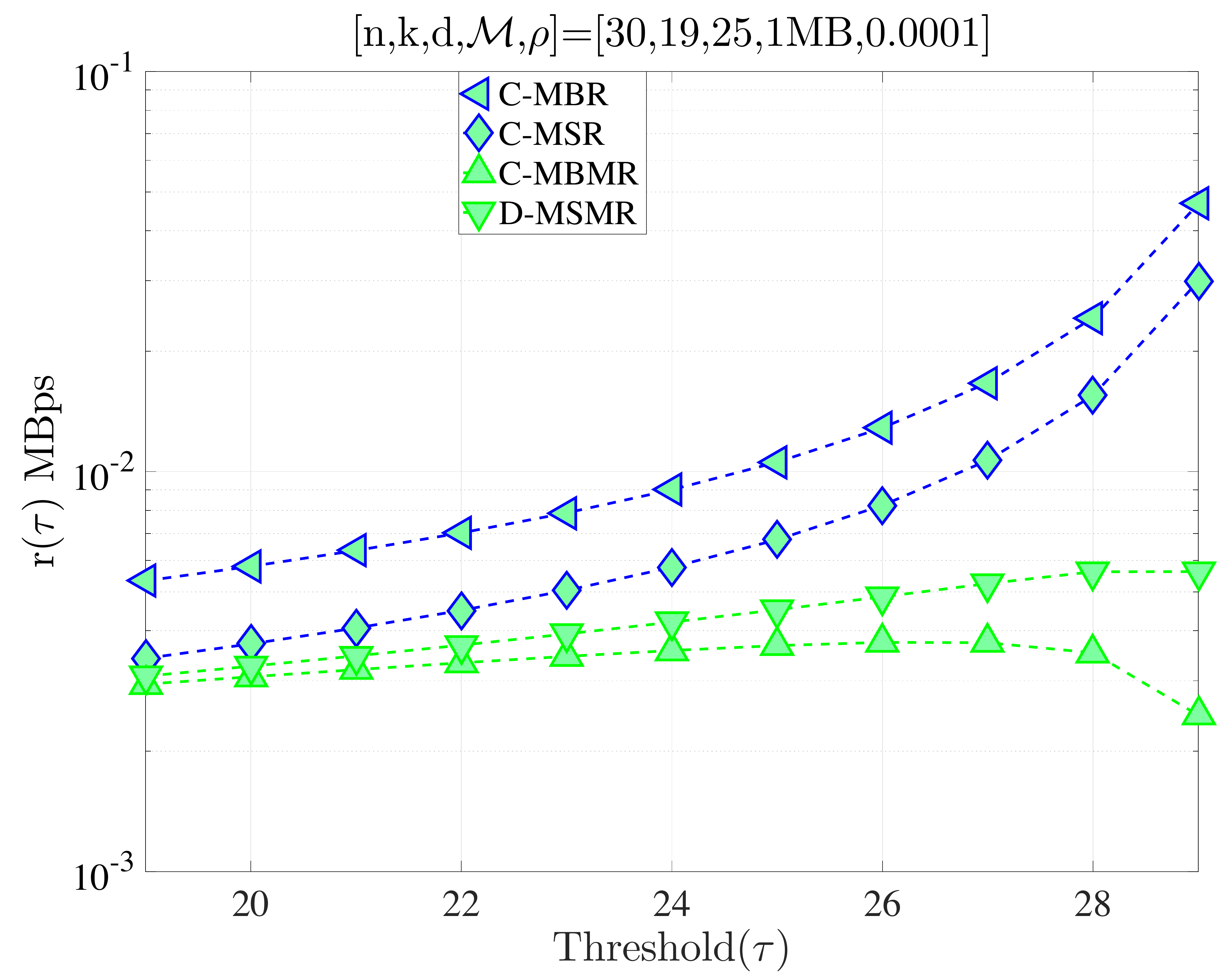} &
			\includegraphics[height=2.4in,width=2.8in]{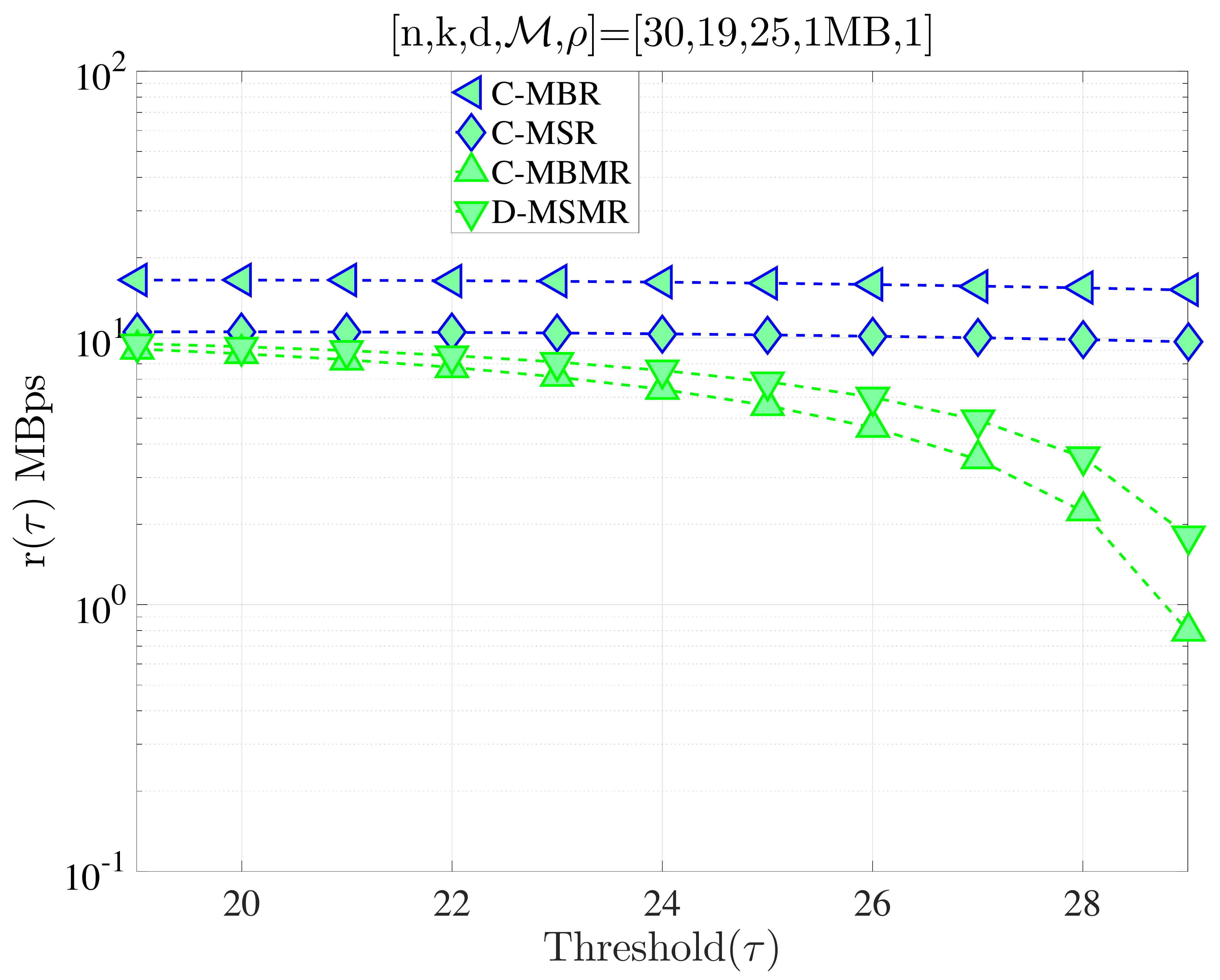} \\
			(c) & (d) \\
			\includegraphics[height=2.4in,width=2.9in]{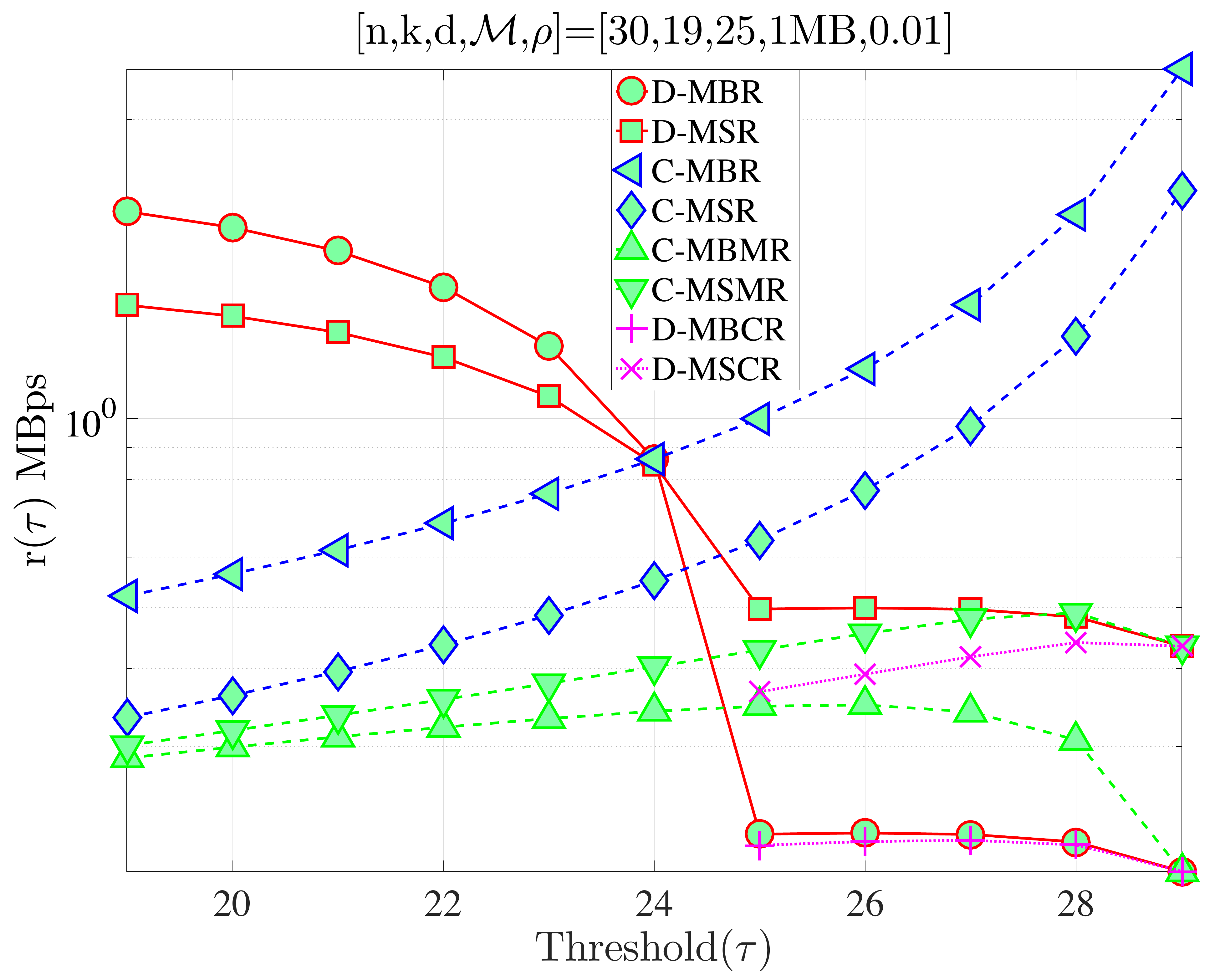} &
			\includegraphics[height=2.4in,width=2.9in]{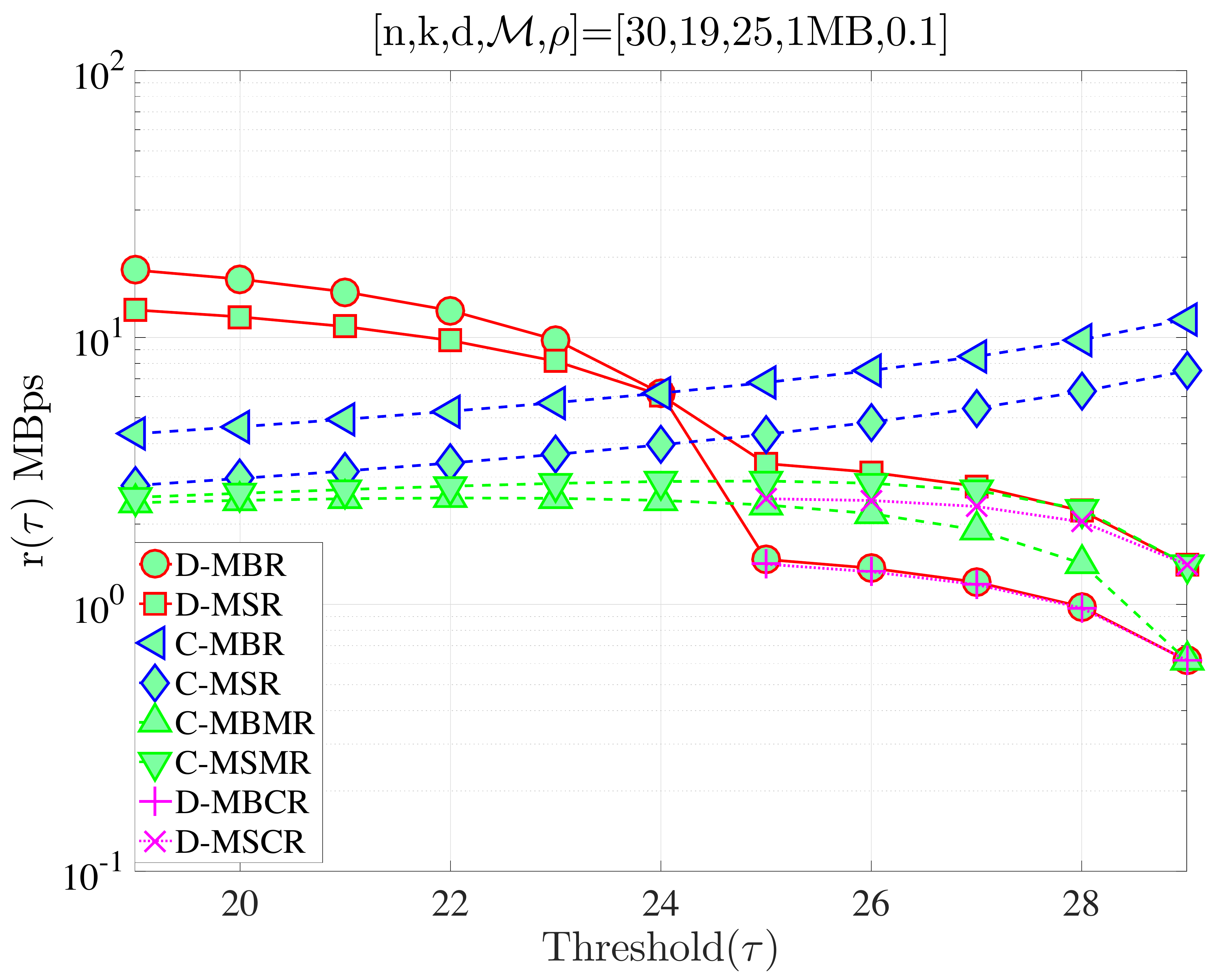} \\
			(e) & (f) 
		\end{tabular}
	\end{center}
	\vspace{-0.1in}
	\caption{Cost $r(\tau)$ vs. repair threshold ($\tau$) for: (a) distributed regenerating codes vs. cooperative regenerating codes when $\rho = 0.0001$, (b) distributed regenerating codes vs. cooperative regenerating codes when $\rho = 1$ (c) centralized regenerating codes vs. centralized repair of multiple node failures when $\rho = 0.0001$, (d) centralized regenerating codes vs. centralized repair of multiple node failures when $\rho = 1$, (e) all schemes when $\rho = 0.01$, (f)  all schemes when $\rho = 0.1$.}
\label{fig:comp_schemes3} 
	\vspace{-0.2in}
\end{figure*}

\section{A Repair Process Analogous to the number of repaired nodes}
\label{sec:dependent_tau}

In the analysis presented in Sections \ref{sec:numerical}-\ref{sec:coopRep}, we have assumed that the repair time is exponentially distributed with parameter $\mu$, irrespective of the number of nodes to be repaired. This model is mathematically tractable and we are able to find analytical results on the optimal repair threshold. In this section, we consider a revised model in which the repair time is analogous to the number of nodes that need to be repaired. Specifically, we model the repair process as the maximum of $n-\tau$ exponential random variables, each with rate $\mu$. In other words, when the repair process is initiated, one can consider starting $n-\tau$ exponential clocks, each with rate $\mu$. The repair process ends when all the clocks end. We note that the maximum value of such clocks is not exponentially distributed (as opposed to the minimum of such clocks), however, its expected value is known, which is enough for the purpose of finding average repair cost. Let $T_i^r$ denote the repair time of $i^{\textrm{th}}$ newcomer node, then we have the following \cite{Eisenberg:On08}

\begin{equation}
\textrm{E}[\max{(T_1^r, \dots, T_{n-\tau}^r)}] = \sum_{i=1}^{n-\tau} \frac{1}{i\mu} = \frac{H_{n-\tau,0}}{\mu}.
\end{equation}
The expected time between two instances of fully operational system with $n$ live nodes is given by

\begin{equation}
\textrm{E}[\Delta] = \frac{H_{n,\tau}}{\lambda} +\frac{H_{n-\tau,0}}{\mu},
\label{e[delta]_2}
\end{equation}
where the first term is the expected time from state $n$ to state $\tau$, and the second term is the expected time  from state $\tau$ back to state $n$. Accordingly, the resulting average repair cost for the distributed repair case is
\begin{equation}
r_D(\tau) = \frac{c_D(\tau)}{\textrm{E}[\Delta]} =
\begin{cases}
\frac{\lambda \mu(k\alpha(d-\tau)+\gamma (n-d))}{\mu H_{n,\tau} + \lambda H_{n-\tau,0}}, & \textrm{if } \tau < d \\
\frac{\lambda \mu(\gamma (n-\tau))}{\mu H_{n,\tau} + \lambda H_{n-\tau,0}}, & \textrm{if } \tau \geq d. \\
\end{cases}
\label{rtau_d_2}
\end{equation}
Note that for centralized repair, a dedicated newcomer node first downloads the file, and then distributes symbols to the remaining $n-\tau-1$ newcomer nodes. Therefore, the expected repair time is given by $\frac{1}{\mu}+\frac{H_{n-\tau-1}}{\mu}$, where we have first one clock with exponential rate of $\mu$, followed by a maximum of $n-\tau-1$ clocks with rate $\mu$. Accordingly, we obtain 
\begin{equation}
r_C(\tau) = \frac{c_C(\tau)}{\textrm{E}[\Delta]} = \frac{\lambda \mu \alpha (k+n-\tau-1)}{ \mu H_{n,\tau} + \lambda (1 + H_{n-\tau-1,0})}. 
\end{equation}
The optimal $\tau$ which minimizes the above equation is difficult to track due to the complexity of the formula (due to $\lambda H$ in the denominator). Instead, we perform numerical analysis of the average repair cost with respect to the threshold later in this section. 

In the context of this model, we next focus on MTTDL. Note that, if we start $(n-\tau)$ clocks, the probability that no data loss occurs within a cycle, denoted by $1-p$, can be calculated as $1-p=\Pr(T_{\tau-1}>T_i^r, \forall i \in 1,\dots,n-\tau)$. In other words, the exponential random variable with rate $\tau \lambda$ should be greater than all $n-\tau$ exponential random variables with rate $\mu$. Since we have i.i.d. exponential random variables, $1-p=(\Pr(T_{\tau-1}>T_1^r))^{(n-\tau)}=(\frac{\mu}{\tau\lambda+\mu})^{(n-\tau)}$. Accordingly, we have 

\begin{equation}
\textrm{MTTDL} = \sum_{i=1}^{\infty} \Big( \frac{iH_{n,\tau}}{\lambda} + \frac{(i-1)H_{n-\tau,0}}{\mu} +\frac{H_{\tau,k-1}}{\lambda} \Big)(1-p)^{(i-1)}p, 
\label{eq:MTTDL_clock}
\end{equation}
where $p=1-(\frac{\mu}{\tau\lambda+\mu})^{(n-\tau)}$. Note that the above equation is for the calculation of MTTDL for distributed repair. For centralized repair, the random variable with rate $\tau \lambda$ should be larger than sum of two exponential random variables with rate $\mu$, which is a gamma distribution since first a dedicated node downloads the whole file and then it repairs the other nodes. Henceforth, we did not perform MTTDL analysis for centralized repair. \footnote{The resulting equation is not in compact form, thus we omit this analysis in this text.}

In Fig.~\ref{fig:max_clock-exp}, we compare how the revised model affects the average repair cost relative to the simplified repair model used in Section~\ref{sec:repair_strategies}, in which all nodes are repaired under the same clock. We denote the values calculated within this section by appending $\tau$ at the end, i.e., $\textrm{D-MBR-}\tau$, to specify that the repair process that takes into account the number of nodes to be repaired (i.e., $n-\tau$). It can be observed that changing the model does not affect the behavior of the $r(\tau)$ curves substantially. We observe in the modified model that average cost is decreased slightly. Even though we have the same costs in both cases, the expected times to complete the repair process are different. Specifically, in Sections~\ref{sec:repair_strategies} and \ref{sec:numerical}, it takes $\frac{1}{\mu}$ time to finish the repair process. On the other hand, in the modified model, we change this value to maximum of $n-\tau$ exponential random variables, each with mean $\frac{1}{\mu}$ and this maximum value is larger than $\frac{1}{\mu}$ (unless $\tau = n-1 $). Therefore, it takes longer to complete repairs in the modified model, which results in smaller values of average repair cost. On the other hand, in  Fig.~\ref{fig:mttdl}, a different behavior is observed for the MTTDL. In the low $\rho$ regime, the MTTDL decreases with $\tau$ for both single clock and the maximum of multiple clock models. However, in the high $\rho$ regime, the increase in $\tau$ decreases MTTDL for the single clock model, whereas the MTTDL for the multiple clocks model increases with $\tau$. This is because in \eqref{eq:MTTDL_clock}, $p$ converges to 1 as we decrease $\tau$ in the high $\rho$ regime, which reduces \eqref{eq:MTTDL_clock}. On the other hand, in the low $\rho$ regime, $p$ converges to zero as we increase $\tau$, which reduces \eqref{eq:MTTDL_clock}. Finally, the model discussed in this section results in lower MTTDL values compared to the previous one.

\begin{figure*}[tb]
	\begin{center}
		\setlength{\tabcolsep}{-0.01in}
		\begin{tabular}{cc}
			\includegraphics[height=2.8in,width=3.5in]{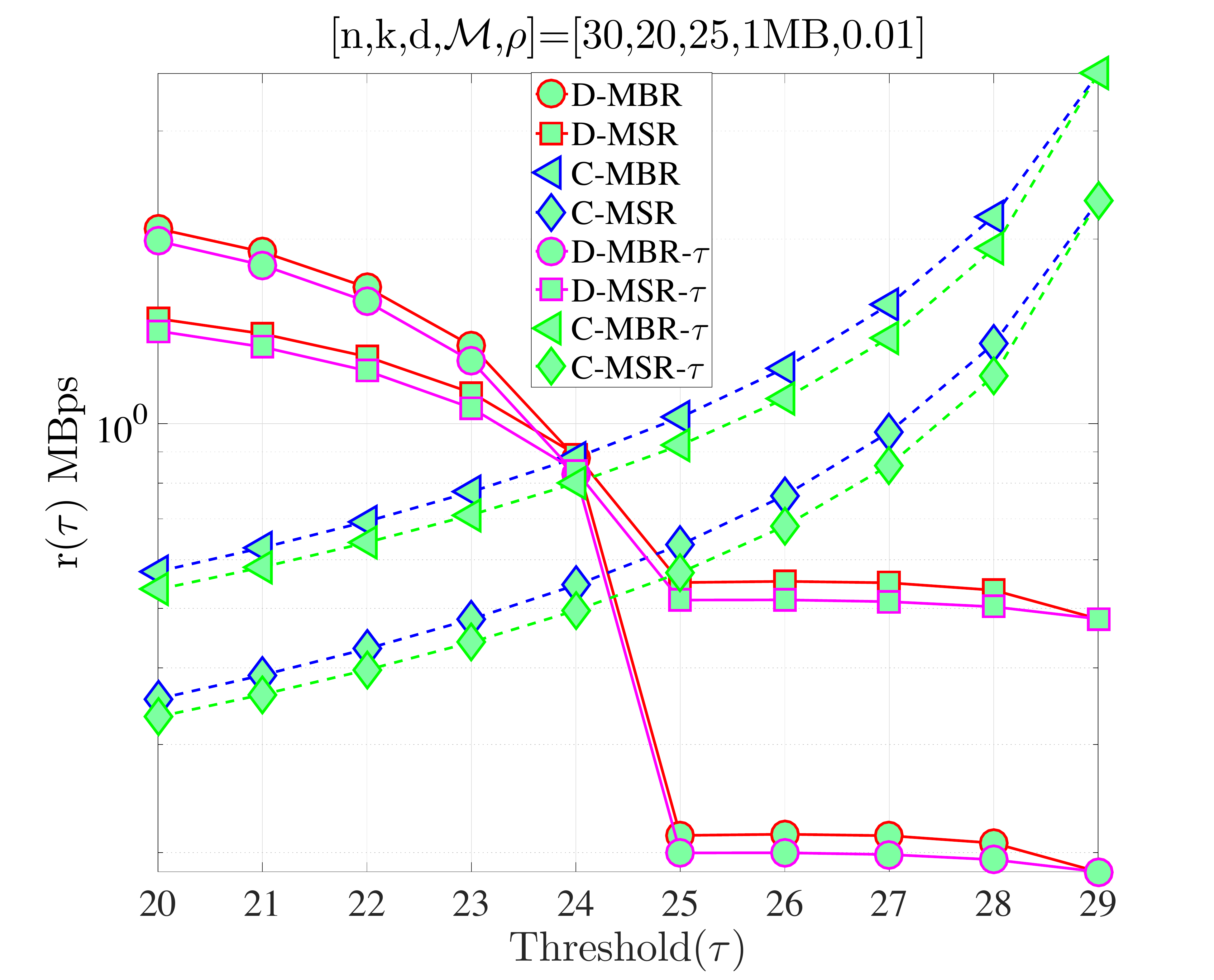} &
			\includegraphics[height=2.8in,width=3.5in]{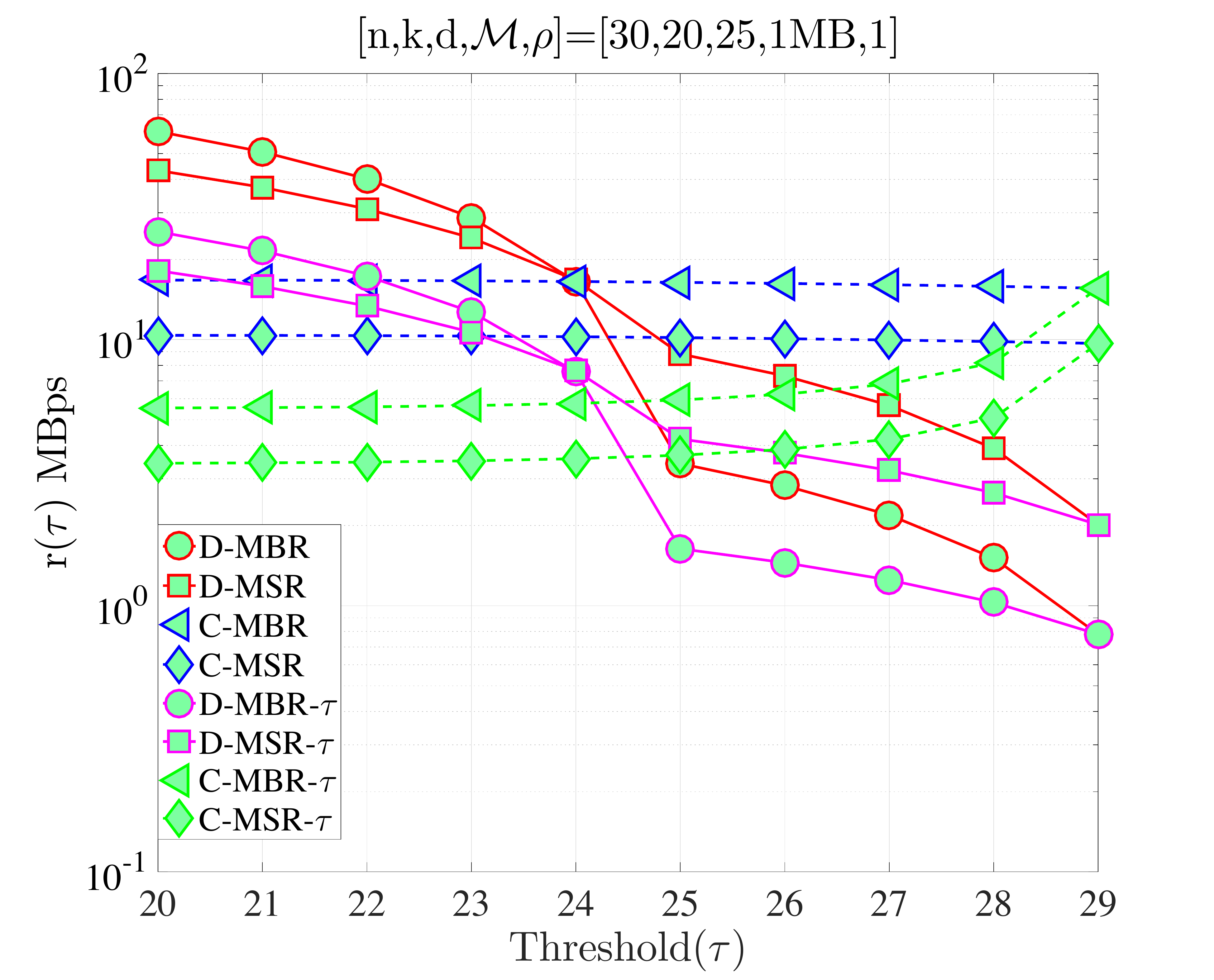} \\
			(a) & (b)\\
		\end{tabular}
	\end{center}
	\vspace{-0.1in}
	\caption{Cost $r(\tau)$ vs repair threshold $(\tau)$ for: (a) $\rho = 0.01$, (b) $\rho = 1$.}
	\label{fig:max_clock-exp} 
\end{figure*}

\begin{figure*}[tb]
	\begin{center}
		\setlength{\tabcolsep}{-0.01in}
		\begin{tabular}{ccc}
			\includegraphics[height=2.3in,width=2.45in]{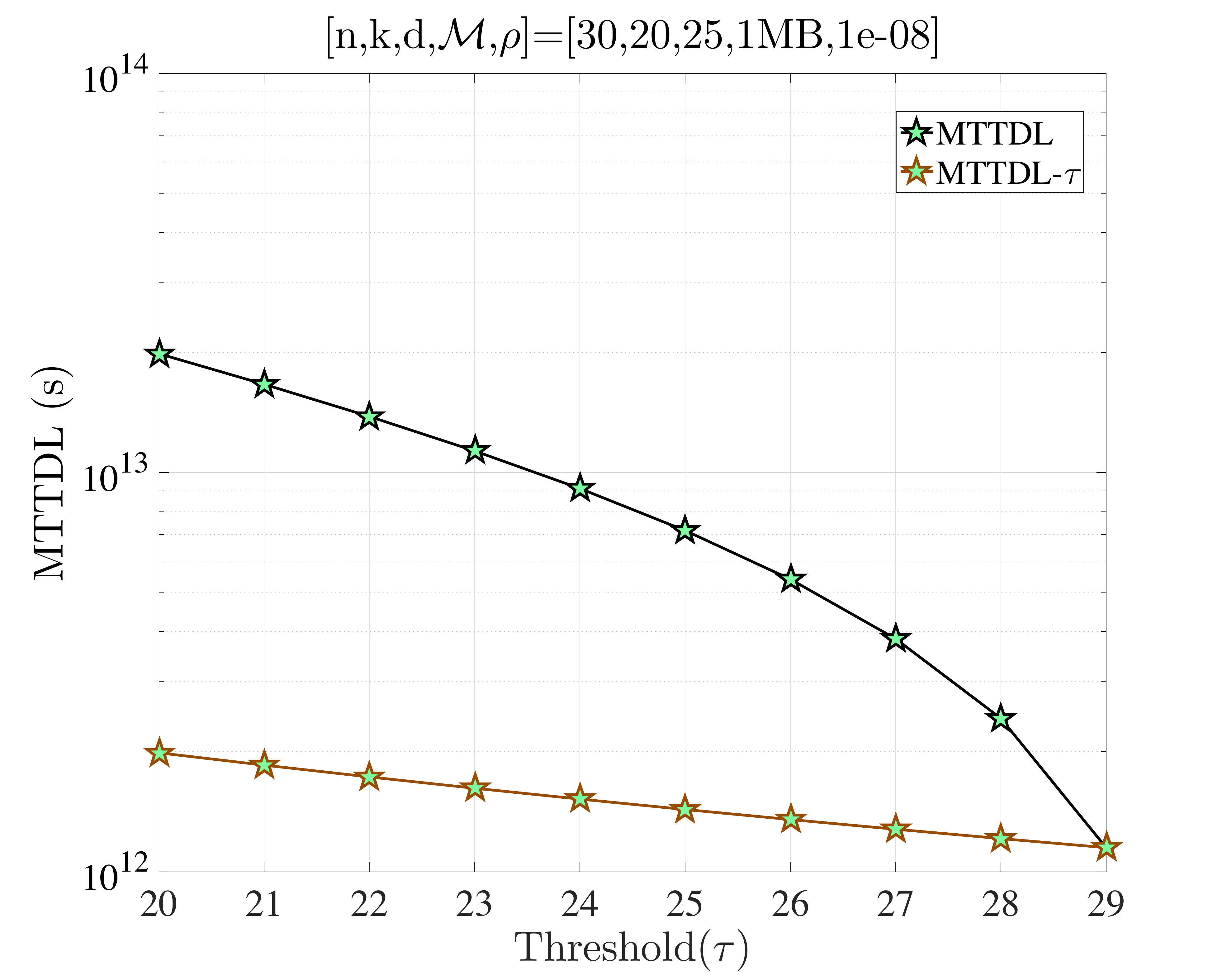} &
			\includegraphics[height=2.3in,width=2.45in]{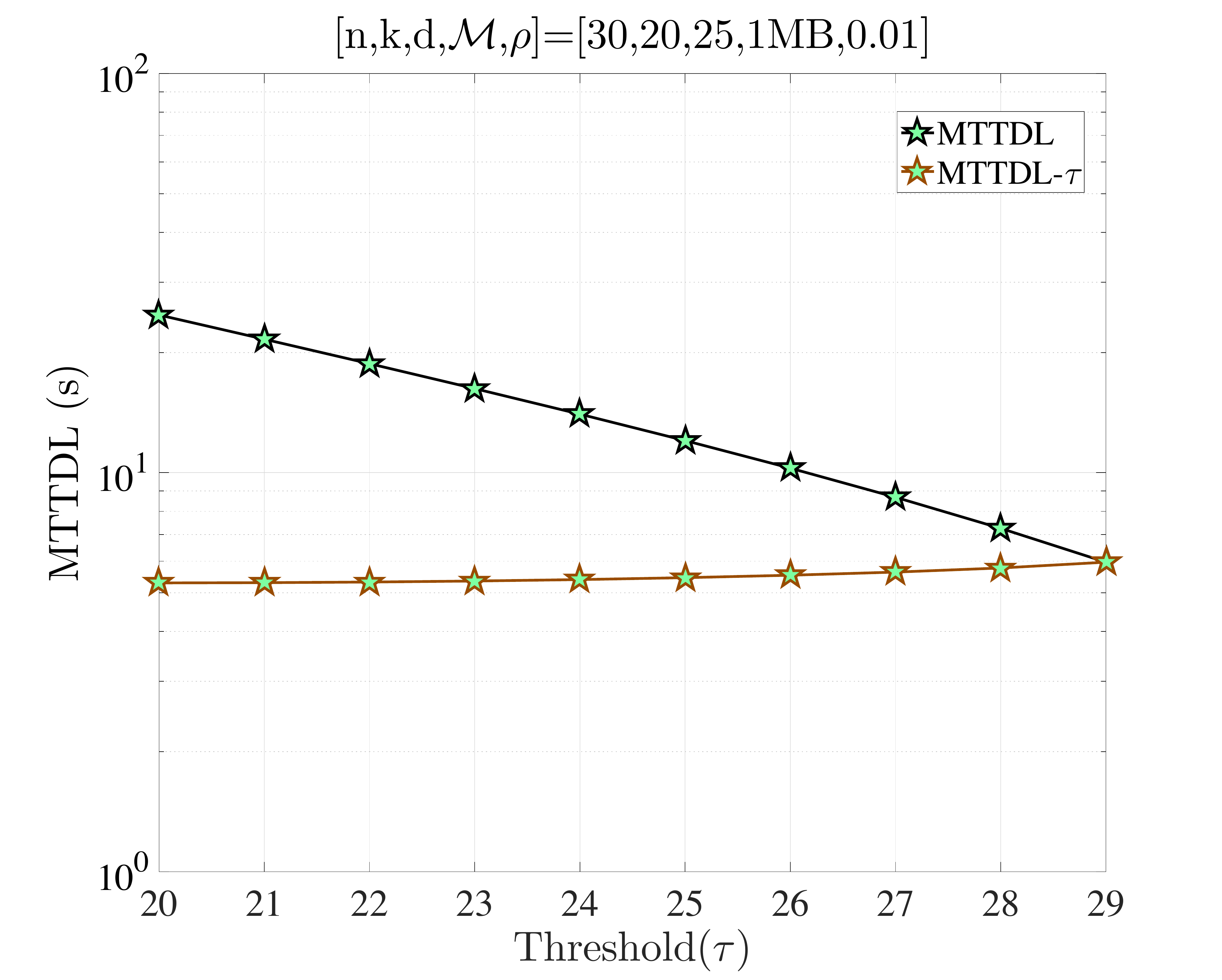} &
			\includegraphics[height=2.3in,width=2.45in]{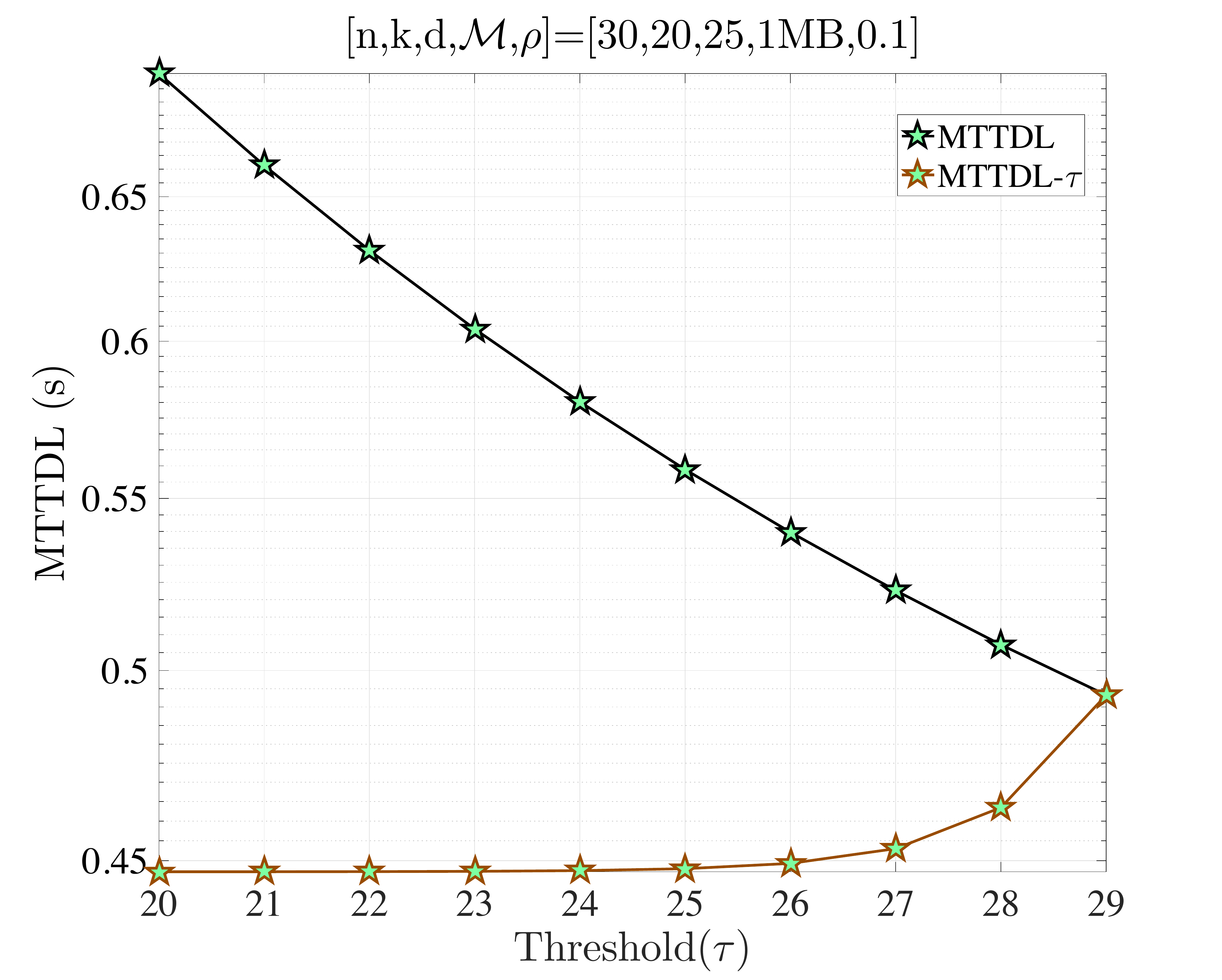} \\
			(a) &(b) & (c)
		\end{tabular}
	\end{center}
	\vspace{-0.1in}
	\caption{Mean time to data loss vs repair threshold $(\tau)$ for: (a) $\rho = 0.004$, (b) $\rho = 0.1$, (c) $\rho = 1$.}
	\label{fig:mttdl} 
	\vspace{-0.2in}
\end{figure*}

\section{Allowing Node Departures within the Repair Process}
\label{sec:allow_failure}

Up to this point, we have assumed that once the repair process is started, no live nodes depart from $\A$ until the repair is completed. In this section, we analyze the case in which additional node departures are allowed within repair process. Fig.~\ref{fig:general_mc} shows the revised CTMC model that accounts for departures during the repair process. The chain consists of two sets of states. In the first row, states represent the phase where nodes depart but no repairs are initiated. Once state $\tau$ is reached, repairs are initiated and the chain transitions into the second row of states where departures may occur while repairs are performed. We focus on the case where no data loss occurs as we are interested in the system dynamics, while the system remains operational. 

Once the repair process is initiated, $n-\tau$ nodes are to be repaired. Since we assume an exponential random variable for repair times (with rate $\mu$), a newcomer is repaired after the minimum of the $n-\tau$ exponential random variables, which is also an exponential random variable with rate $(n-\tau)\mu$. Due to the memoryless property of the exponential random variable, we can perform the same procedure for the second repair and so on. The corresponding rates are depicted on the second row of states of Fig.~\ref{fig:general_mc}. Note that if no departures occur during the repair process, the expected repair time would be the same as in the model of Section~\ref{sec:dependent_tau}, that is, the maximum of $n-\tau$ exponential random variables with rate $\mu$. 

\begin{figure}[t]
	\begin{center}
		\includegraphics[width=2.8in]{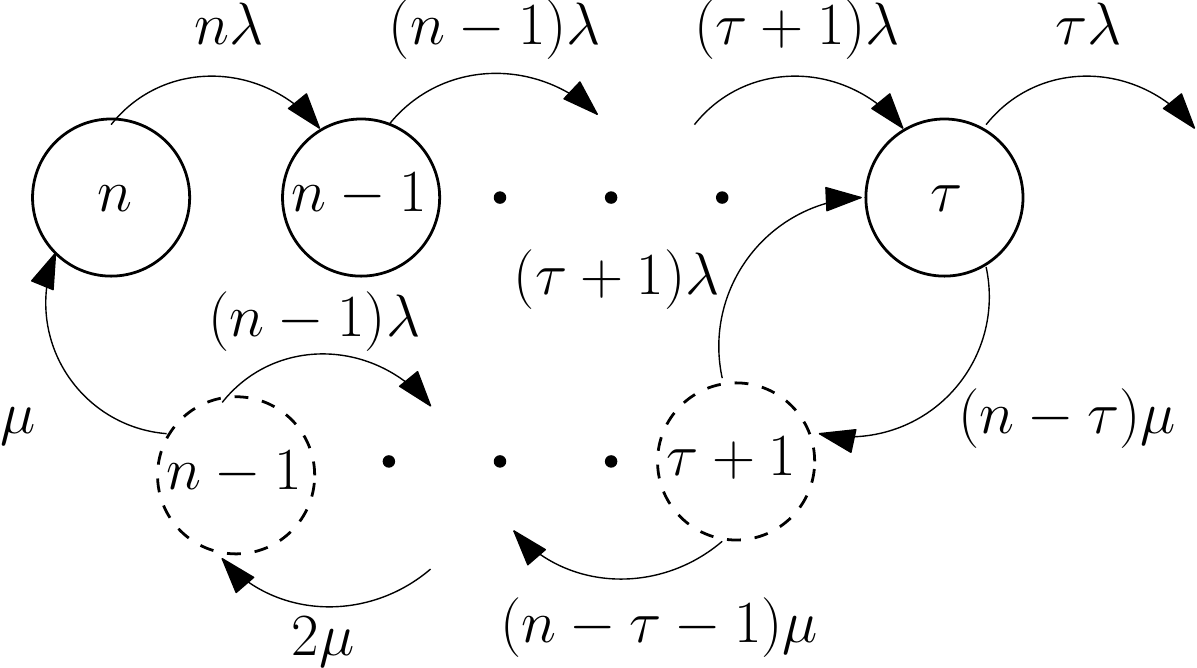}
	\end{center}
	\vspace{-0.1in}
	\caption{Markov chain for a threshold-based file maintenance.}
	\label{fig:general_mc}
	\vspace{-0.1in}
\end{figure}

In such a model, the expected number of node repairs is larger than $n-\tau$ due to the possible additional departures during the repair process. In the section, we analyze the system under the condition that the CTMC always chooses the transition $\tau\rightarrow \tau+1$ when in the lower set of states so that all nodes are repaired eventually. Otherwise, the system would suffer from data loss. To find the average repair cost, we are interested in two statistics of the CTMC once it reaches state $\tau$ for the first time: i) the total number of lower arc transitions before reaching state $n$, i.e., $\tau \rightarrow \tau+1, n-2 \rightarrow n-1$, which will determine the number of node repairs, ii) the expected total time for reaching state $n$ from state $\tau$. For simplicity, the states $n-1,\dots,\tau+1$ can be ignored for now since they do not affect any of these two statistics. We are also interested in the number of times the chain revisits state $\tau$ before reaching state $n$ since that will determine the probability of no data loss. 

\textbf{Total number of revisits to state $\tau$:} Since we are interested in the case where no data loss occurs, we need to find the number of times the chain revisits state $\tau$. This is equivalent to transitioning from state $\tau$ to $\tau+1$ at each revisit to state $\tau$, we need the transition $\tau \rightarrow \tau+1$ at each revisit to state $\tau$, which occurs with probability $\frac{(n-\tau)\mu}{\tau\lambda + (n-\tau)\mu}$. For the transitions in the lower row of the CTMC, denote by $X^r_j$, the total number of revisits to state $\tau$ before reaching state $n$ at state $j$. Then, we can state the following balance equations.

\begin{align}
\label{eq:revisits}
X^r_{n-1}&=X^r_{n-2}\frac{(n-1)\lambda}{(n-1)\lambda+\mu},\\
X^r_{n-2}&=X^r_{n-1}\frac{2\mu}{(n-2)\lambda+2\mu} + X^r_{n-3}\frac{(n-2)\lambda}{(n-2)\lambda+2\mu},\\
\vdots \\
X^r_{\tau+2}&=X^r_{\tau+3}\frac{(n-\tau-2)\mu}{(\tau+2)\lambda+(n-\tau-2)\mu} + X^r_{\tau+1}\frac{(\tau+2)\lambda}{(\tau+2)\lambda+(n-\tau-2)\mu},\\
X^r_{\tau+1}&=X^r_{\tau+2}\frac{(n-\tau-1)\mu}{(\tau+1)\lambda+(n-\tau-1)\mu} + (1+X^r_{\tau})\frac{(\tau+1)\lambda}{(\tau+1)\lambda+(n-\tau-1)\mu},\\
X^r_{\tau} &= X^r_{\tau+1}.
\label{eq:revisits-2}
\end{align}

The set of equations \eqref{eq:revisits}-\eqref{eq:revisits-2} can be recursively solved for $X^r_{\tau}$. Then, the probability that no data loss occurs is $(\frac{(n-\tau)\mu}{\tau\lambda + (n-\tau)\mu})^{1+X^r_{\tau}}$ for one cycle of node repairs. 

\textbf{Total number of lower arc transitions:} For the transitions of the lower row of the CTMC, denote by $X^l_j$, the total number of lower arc transitions at state $j$, then we have the following equations.

\begin{align}
X^l_{n-1}&=\frac{\mu}{(n-1)\lambda+\mu} + X^l_{n-2}\frac{(n-1)\lambda}{(n-1)\lambda+\mu},\\
X^l_{n-2}&=(1+X^l_{n-1})\frac{2\mu}{(n-2)\lambda+2\mu} + X^l_{n-3}\frac{(n-2)\lambda}{(n-2)\lambda+2\mu},\\
\vdots \\
X^l_{\tau+1}&=(1+X^l_{\tau+2})\frac{(n-\tau-1)\mu}{(\tau+1)\lambda+(n-\tau-1)\mu} + X^l_{\tau}\frac{(\tau+1)\lambda}{(\tau+1)\lambda+(n-\tau-1)\mu},\\
X^l_{\tau} &= 1 + X^l_{\tau+1}.
\end{align}

Finding the total number of lower arc transitions is not enough to calculate the average repair cost per time since not all repairs have the same cost in some repair strategies. That is, if $\tau<d$, then some nodes are repaired by downloading $k\alpha$ symbols, whereas the remaining nodes are repaired by downloading $d\beta$. To find the number of lower arc transitions which occur between states $\tau$ and $d$ (which are repaired by downloading $k\alpha$), we denote by $Y^l_j$ the number of lower arc transitions between states $\tau$ and $d$ at state $j$. Then,

\begin{align}
Y^l_{n-1} &= Y^l_{n-2}\frac{(n-1)\lambda}{(n-1)\lambda+\mu}\\
Y^l_{n-2} &= Y^l_{n-1}\frac{2\mu}{(n-2)\lambda+2\mu} + Y^l_{n-3}\frac{(n-2)\lambda}{(n-2)\lambda+2\mu},\\
\vdots \\
Y^l_{d} &= Y^l_{d+1}\frac{(n-d)\mu}{d \lambda+(n-d)\mu} + Y^l_{d-1}\frac{d \lambda}{d \lambda+(n-d)\mu},\\
Y^l_{d-1} &= (1+Y^l_d)\frac{(n-d+1)\mu}{(d-1)\lambda+(n-d+1)\mu} + Y^l_{d-2}\frac{(d-1)\lambda}{(d-1)\lambda+(n-d+1)\mu},\\
\vdots \\
Y^l_{\tau+1}&=(1+Y^l_{\tau+2})\frac{(n-\tau-1)\mu}{(\tau+1)\lambda+(n-\tau-1)\mu} + Y^l_{\tau}\frac{(\tau+1)\lambda}{(\tau+1)\lambda+(n-\tau-1)\mu},\\
Y^l_{\tau} &= 1 + Y^l_{\tau+1}.
\end{align}

If $\tau \geq d$, there is no need to find $Y^l_{\tau}$ since all nodes download $d\beta$ and there are $X^l_{\tau}$ node repairs in total whereas if $\tau < d$, then $Y^l_{\tau}$ repairs are performed by downloading $k\alpha$ symbols and $X^l_{\tau}-Y^l_{\tau}$ node repairs are performed by downloading $d\beta$ symbols. 

\textbf{Expected total time before reaching state $n$:} For the transitions of the lower row of the CTMC, denote by $X^t_j$, the time it takes to reach state $n$ from state $j$. Then, the balance equations for the CTMC for $X^t_j$ can be written as

\begin{align}
	\label{eq:totalTime}
X^t_{n-1}&=\frac{1}{(n-1)\lambda+\mu} + X^t_{n-2}\frac{(n-1)\lambda}{(n-1)\lambda+\mu},\\
X^t_{n-2}&=\frac{1}{(n-2)\lambda+2\mu} + X^t_{n-1}\frac{2\mu}{(n-2)\lambda+2\mu} + X^t_{n-3}\frac{(n-2)\lambda}{(n-2)\lambda+2\mu},\\
\vdots \\
X^t_{\tau+1}&=\frac{1}{(\tau+1)\lambda+(n-\tau-1)\mu} + X^t_{\tau+2}\frac{(n-\tau-1)\mu}{(\tau+1)\lambda+(n-\tau-1)\mu} + X^t_{\tau}\frac{(\tau+1)\lambda}{(\tau+1)\lambda+(n-\tau-1)\mu}.\\
X^t_{\tau} &= \frac{1}{(n-\tau)\mu} + X^t_{\tau+1}.
\label{eq:totalTime-2}
\end{align}

Using \eqref{eq:totalTime}-\eqref{eq:totalTime-2}, we can solve for $X^t_{\tau}$ which is the time it takes to repair to a fully operational system with $n$ live nodes from state $\tau$, when departures occur during the repair process. The initial state for the system is $n$ and therefore the time to revisit state $n$ is $X^t_{\tau} + \sum_{i=\tau+1}^{n}\frac{1}{i\lambda}$.

This yields an average cost per time equal to

\begin{equation}
\label{eq:mostGenRepCost}
r_D(\tau|\textrm{no data loss occurs}) = \frac{c_D(\tau|\textrm{no data loss occurs})}{\textrm{E}[\Delta|\textrm{no data loss occurs}]} =
\begin{cases}
\frac{Y^l_{\tau}k\alpha+(X^l_{\tau}-Y^l_{\tau})d\beta}{X^t_{\tau} + \sum_{i=\tau+1}^{n}\frac{1}{i\lambda}}, & \textrm{if } \tau < d \\
\frac{X^l_{\tau}d\beta}{X^t_{\tau} + \sum_{i=\tau+1}^{n}\frac{1}{i\lambda}}, & \textrm{if } \tau \geq d. \\
\end{cases}
\end{equation}
and a probability of no data loss equal to $(\frac{(n-\tau)\mu}{\tau\lambda + (n-\tau)\mu})^{1+X^r_{\tau}}$.

\subsection{Numerical Results}

We have performed simulations to verify our findings. In the following example, we examine the case where $n=30$, $d=27$, $k=20$, $\mu=10$. Different $\lambda$ values are used, $\lambda=[0.1,0.2,0.4]$, as well as different $\tau$ values, $\tau=[25,27]$.
Simulation results are the average of one million simulations and they are presented in Table~\ref{tb:simulation_results}. Each table entry notes the value obtained from the simulation or the value obtained by analytically evaluation the average repair cost via \eqref{eq:mostGenRepCost}: the first shows the simulation results (S) and the other represents the analytical result (A) as shown before. As expected, increasing $\lambda$ results in more revisits to state $\tau$ due to an increase in the node departure rate. Furthermore, we see that expected time before reaching state $n$ decreases as we increase $\lambda$ for both cases of $\tau = 25$ and $\tau = 27$. For both cases, we observe that the number of nodes repaired by downloading $d\beta$ remains the same for a given value of $\lambda$. This is because the critical number of live nodes for a node to be repaired by downloading $d\beta$ symbols is at $d=27$, since if there are less than $d$ live nodes, then a node must be repaired by downloading $k\alpha$ symbols. In other words, once we reach state $d$ (in the lower row) of the CTMC, we count the number of lower arc transitions between the states $d$ and $n$ to calculate the number of node repairs by downloading $d\beta$ symbols, which remains the same for $\tau=25$ and $\tau=27$, since $\tau$  is not between $d$ and $n$. Finally, the number of nodes repaired by downloading $k\alpha$ increases with $\lambda$ as expected. We can observe that simulation results verify our analytical findings.

\begin{table}[!htb]
	\caption{Numerical results}
	\begin{minipage}{.5\linewidth}
		\caption{Total number of revisits to state $\tau$}
		\centering
\begin{tabular}{cc|c|c|c|c|c|c|}
	\cline{3-8}
	 &  & \multicolumn{6}{c|}{$\lambda$}                                                 \\ \cline{3-8} 
	&  &  \multicolumn{2}{c|}{0.1} & \multicolumn{2}{c|}{0.2} & \multicolumn{2}{c|}{0.4} \\ \cline{3-8} 
	&  &      S     &     A      &      S     &       A    &     S      &       A    \\ \hline
	\multicolumn{1}{|c|}{\multirow{2}{*}{$\tau$}} & 25 &     1.0718      &     1.0719      &     1.1633      &    1.1638       &     1.4660      &    1.4668       \\ \cline{2-8} 
	\multicolumn{1}{|c|}{}                 & 27 &     1.1806      &       1.1806    &     1.4439      &    1.4424       &     2.2118      &   2.2096        \\ \hline
\end{tabular}
	\end{minipage}%
	\begin{minipage}{.5\linewidth}
		\centering
		\caption{Expected total time before reaching state $n$}
\begin{tabular}{cc|c|c|c|c|c|c|}
	\cline{3-8}
	&  & \multicolumn{6}{c|}{$\lambda$}                                                 \\ \cline{3-8} 
	&  &  \multicolumn{2}{c|}{0.1} & \multicolumn{2}{c|}{0.2} & \multicolumn{2}{c|}{0.4} \\ \cline{3-8} 
	&  &      S     &     A      &      S     &       A    &     S      &       A    \\ \hline
	\multicolumn{1}{|c|}{\multirow{2}{*}{$\tau$}} & 25 &     2.0438      &     2.0432      &     1.1770      &    1.1770       &     0.8040      &    0.8034       \\ \cline{2-8} 
	\multicolumn{1}{|c|}{}                 & 27 &     1.2404      &       1.2392    &     0.7458      &    0.7447       &     0.5402      &   0.5405        \\ \hline
\end{tabular}
	\end{minipage} 
	\begin{minipage}{.5\linewidth}
		\caption{Number of node repairs by downloading $d\beta$}
		\centering
\begin{tabular}{cc|c|c|c|c|c|c|}
	\cline{3-8}
	&  & \multicolumn{6}{c|}{$\lambda$}                                                 \\ \cline{3-8} 
	&  &  \multicolumn{2}{c|}{0.1} & \multicolumn{2}{c|}{0.2} & \multicolumn{2}{c|}{0.4} \\ \cline{3-8} 
	&  &      S     &     A      &      S     &       A    &     S      &       A    \\ \hline
	\multicolumn{1}{|c|}{\multirow{2}{*}{$\tau$}} & 25 &     3.4703      &     3.4706      &     4.0263      &    4.0224       &     5.3702      &    5.3696       \\ \cline{2-8} 
	\multicolumn{1}{|c|}{}                 & 27 &     3.4715     &       3.4706    &     4.0237      &    4.0224       &     5.3727      &  5.3696        \\ \hline
\end{tabular}
	\end{minipage}%
	\begin{minipage}{.5\linewidth}
		\centering
		\caption{Number of node repairs by downloading $k\alpha$}
\begin{tabular}{cc|c|c|c|c|c|c|}
	\cline{3-8}
	&  & \multicolumn{6}{c|}{$\lambda$}                                                 \\ \cline{3-8} 
	&  &  \multicolumn{2}{c|}{0.1} & \multicolumn{2}{c|}{0.2} & \multicolumn{2}{c|}{0.4} \\ \cline{3-8} 
	&  &      S     &     A      &      S     &       A    &     S      &       A    \\ \hline
	\multicolumn{1}{|c|}{\multirow{2}{*}{$\tau$}} & 25 &     2.1784      &     2.1782      &     2.4228      &    2.4234       &     3.2620      &   3.2623       \\ \cline{2-8} 
	\multicolumn{1}{|c|}{}                 & 27 &    0     &       0    &     0     &    0      &     0      &   0        \\ \hline
\end{tabular}
	\end{minipage} 
	\label{tb:simulation_results}
\end{table}

In Fig.~\ref{fig:comp_mostgen}, we compare our previous schemes, namely the distributed repair model in Section~\ref{sec:repair_strategies} and the model in Section~\ref{sec:dependent_tau}, with the repair model discussed in this section, which is depicted in the figure with D-MBR-F and D-MSR-F to specify that these codes allow failures within repair process. In the low $\rho$ regime, we observe that there is almost no difference between the performance of models in both MSR and MBR cases. This is because for low $\rho$, the expected number of additional departures during repair is low and the expected time is dominated by terms with $\lambda$ (the upper transition in the CTMC model of Fig.~\ref{fig:general_mc}, which is the same as the CTMC model of Fig.~\ref{fig:mc_d}). In the high $\rho$ regime, differences are observed in the expected average cost per time. Interestingly, when $\rho = 0.1$, it can be observed that for $\tau \geq d -1$, D-MSR-F and D-MBR-F have the highest cost respectively for MSR and MBR cases, whereas for $\tau < d -1$, they are in between the D-MSR and D-MBR. As we keep increasing $\rho$, we observe that D-MSR-F and D-MBR-F becomes even more costly compared to other schemes. In all cases, our model in Section~\ref{sec:dependent_tau} has the lowest $r(\tau)$ respectively for MSR and MBR cases. These observations validate that our model in Section~\ref{sec:repair_strategies} (that does not have dependency of repair process on the number of nodes to be repaired and neglect failures during repair) can be utilized as an approximate model for the models we consider later in the text in the low $\rho$ regime. Therefore, in the low $\rho$ regime, all the optimal threshold statements for the former model (i.e., Proposition~\ref{thm:lambdaProp1}, Proposition~\ref{thm:lambdaProp2}) also hold for the model considered in Sections~\ref{sec:dependent_tau} and \ref{sec:allow_failure}. Note that, the low $\rho$ regime is the only interesting case for mobile cloud storage. At high $\lambda$, the average repair cost and MTTDL performance become prohibitively high and low, respectively, for the system to be viable.

\begin{figure*}[tb]
	\begin{center}
		\setlength{\tabcolsep}{-0.01in}
		\begin{tabular}{ccc}
			\includegraphics[height=2.2in,width=2.4in]{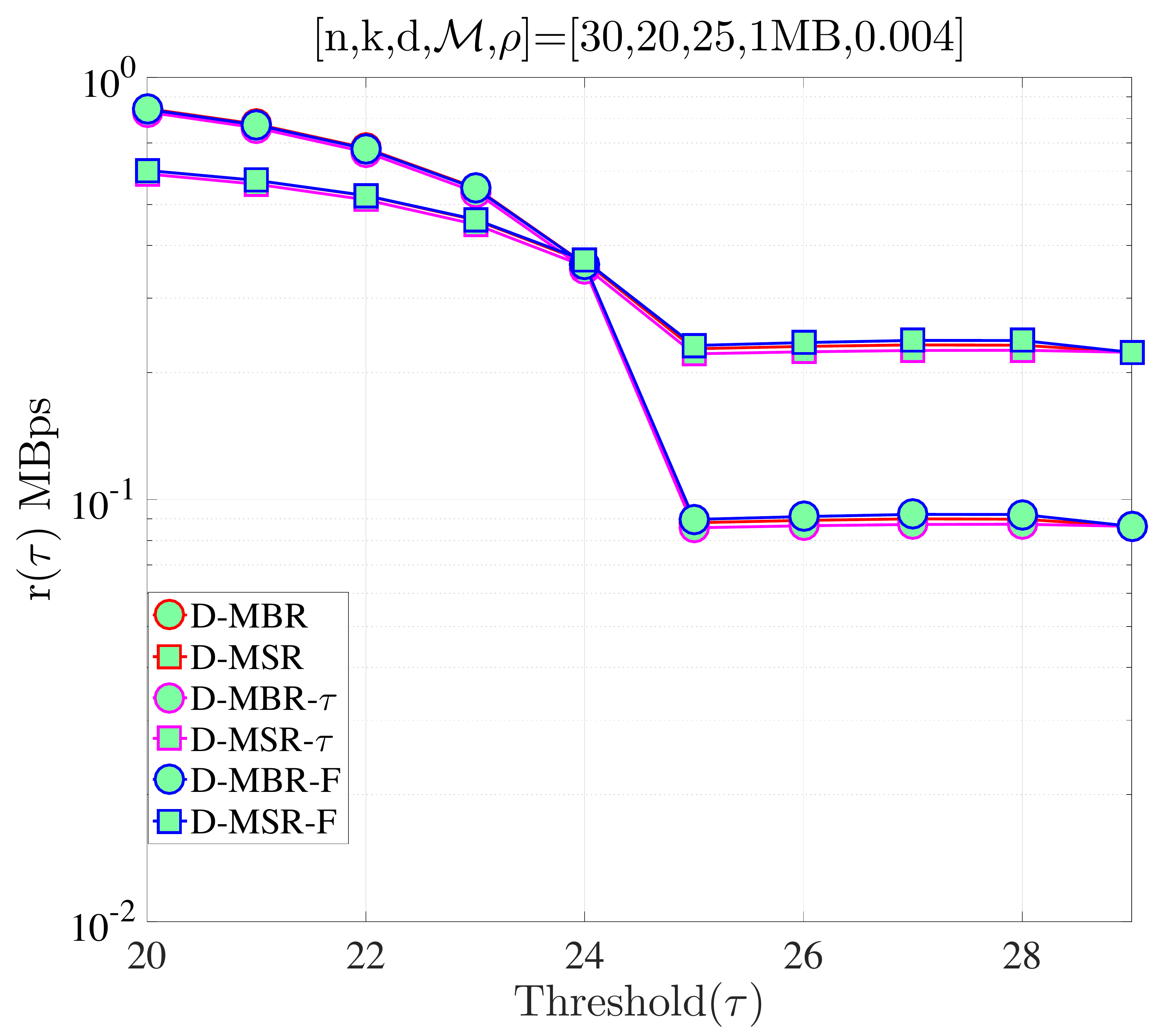} &
			\includegraphics[height=2.2in,width=2.4in]{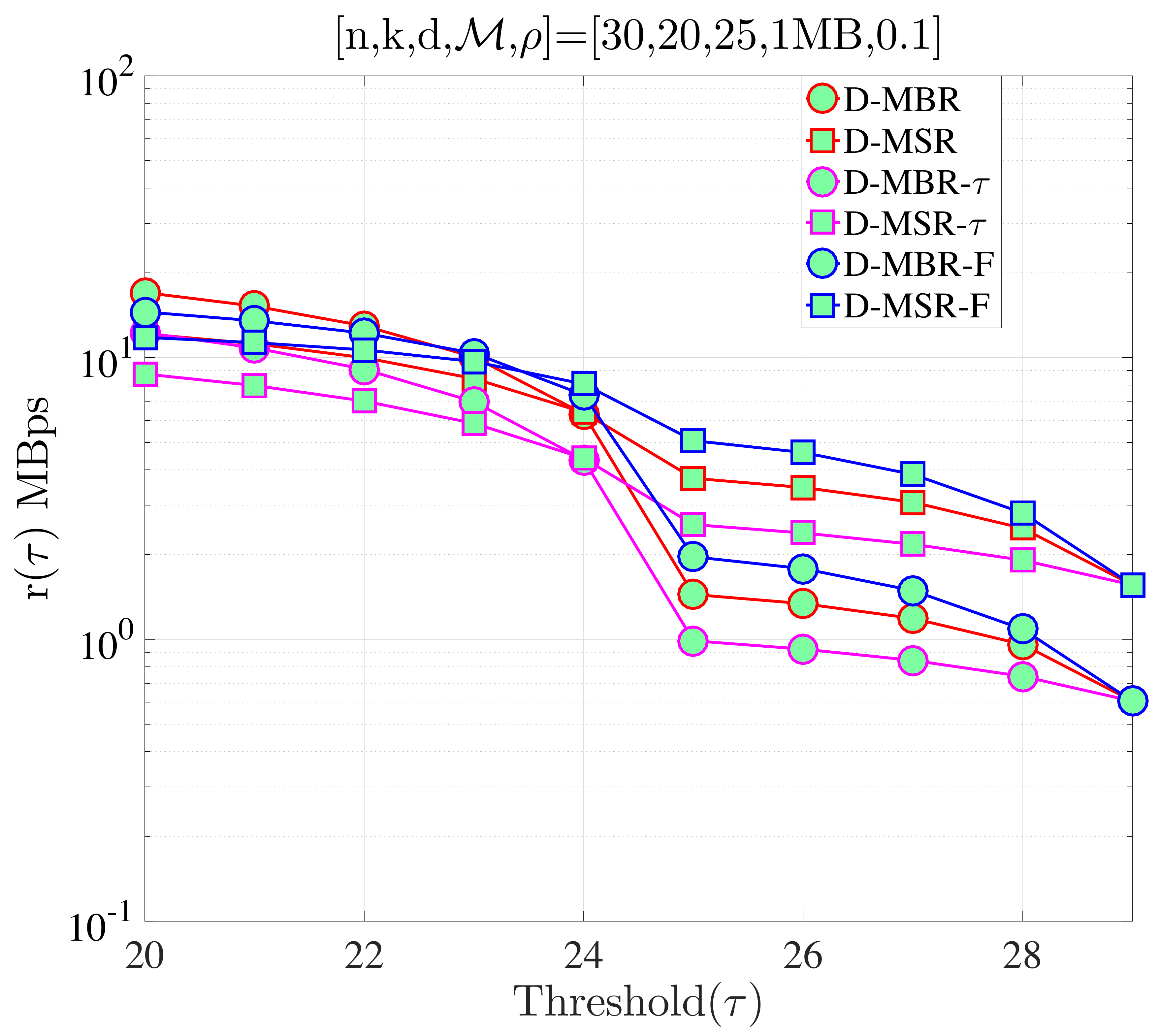} &
			\includegraphics[height=2.2in,width=2.4in]{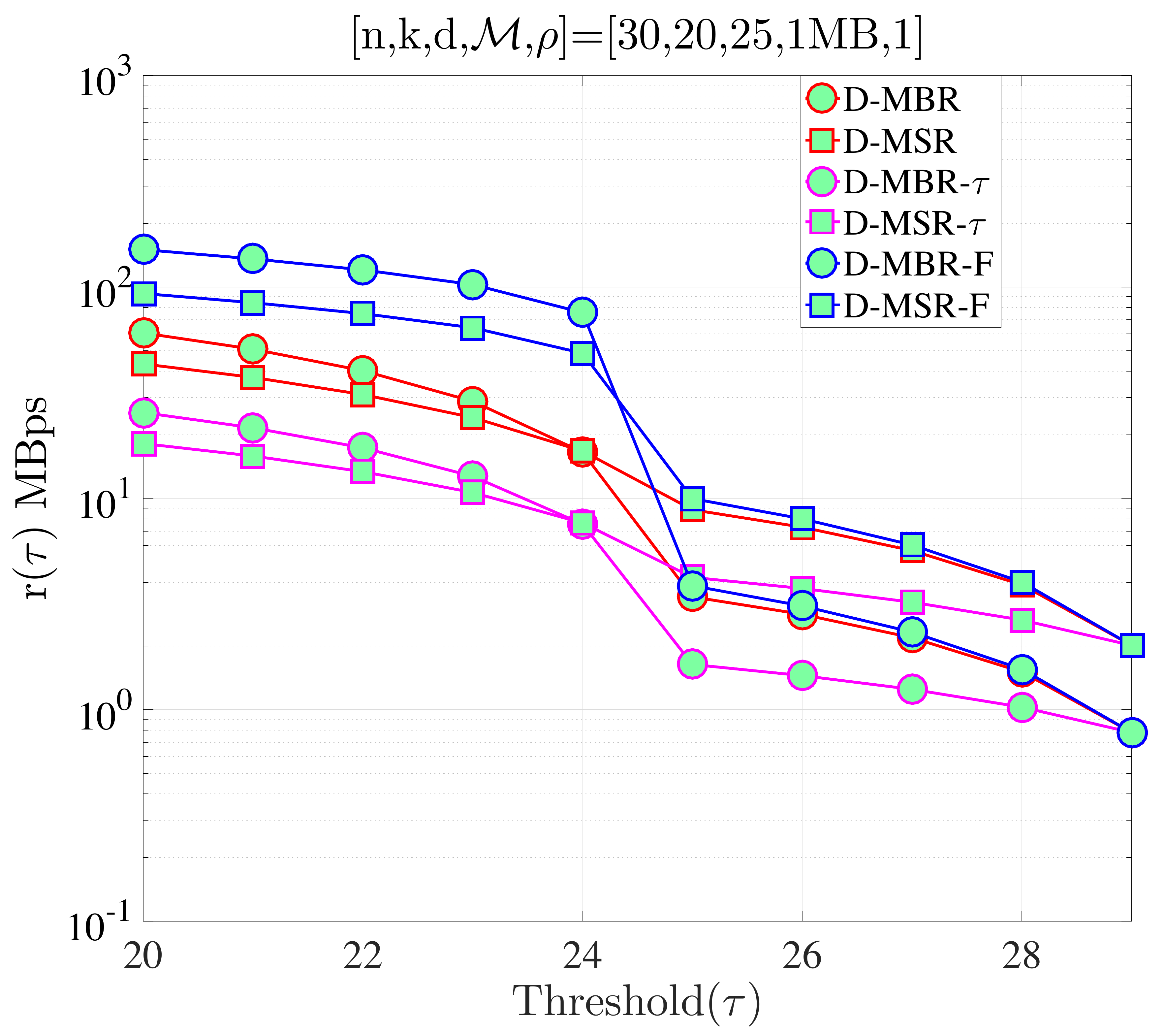} \\
			(a) &(b) & (c)
		\end{tabular}
	\end{center}
	\vspace{-0.1in}
		\caption{Cost $r(\tau)$ vs repair threshold $(\tau)$ for: (a) $\rho = 0.004$, (b) $\rho = 0.1$, (c) $\rho = 1$.}
	\label{fig:comp_mostgen} 
	\vspace{-0.2in}
\end{figure*}

\section{Conclusion and Future Work}
\label{sec:conclusion}

We analyzed threshold-based repair strategies for maintaining files in mobile cloud storage systems. We derived the optimal repair thresholds for both distributed and centralized repair schemes under fragment regeneration and/or reconstruction. Our results showed that optimal thresholds are dependent on system configurations, the underlying code parameters and mobility-to-repair rate ratio. For high mobility-to-repair scenarios, eager repair minimizes the average repair cost per unit of time. Under low mobility-to-repair ratio, lazy repair is optimal in terms of average repair cost. We investigated codes that perform repair through cooperation. We showed that similar to regenerating codes, one can derive optimal thresholds for cooperative regenerating codes. We also investigated the case when the repair process depends on the number of nodes under repair. Finally, we analyzed the case where we lift the restriction that once the repair process is initiated, no more departures occur. We derived set of linear equations to calculate repair cost, which is verified by simulations. We then showed that the initial fixed-rate repair model, which was simple enough to track analytically, is a good approximation of the complex model in the low $\rho$ regime, which is of interest. This allows us to use the optimal threshold results as well as other results we had from the simpler model.

As part of future work, we plan to consider a more advanced repair model in which fragment repairs occur under a fixed bandwidth constraint. This assumption makes the repair rate $\mu$ dependent on the repair threshold $\tau$. Finally, generalizing the model to the case where repairs are initiated at every state with some probability and studying the cost vs. MTTDL tradeoff under this model is an interesting avenue for further research.



\bibliographystyle{IEEEtran} 
\bibliography{IEEEabrv,./refsThresholdDSS}

\begin{appendices}
	
	\section{Proof of Proposition \ref{thm:lambdaProp1}}
	\label{sec:lambdaProp1}
	\begin{proof}
		To determine $\tau^*$, we compare $r_D(d)$ with the average repair cost at all other possible regeneration states $d+\delta$, for $1 \leq \delta \leq n-d-1$, and check if 
		\begin{equation}
		r_D(d)\leq r_D(d+\delta),~~\forall \delta\in[1,n-d-1].
		\label{rates}
		\end{equation}
		This method is preferred because a straightforward minimization of $r_D(\tau)$ through differentiation is involved due to the harmonic sums. Substituting $r(\tau)$ from \eqref{rtau_d} to \eqref{rates} yields, 
		\begin{align}
		\frac{\gamma (n-d) \lambda \mu}{\mu H_{n,d} + \lambda} & \leq \frac{\gamma (n-d-\delta)\lambda \mu}{\mu H_{n,d+\delta} + \lambda},
		\\
		\rho & \leq \frac{(n-d)H_{d+\delta,d}}{\delta} - H_{n,d}.
		\label{prop1}
		\end{align}
		The inequality in \eqref{prop1} yields the maximum $\rho$ for which $r_D(\tau)$ is minimized at state $\tau=d$. We now examine the behavior of the right hand side (RHS) in \eqref{prop1} as a function of $\delta$ for fixed $n$ and $d$. The RHS  in \eqref{prop1} has the same monotonicity as the function $f(d,\delta)=\frac{H_{d+\delta,d}}{\delta}.$ In Lemma~\ref{thm:Lemmafcn1}, we show that $f$  is monotonically decreasing with $\delta$. As a result, the departure-to-repair rates $\rho$ for which $\eqref{rates}$ holds are also monotonically decreasing with $\delta.$ Substituting the maximum $\delta$ (i.e., $\delta=n-d-1$) to the RHS in \eqref{prop1} yields a departure rate bound
		\begin{equation}
		\rho \leq \frac{H_{n-1, d}}{n-d-1}  -\frac{1}{n},
		\end{equation}
		for which $r_D(d)\leq r_D(d+\delta), \forall \delta\in[1,n-d-1].$ In this case, minimization of  $r_D(\tau)$ is achieved at $\tau^*=d.$

		We now prove that for rates $\rho> \frac{H_{n-1, d}}{n-d-1}  -\frac{1}{n},$ the average cost $r_D(\tau)$ is minimized when $\tau=n-1$. Following a similar reasoning, we compare $r_D(\tau)$ at $\tau=n-1$ with $r_D(\tau)$ at any other possible regeneration threshold. We consider $r_D(n-1)\leq r(n-\delta-1)$, where $1\leq \delta \leq n-d-1$. By substituting $r_D(\tau)$ from \eqref{rtau_d} and simplifying, it follows that
		\begin{equation}
		\rho \geq \frac{H_{n-1,n-\delta-1}}{\delta} -\frac{1}{n}.
		\label{prop1_2}
		\end{equation}
		The RHS of \eqref{prop1_2} has the same monotonicity as the function $g(n-1,\delta)=\frac{H_{n-1,n-1-\delta}}{\delta}$. In  Lemma~\ref{thm:Lemmafcn2}, we show that $g$  is monotonically increasing with $\delta$. Therefore, the minimum $\rho$ for which  $r_D(n-1)\leq r_D(n-1-\delta),~\forall \delta\in[1,n-d-1]$ is obtained when $\delta=n-d-1.$ Substituting this $ \delta$ to \eqref{prop1_2} completes the proof.	
	\end{proof}

	
	
	\begin{lem}
		\label{thm:Lemmafcn1}
		The function
		\begin{equation}\label{eq:fcn1}
		f(x,\delta)=\frac{H_{x+\delta,x}}{\delta}
		\end{equation}
		is a monotonically decreasing function over integers $\delta>0$ for any given integer $x>0$.
	\end{lem}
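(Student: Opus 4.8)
The plan is to read $H_{x+\delta,x}=\sum_{i=x+1}^{x+\delta}\frac1i$, so that
$f(x,\delta)=\frac1\delta\sum_{i=x+1}^{x+\delta}\frac1i$ is exactly the arithmetic mean of the $\delta$ numbers $\frac1{x+1},\dots,\frac1{x+\delta}$. Since $i\mapsto\frac1i$ is strictly decreasing, going from $\delta$ to $\delta+1$ appends one new term $\frac1{x+\delta+1}$ that is strictly smaller than every term already present, hence strictly smaller than their mean; adjoining a number below the current average strictly lowers the average, which is precisely the claimed monotonicity. The remaining task is just to make this precise with a one-line computation.

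Concretely, write $S_\delta=\sum_{i=x+1}^{x+\delta}\frac1i$, so that $f(x,\delta)=S_\delta/\delta$. Then
\begin{equation}
f(x,\delta+1)-f(x,\delta)=\frac{S_\delta+\frac1{x+\delta+1}}{\delta+1}-\frac{S_\delta}{\delta}=\frac{1}{\delta(\delta+1)}\left(\frac{\delta}{x+\delta+1}-S_\delta\right),
\end{equation}
so the sign of the increment equals the sign of $\frac{\delta}{x+\delta+1}-S_\delta$. I would then bound $S_\delta$ from below: for each index $i$ with $x+1\le i\le x+\delta$ we have $i<x+\delta+1$, hence $\frac1i>\frac1{x+\delta+1}$, and summing these $\delta$ strict inequalities gives $S_\delta>\frac{\delta}{x+\delta+1}$. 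Therefore $f(x,\delta+1)-f(x,\delta)<0$ for every integer $\delta>0$, which establishes that $f(x,\cdot)$ is strictly monotonically decreasing for each fixed integer $x>0$.

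There is essentially no hard step here; the only thing to watch is keeping the inequalities strict, so that the conclusion is a \emph{strict} decrease — this is what is implicitly used in the proof of Lemma~\ref{thm:lambdaProp1} when the extreme value $\delta=n-d-1$ is substituted — and noting that the argument requires no upper bound on $\delta$, so it covers the whole range $1\le\delta\le n-d-1$ needed there. If a computation-free write-up is preferred, one can instead simply invoke the elementary fact that the running average $\frac1m\sum_{j=1}^m a_j$ is nonincreasing in $m$ whenever the sequence $(a_j)$ is nonincreasing, applied to $a_j=\frac1{x+j}$, with strictness coming from the strict monotonicity of $(a_j)$.
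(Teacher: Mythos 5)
Your proof is correct and is essentially identical to the paper's: both compute the difference of consecutive values, reduce it to the sign of $S_\delta - \frac{\delta}{x+\delta+1}$, and conclude by noting that each of the $\delta$ terms of $S_\delta$ strictly exceeds $\frac{1}{x+\delta+1}$. The running-average interpretation is a nice framing but the underlying argument is the same.
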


	\begin{proof}
		We will show that $f(x,\delta+1)< f(x,\delta)$ for any integer $\delta>0$, which implies the monotonically decreasing assertion in the lemma.
		We have
		\begin{align*}
		&f(x,\delta)-f(x,\delta+1) \\
		&=
		\frac{1}{\delta}\left( 
		\sum\limits_{i=x+1}^{x+\delta} \frac{1}{i}
		\right)
		-\frac{1}{\delta+1}\left( 
		\sum\limits_{i=x+1}^{x+\delta+1} \frac{1}{i}
		\right)\\
		&\stackrel{(a)}{=}\frac{S}{\delta}-\frac{S+\frac{1}{x+\delta+1}}{\delta+1}\\
		&=\frac{1}{\delta+1}\left( S\left(\frac{\delta+1}{\delta}-1\right) - \frac{1}{x+\delta+1} \right)\\
		&=\frac{1}{\delta(\delta+1)}\left( S - \frac{\delta}{x+\delta+1} \right)\\
		&\stackrel{(b)}{>} 0,\\
		\end{align*}
		where in (a) we define the sum $S=\sum\limits_{i=x+1}^{x+\delta} \frac{1}{i}$, and (b) follows as there are $\delta$ terms in $S$ and each term is strictly greater than $\frac{1}{x+\delta+1}$.	
	\end{proof}


	
	\begin{lem}
		\label{thm:Lemmafcn2}
		The function
		\begin{equation}\label{eq:fcn2}
		g(x,\delta)=\frac{H_{x,x-\delta}}{\delta}
		\end{equation}
		is a monotonically increasing function over integers $\delta\in[0,x-1]$ for any given integer $x>0$.
	\end{lem}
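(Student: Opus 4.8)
The plan is to mirror the argument of Lemma~\ref{thm:Lemmafcn1}: rather than differentiate, I would show directly that consecutive values satisfy $g(x,\delta+1)>g(x,\delta)$ for every admissible $\delta$, which is equivalent to the stated monotonicity. First I would rewrite
\begin{equation*}
g(x,\delta)=\frac{1}{\delta}\,H_{x,x-\delta}=\frac{1}{\delta}\sum_{i=x-\delta+1}^{x}\frac{1}{i},
\end{equation*}
so that $H_{x,x-\delta}$ is a sum of exactly $\delta$ terms, the largest of which is $\frac{1}{x-\delta+1}$. Passing from $\delta$ to $\delta+1$ prepends the single new term $\frac{1}{x-\delta}$, which is \emph{larger} than every term already present. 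This is the structural difference from Lemma~\ref{thm:Lemmafcn1}, where the appended term $\frac{1}{x+\delta+1}$ was the smallest, and it is precisely what reverses the direction of monotonicity.

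Next I would carry out the subtraction. Setting $S=\sum_{i=x-\delta+1}^{x}\frac{1}{i}$, so that $H_{x,x-\delta-1}=S+\frac{1}{x-\delta}$, a short computation gives
\begin{align*}
g(x,\delta+1)-g(x,\delta)
&=\frac{S+\frac{1}{x-\delta}}{\delta+1}-\frac{S}{\delta}\\
&=\frac{1}{\delta(\delta+1)}\left(\frac{\delta}{x-\delta}-S\right).
\end{align*}
Hence the claim reduces to the inequality $\frac{\delta}{x-\delta}>S=\sum_{i=x-\delta+1}^{x}\frac{1}{i}$, which holds because $S$ is a sum of $\delta$ terms, each strictly smaller than $\frac{1}{x-\delta}$ (every index $i$ occurring in the sum satisfies $i\ge x-\delta+1>x-\delta$). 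Note this uses only $x-\delta\ge 1$, i.e. $\delta\le x-1$, which is guaranteed on the stated range.

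Finally I would dispose of the boundary value: for $\delta=0$ the empty sum gives $H_{x,x}=0$, so with the natural convention $g(x,0)=0$ we get $g(x,1)=\frac{1}{x}>0=g(x,0)$, while for $1\le\delta\le x-2$ the computation above applies verbatim and yields $g(x,\delta+1)>g(x,\delta)$. I do not expect a genuine obstacle here; the only point requiring care is correctly identifying the appended term as $\frac{1}{x-\delta}$ (the \emph{largest} term in the new harmonic segment), since that single choice is what produces the sign needed for monotonic \emph{increase} — in contrast to Lemma~\ref{thm:Lemmafcn1}.
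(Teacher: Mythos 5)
Your proposal is correct and follows essentially the same route as the paper's own proof: isolate the prepended term $\frac{1}{x-\delta}$, compute $g(x,\delta+1)-g(x,\delta)=\frac{1}{\delta(\delta+1)}\bigl(\frac{\delta}{x-\delta}-S\bigr)$, and conclude positivity since each of the $\delta$ terms of $S$ is strictly smaller than $\frac{1}{x-\delta}$. Your extra remark disposing of the $\delta=0$ boundary case is a small refinement the paper omits, but the argument is otherwise identical.
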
 
	
	\begin{proof}
		We will show that $g(x,\delta+1)> g(x,\delta)$ for any integer $\delta>0$, which implies the monotonically increasing assertion in the lemma.
		We have
		\begin{align*}
		&g(x,\delta+1)-g(x,\delta) \\
		&=
		\frac{1}{\delta+1}\left( 
		\sum\limits_{i=x-\delta}^{x} \frac{1}{i}
		\right)
		-\frac{1}{\delta}\left( 
		\sum\limits_{i=x-\delta+1}^{x} \frac{1}{i}
		\right)\\
		&\stackrel{(a)}{=}\frac{S+\frac{1}{x-\delta}}{\delta+1}-\frac{S}{\delta}\\
		&=\frac{1}{\delta+1}\left( S\left(1-\frac{\delta+1}{\delta}\right) + \frac{1}{x-\delta} \right)\\
		&=\frac{1}{\delta(\delta+1)}\left(  \frac{\delta}{x-\delta} - S \right)\\
		&\stackrel{(b)}{>} 0,\\
		\end{align*}
		where in (a) we define the sum $S=\sum\limits_{i=x-\delta+1}^{x} \frac{1}{i}$, and (b) follows as there are $\delta$ terms in $S$ and each term is strictly smaller than $\frac{1}{x-\delta}$.
	\end{proof}
	
	
	\section{Proof of Lemma~\ref{thm:lambdaProp1}}
	\label{sec:LemmalambdaProp1}
	
	In Proposition~\ref{thm:prop1}, we determined the $\rho$ regime for which lazy repair is more efficient than eager repair, given fixed code parameters. To prove Lemma~\ref{thm:lambdaProp1}, it suffices to show that the highest rate $\rho = \frac{H_{n-1, d}}{n-d-1}  -\frac{1}{n}$ for which $\tau^{\ast} = d$ is strictly positive for any $n$ and $d$ (recall that by definition, $n>d$). We prove this fact by employing a lower bound on $\frac{H_{n-1, d}}{n-d-1} $ , which is proved in Lemma~\ref{thm:lemmaLambda1}.
	\begin{align}
	\frac{H_{n-1, d}}{n-d-1} &\geq \frac{1}{n-1}  \stackrel{(a)}{\Rightarrow} \\
	\frac{H_{n-1, d}}{n-d-1} &> \frac{1}{n} \\
	\frac{H_{n-1, d}}{n-d-1}  - \frac{1}{n} &>0.
	\end{align}
	where in (a), we substituted $\frac{1}{n-1}$ with the strictly smaller term $\frac{1}{n}.$
	
	
	
	\begin{lem}
		\label{thm:lemmaLambda1}
		The function                                               
		\begin{equation}
		f(x,\delta)=\frac{H_{x+\delta,x}}{\delta}
		\end{equation}
		is bounded by 
		\begin{equation}
		\frac{1}{x+\delta} \leq \frac{H_{x+\delta,x}}{\delta} < \frac{1}{\delta},
		\end{equation}
		for all positive integers $x$ and $\delta$.
	\end{lem}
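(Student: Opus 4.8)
The plan is to treat both inequalities as elementary counting bounds on the $\delta$-term harmonic sum $H_{x+\delta,x}=\sum_{i=x+1}^{x+\delta}\frac1i$. Every summand $\frac1i$ lies between the smallest term $\frac{1}{x+\delta}$ and the largest term $\frac{1}{x+1}$, and there are exactly $\delta$ of them; dividing the resulting sandwich by $\delta$ produces a two-sided estimate for $f(x,\delta)=\frac{H_{x+\delta,x}}{\delta}$. No monotonicity arguments or derivatives are needed, so the proof should be short, and it parallels the term-counting used in Lemmas~\ref{thm:Lemmafcn1} and~\ref{thm:Lemmafcn2}.

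For the lower bound I would replace each of the $\delta$ terms by the smallest one, $\frac{1}{x+\delta}$, giving
\begin{equation}
H_{x+\delta,x}=\sum_{i=x+1}^{x+\delta}\frac1i \;\ge\; \delta\cdot\frac{1}{x+\delta},
\end{equation}
and dividing by $\delta$ yields $f(x,\delta)\ge \frac{1}{x+\delta}$, with equality only in the degenerate case $\delta=1$. This is exactly the bound invoked in the proof of Lemma~\ref{thm:lambdaProp1} (with $x=d$ and $\delta=n-d-1$, where it gives $\frac{H_{n-1,d}}{n-d-1}\ge\frac{1}{n-1}$), so this half is both clean and the part actually needed downstream.

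For the upper bound I would again compare term by term, now against the largest term $\frac{1}{x+1}$, obtaining $H_{x+\delta,x}<\delta\cdot\frac{1}{x+1}$ for $\delta\ge2$ (with equality at $\delta=1$), hence $f(x,\delta)\le\frac{1}{x+1}$. To recover the stated right-hand bound $\frac1\delta$ one then compares $\frac{1}{x+1}$ with $\frac1\delta$: since $\frac{1}{x+1}<\frac1\delta$ precisely when $\delta< x+1$, the claim $f(x,\delta)<\frac1\delta$ follows whenever $\delta\le x$, which is the regime relevant to the application here ($x=d$, $\delta=n-d-1$).

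\textbf{Main obstacle.} The genuine difficulty is the upper bound in the complementary regime $\delta>x$. There the harmonic sum $H_{x+\delta,x}$ can exceed $1$ — for instance $x=1,\delta=3$ gives $H_{4,1}=\frac{13}{12}$ and $f=\frac{13}{36}>\frac13=\frac1\delta$ — so $f<\frac1\delta$ cannot hold for \emph{all} positive integers $x,\delta$ as literally stated. The elementary term-wise method instead delivers the universally valid estimate $f(x,\delta)<\frac{1}{x+1}\;(<\tfrac1x)$, and I would therefore prove the two-sided bound $\frac{1}{x+\delta}\le f(x,\delta)<\frac{1}{x+1}$ and restrict the sharper reading $f<\frac1\delta$ to the hypothesis $\delta\le x$ under which it is true; the weaker form $\frac1x$ suffices for every use of the lemma in the paper.
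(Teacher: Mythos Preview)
Your term-by-term bounding is exactly the paper's argument: for the lower bound the paper replaces every summand by $\tfrac{1}{x+\delta}$ and divides by $\delta$, and for the upper bound it replaces every summand by $\tfrac{1}{x}$ and divides by $\delta$. So methodologically you and the paper agree completely.

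The important observation is your ``main obstacle'': you are right that the inequality $f(x,\delta)<\tfrac{1}{\delta}$ is false as stated (your counterexample $x=1$, $\delta=3$ gives $f=\tfrac{13}{36}>\tfrac13$). This is a typo in the lemma statement. The paper's own proof of the upper bound in fact establishes
\[
\frac{H_{x+\delta,x}}{\delta} < \frac{1}{x},
\]
not $\tfrac{1}{\delta}$, and the only downstream use of the upper bound (in the proof of Lemma~\ref{thm:lambdaProp2}) invokes it exactly in that form, substituting $\frac{H_{d,k}}{d-k}$ by its upper bound $\tfrac{1}{k}=\tfrac{1}{x}$. So the ``$\tfrac{1}{\delta}$'' in the displayed statement should read ``$\tfrac{1}{x}$''. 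Your sharper estimate $f(x,\delta)\le\tfrac{1}{x+1}$ (with equality at $\delta=1$) is correct and slightly refines what the paper proves; your suggested restriction to $\delta\le x$ is unnecessary once the intended bound $\tfrac{1}{x}$ is used.
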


	\begin{proof}
		First, we show that $\frac{H_{x+\delta, x}}{\delta} \geq \frac{1}{x+\delta}$, for all $x>0$ and $\delta>0$.  
		\begin{align*}
		\frac{H_{x+\delta, x}}{\delta} &= \frac{\sum_{i=1}^{x+\delta} \frac{1}{i} - \sum_{i=1}^{x} \frac{1}{i}}{\delta} \\
		& = \frac{\frac{1}{x+1} + \frac{1}{x+2}+\ldots+\frac{1}{x+\delta}}{\delta} \\
		&\stackrel{(a)}{>} \frac{\frac{1}{x+\delta} + \frac{1}{x+\delta}+\ldots+\frac{1}{x+\delta}}{\delta} \\
		&= \frac{\delta  \frac{1}{x+\delta}}{\delta} \\
		&= \frac{1}{x+\delta},
		\end{align*}
		where (a) follows by substituting the first $\delta-1$ terms in the nominator with strictly smaller terms. The equality in this lower bound holds when $\delta  = 1$. We now show the upper bound. 
		\begin{align*}
		\frac{H_{x+\delta, x}}{\delta} &= \frac{\sum_{i=1}^{x+\delta} \frac{1}{i} - \sum_{i=1}^{x} \frac{1}{i}}{\delta} \\
		& = \frac{\frac{1}{x+1} + \frac{1}{x+2}+\ldots+\frac{1}{x+\delta}}{\delta} \\
		&\stackrel{(b)}{<} \frac{\frac{1}{x} + \frac{1}{x}+\ldots+\frac{1}{x}}{\delta} \\
		&= \frac{\delta  \frac{1}{x}}{\delta} \\
		&= \frac{1}{x},
		\end{align*}
		where (b) follows by substituting the $\delta$ terms in the nominator with strictly larger terms.
	\end{proof}
	
	
	\section{Proof of Proposition \ref{thm:prop2}}
	\label{sec:AppA}
	
	\begin{proof}
		The proof follows along the same lines as Proposition~\ref{thm:prop1}. We compare the repair cost at $r(d)$ with the repair cost at any other possible state $d-\delta$ for $1 \leq \delta \leq d-k$ to check when the inequality
		\begin{equation}
		r(d)\leq r(d-\delta), \quad \delta\in[1,d-k]
		\end{equation}
		is satisfied.
		Substituting for $r(\tau)$ using \eqref{rtau_d}, we obtain
		\begin{equation}
		\frac{\lambda\mu(n-d)\gamma}{\mu H_{n,d} + \lambda} \leq \frac{\lambda\mu(k\alpha\delta+(n-d)\gamma)}{\mu H_{n, d - \delta} + \lambda},
		\end{equation}
		from which we get:
		\begin{equation}
		\rho \geq  \frac{(n-d)\gamma H_{d,d-\delta}}{k\alpha\delta}-\frac{1}{n}-H_{n-1,d}
		\label{lfunc}
		\end{equation}
		Expression \eqref{lfunc} yields a bound on the minimum  $\rho$ for which the optimal repair threshold is $\tau^{*}=d$. 
		We notice that RHS of the inequality above has the same monotonicity as the function $g(d,\delta)=\frac{H_{d,d-\delta}}{\delta}$ defined in Lemma~\ref{thm:Lemmafcn2} in Appendix~\ref{sec:lambdaProp1}, from which we observe that this function is a monotonically increasing function of $\delta$.	 Substituting the maximum $\delta^*=d-k$ yields the departure-to-repair rate bound,      
		\begin{equation}
		\rho \geq \frac{\gamma(n-d)H_{d,k}}{k\alpha(d-k)}-H_{n,d},
		\end{equation}
		for which $r(d)\leq r(d-\delta), \forall \delta\in[1,d-k].$ For this rate regime, the optimal repair threshold is at $\tau^* = d.$

		We now prove that for rates  $\rho \leq \frac{\gamma(n-d)H_{d,k}}{k\alpha(d-k)}-H_{n-1,d}-\frac{1}{n}$, the average cost $r(\tau)$ per unit of time is minimized when $\tau=k$. We compare $r(k)$ with $r(k+\delta)$ to analyze when the following inequality holds.
		\begin{equation}
		r(k)\leq r(k+\delta), \quad \delta\in[1,d-k].
		\end{equation}
		On substituting for $r(\tau)$ from \eqref{rtau_d}, we get:
		\begin{eqnarray*}
			\frac{\lambda\mu (k\alpha(d-k)+\gamma(n-d))}{\mu H_{n,k} + \lambda}
			\leq  \frac{\lambda\mu(k\alpha(d-k-\delta)+\gamma(n-d))}{\mu H_{n,k+\delta} + \lambda},
		\end{eqnarray*}
		from which we obtain
		\begin{equation}
		\rho \leq   \frac{(k\alpha(d-k)+\gamma(n-d))H_{k+\delta,k}}{k\alpha\delta}-\frac{1}{n}-H_{n-1,k}.
		\label{prop4}
		\end{equation}
		The expression above yields a bound on the maximum departure-to-repair rate $\rho$ for which the optimal repair threshold is $\tau^{*}=k$. We now study the behavior of \eqref{prop4} as a function of $\delta$ for fixed $n, k$ and $d$. 
		We notice that RHS of the inequality above has the same monotonicity as the function $f(k,\delta)=\frac{H_{k+\delta,k}}{\delta}$ defined in Lemma~\ref{thm:Lemmafcn1} in Appendix~\ref{sec:lambdaProp1}, from which we observe that this function is a monotonically decreasing function of $\delta$. Therefore, $\delta^*=d-k$ yields the minimum value for the RHS of \eqref{prop4}, which implies that when                                                                                                                                                                                                                                                                         
		\begin{eqnarray}
		\rho &\leq & \frac{(k\alpha(d-k)+\gamma(n-d))H_{d,k}}{k\alpha(d-k)}-H_{n-1,k}-\frac{1}{n} \nonumber\\
		&= & \frac{\gamma(n-d)H_{d,k}}{k\alpha(d-k)}-H_{n,d},
		\end{eqnarray}
		$r(\tau)$ is optimized at $\tau^*=k.$		
	\end{proof}
	
	
	\section{Proof of Lemma \ref{thm:lambdaProp2}}
	\label{sec:lemmalambdaProp2}
	
	\begin{proof}
		We prove Lemma~\ref{thm:lambdaProp2} by finding the relationship between $n, k, \gamma,$ and  $\alpha$ for which the upper bound on the rate $\rho$ when $\tau^*=k$ becomes negative. 
		\begin{align*}
		\frac{\gamma(n-d)H_{d,k}}{k\alpha(d-k)}-H_{n,d}  &=(n-d)\left(\frac{\gamma H_{d,k}}{k\alpha(d-k)}-\frac{H_{n,d}}{n-d}\right) \\
		&\stackrel{(a)}{<} (n - d)\left ( \frac{\gamma}{k^2 \alpha } -  \frac{H_{n,d}}{n-d} \right) \\
		&\stackrel{(b)}{\leq}(n-d)\left ( \frac{\gamma}{k^2 \alpha }  -  \frac{1}{n}\right) \\
		&=\frac{n-d}{k^2 \alpha}\left( n\gamma - k^2 \alpha \right). \\
		\end{align*}
		where in (a) we have used Lemma~\ref{thm:lemmaLambda1} of Appendix~\ref{sec:LemmalambdaProp1} to substitute term $\frac{H_{d,k}}{d-k}$ with its upper bound $\frac{1}{k}$ and in (b), we have used the same lemma to substitute  $\frac{H_{n,d}}{n-d}$ with its lower bound  $\frac{1}{n}$. Setting $\frac{n-d}{k^2 \alpha} \left ( n \gamma - k^2 \alpha \right)<0$ yields $n\gamma < k^2\alpha$ since the multiplicative term is strictly positive.
	\end{proof}


\section{Proof of Proposition \ref{thm:prop3}}
\label{sec:AppB}

\begin{proof}
	To determine $\tau^{*}$, we compare $r(k)$ with other possible repair states $k+\delta$, i.e., we analyze when
	\begin{equation}
	r(k)\leq r(k+\delta),
	\end{equation}
	for $1\leq \delta \leq n-k-1$. On substituting for $r(\tau)$ from \eqref{rtau_c}, we obtain
	\begin{align*}
	\frac{\lambda\mu\alpha(k+n-k-1)}{\mu H_{n,k} + \lambda} \leq \frac{\lambda \mu\alpha(k+n-k-\delta-1)}{\mu H_{n,k+\delta} + \lambda}.
	\end{align*}
	We have
	\begin{equation}\label{prop3}
	\rho \leq\frac{(k\alpha+\alpha(n-k-1))H_{k+\delta,k}}{\alpha\delta}-\frac{1}{n}-H_{n-1,k}.
	\end{equation}
	
	Inequality \eqref{prop3} yields the maximum departure-to-repair rate $\rho$ for which it is more cost-efficient to repair at state $\tau=k$ than any other state $\tau=k+\delta.$ We now examine the behavior of the RHS of \eqref{prop3} as a function of $\delta$ for fixed $k$ and $d$.  
	We notice that RHS of the inequality above has the same monotonicity as the function $f(k,\delta)=\frac{H_{k+\delta,k}}{\delta}$ defined in Lemma~\ref{thm:Lemmafcn1} in Appendix~\ref{sec:lambdaProp1}, from which we observe that this function is a monotonically decreasing function of $\delta$. Substituting $\delta^*=n-k-1$ to the RHS of \eqref{prop3} yields 
	\begin{equation}
	\rho \leq \frac{kH_{n-1, k}}{(n-k-1)} - \frac{1}{n},
	\end{equation}
	for which the optimal repair threshold is at $\tau^* = k.$
	
	We now evaluate if there is a departure rate regime for which the average cost per unit of time is minimized at $\tau^*= n-1$. That is, we analyze when
	\begin{equation}
	r(n-1)\leq r(n-\delta-1).
	\end{equation}
	where $1\leq \delta \leq n-k-1$. On substituting for $r(\tau)$ from \eqref{rtau_c}, we get
	\begin{eqnarray}
	\frac{\lambda\mu\alpha k}{\mu H_{n,n-1}+\lambda} &\leq & \frac{\lambda\mu\alpha(k+\delta)}{\mu H_{n,n-\delta-1} + \lambda} \Rightarrow\\
	\rho & \geq & \frac{kH_{n-1,n-\delta-1}}{\delta} -\frac{1}{n}.
	\label{eq:cent_2}
	\end{eqnarray}
	We notice that RHS of the inequality above has the same monotonicity as the function $g(n-1,\delta)=\frac{H_{n-1,n-1-\delta}}{\delta}$ defined in Lemma~\ref{thm:Lemmafcn2} in Appendix~\ref{sec:lambdaProp1}, from which we observe that this function is a monotonically increasing function of $\delta$. Substituting $\delta^*=n-k-1$  yields 
	\begin{equation}
	\rho \geq \frac{kH_{n-1, k}}{(n-k-1)} - \frac{1}{n},
	\end{equation}
	for which the optimal repair threshold is at $\tau^{*}=n-1$.
	
\end{proof}


\section{Proof of Proposition \ref{thm:prop4}}
\label{sec:Prop4}
\begin{proof}
	Let us consider the following inequality:
	\begin{equation}
	r_D(\tau)<r_C(\tau).
	\label{rtau_dc}
	\end{equation}
	
	According to \eqref{rtau_c}, the average repair cost $r_C(\tau)$ of centralized repair depends only on $\alpha$, when $n$, $k$, and $d$ are fixed. As $\alpha_{MSR} \leq \alpha_{MBR}$, MSR codes minimize $r_C(\tau)$. Thus, we select MSR codes for centralized repair in our comparison. Similarly, for given $n,k,$ and $d$, the average repair cost $r_D(\tau)$ of distributed repair depends only on the repair bandwidth $\gamma$. As $\gamma_{MBR} \leq \gamma_{MSR}$, MBR codes are selected to minimize $r_D(\tau)$. Substituting \eqref{msr} and \eqref{mbr} in $r_C(\tau)$ and $r_D(\tau)$, respectively, we obtain
	\begin{equation}
	\frac{\mathcal{M}(2d)(n-\tau)\lambda\mu}{k(2d-k+1)(\mu H_{n,\tau} + \lambda)}  <  \frac{\mathcal{M} (k+n-\tau-1)\lambda\mu}{k(\mu H_{n,\tau} + \lambda)},
	\end{equation}
	which implies
	\begin{equation}
	k+n-\tau-1  <  2d.
	\label{comp_prop}
	\end{equation}

	Inequality \eqref{comp_prop} determines the minimum number of surviving nodes for which MBR distributed repair emerges as the most cost-efficient strategy. The left hand side (LHS) of \eqref{comp_prop} is a decreasing function of $\tau$. Maximizing the LHS yields the relationship between $n,k,$ and $d$ for which distributed MBR {\it always} outperforms centralized MSR. This occurs when $\tau=d$. Substituting $\tau=d$ results in $d > \frac{n+k-1}{3}$. If we reverse the direction of the inequality in \eqref{rtau_dc}, we obtain
	\begin{equation}
	2d  < n+k-\tau-1.
	\label{reverse}
	\end{equation}
	Minimizing the RHS of \eqref{reverse} yields the relationship between $n,k,$ and $d$ for which centralized MSR {\it always} outperforms distributed MBR. This occurs when $\tau=n-1$. Substituting $\tau=n-1$ results in $d<\frac{k}{2}$. However, by the definition of regenerating codes, we have $d \geq k.$ Therefore, there is no condition for which centralized MSR repair always outperforms distributed MBR repair.
\end{proof}


\section{Proof of Proposition \ref{thm:prop5}}
\label{sec:Prop5}
\begin{proof}
	To determine the optimal repair strategy we compare $r_C(\tau)$ with $r_D(\tau)$ for $k \leq \tau^{*} < d$
	\begin{center}
		$r_C(\tau) < r_D(\tau)$.
	\end{center}
	Substituting $r(\tau)$ for distributed repair and centralized repair from \eqref{rtau_d} and \eqref{rtau_c}, respectively, we obtain:
	\begin{equation}
	(\alpha(k+n-\tau-1))\frac{\lambda\mu}{H_{n,\tau} + \lambda}< (\alpha k(d-\tau)+\gamma(n-d))\frac{\lambda\mu}{H_{n,\tau} + \lambda}. 
	\label{optrep_c}
	\end{equation}
	For MSR codes, $\alpha_{MSR} \leq \gamma_{MSR}$ and for MBR codes, $\alpha_{MBR}=\gamma_{MBR}$. Thus, for each case, we have $\alpha\leq \gamma$. By choosing the lowest $\gamma$, we consider when the parameters satisfy
	\begin{eqnarray}
	\alpha(k+n-\tau-1) &<& \alpha k(d-\tau)+\alpha(n-d)\nonumber\\
	\Rightarrow	 k-1 &< & k(d-\tau) - (d- \tau)  \nonumber \\
	\Rightarrow	k-1 &<&(k-1)(d-\tau)  \nonumber\\
	\Rightarrow	\tau & < & d. \label{eq:ineq}
	\end{eqnarray}
	As $k \leq \tau^* <d$, inequality \eqref{eq:ineq} is always true and hence, centralized repair outperforms distributed repair. As explained in Proposition \ref{thm:prop4}, for centralized repair, MSR codes minimize the average repair cost rate per unit of time as compared to MBR codes. Thus, centralized repair using MSR codes yields the optimal repair strategy.
\end{proof}

	\section{Proof of Proposition \ref{thm:propMTTDL}}
	\label{App:MTTDL}
	Starting from state $n$, expected time to reach state $\tau$ is $\sum_{i=\tau+1}^{n}=\frac{1}{i\lambda}=\frac{H_{n,\tau}}{\lambda}$. Once the system is at state $\tau$, with probability $p=\frac{\tau\lambda}{\tau\lambda+\mu}$, the system will transition to state $\tau-1$. If this occurs, the recovery will not be possible since the number of fragments are below the repair threshold. However, DSS can still serve the users if $\tau-1 \geq k$. Until we reach state $k-1$, the data is not lost. The expected time to reach state $k-1$ from state $\tau$ is $\sum_{i=k}^{\tau} \frac{1}{i\lambda}=\frac{H_{\tau,k-1}}{\lambda}$. On the other hand, with probability $1-p$, recovery will be initiated and the system will be back to state $n$, which takes $\frac{1}{\mu}$ time. At that point, we will go thorough the same process again. Accordingly, we have
	\begin{equation}
	TDL = \begin{cases}
		\frac{H_{n,\tau}}{\lambda} + \frac{H_{\tau,k-1}}{\lambda}, & \textrm{ w.p } p \\
		\frac{2H_{n,\tau}}{\lambda} + \frac{1}{\mu} + \frac{H_{\tau,k-1}}{\lambda} & \textrm{ w.p. } (1-p)p\\
		\frac{3H_{n,\tau}}{\lambda} + \frac{2}{\mu} +  \frac{H_{\tau,k-1}}{\lambda} & \textrm{ w.p. } (1-p)^2p \\
		\dots & \dots \\
		\frac{iH_{n,\tau}}{\lambda} + \frac{i-1}{\mu} +  \frac{H_{\tau,k-1}}{\lambda} & \textrm{ w.p. } (1-p)^{i-1}p
	\end{cases}
\end{equation}		
from which the expected time to data loss, $E[\textrm{TDL}]$, can be calculated.	
	 
\end{appendices}

\end{document}